\newcolumntype{L}[1]{>{\raggedright\let\newline\\\arraybackslash\hspace{0pt}}m{#1}}
\newcolumntype{C}[1]{>{\centering\let\newline\\\arraybackslash\hspace{0pt}}m{#1}}
\newlength{\scaledx}
\newlength{\scaledy}
\newcommand\SetScales{%
  \pgfpointxy{1}{1}
  \setlength{\scaledx}{\pgf@x};
  \setlength{\scaledy}{\pgf@y};
}
\def\pawn{\raisebox{-0.25em}{\includegraphics[height=1.1em]{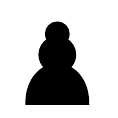}}}
\def\knight{\raisebox{-0.25em}{\includegraphics[height=1.1em]{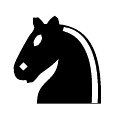}}}
\def\bishop{\raisebox{-0.25em}{\includegraphics[height=1.1em]{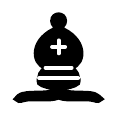}}}
\def\rook{\raisebox{-0.25em}{\includegraphics[height=1.1em]{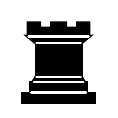}}}
\def\queen{\raisebox{-0.25em}{\includegraphics[height=1.1em]{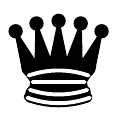}}}
\def\king{\raisebox{-0.25em}{\includegraphics[height=1.1em]{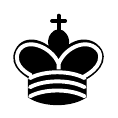}}}
\def\oneking{\king{}^1}
\def\onepawn{\pawn{}^1}
\def\oneknight{\knight{}^1}
\def\onebishop{\bishop{}^1}
\def\onerook{\rook{}^1}
\def\onequeen{\queen{}^1}
\let\realbibitem=\bibitem
\def\bibitem{\par \vspace{-1.2ex}\realbibitem}
\let\epsilon=\varepsilon
\def\defn#1{\textbf{\textit{\boldmath #1}}}
\newcommand{\solochessn}[1]{\textsc{Solo-Chess}$(#1)$}
\newcommand{\BreakText}{Split}
\newcommand{\BreakFunc}{\operatorname{\BreakText}}
\newcommand{\Break}[2]{{\BreakFunc({#1}, {#2})}}
\newcommand{\ICG}{\operatorname{ICG}}
\theoremstyle{plain}
\newtheorem{theorem}{Theorem}[section]
\newtheorem{lemma}[theorem]{Lemma}
\newtheorem{corollary}[theorem]{Corollary}
\theoremstyle{definition}
\theoremstyle{remark}
\numberwithin{equation}{section}
\title{Complexity of Solo Chess with Unlimited Moves}
\author{%
  Josh Brunner%
    \thanks{Computer Science and Artificial Intelligence Laboratory,
      Massachusetts Institute of Technology, Cambridge, MA 02139, USA,
      \protect\url{{brunnerj,lkdc,mcoulomb,edemaine,dylanhen,tagomez7,jaysonl}@mit.edu}}
\and
  Lily Chung\footnotemark[1]
\and
  Michael Coulombe\footnotemark[1]
\and
  Erik D. Demaine\footnotemark[1]
\and
  Timothy Gomez\footnotemark[1]
\and
  Jayson Lynch\footnotemark[1]
}
\date{}
\begin{document} 
\maketitle

\begin{abstract}
  We analyze Solo Chess puzzles, where the input is an $n \times n$ board
  containing some standard Chess pieces of the same color, and the goal
  is to make a sequence of capture moves to reduce down to a single piece.
  Prior work analyzes this puzzle for a single piece type
  when each piece is limited to make at most two capture moves
  (as in the Solo Chess puzzles on chess.com).
  By contrast, we study when each piece can make an unlimited number of
  capture moves.
  We show that any single piece type can be solved in polynomial time
  in a general model of piece types,
  while any two standard Chess piece types are NP-complete.
  We also analyze the restriction (as on chess.com) that one piece type is
  unique and must be the last surviving piece, showing that in this case
  some pairs of piece types become tractable while others remain hard.
\end{abstract}

\section{Introduction}\label{Sec: Intro}
\label{sec:intro}

The classic two-player game of Chess is PSPACE-complete~\cite{storer1983complexity} or EXPTIME-complete~\cite{fraenkel1981computing} depending on whether the number of moves is limited to a polynomial.
Recent work analyzes Chess-based puzzles, including Helpmate Chess and Retrograde Chess which are PSPACE-complete \cite{brunner2020complexity} and Solo Chess which is NP-complete \cite{aravind2022chess}.
In this paper, we extend the analysis of Solo Chess to unbounded moves per piece.

First we review standard Solo Chess as implemented on chess.com
\cite{soloChess}.
All pieces are of the same color and may capture any piece except a king.
Every move must be a capture.
The objective is to find a sequence of moves (captures)
that results in only one piece remaining on the board.
If there is a king on the board, it must be last remaining piece.
Further, each piece can make a maximum of $k=2$ moves.
Past work \cite{aravind2022chess} generalizes Solo Chess to an arbitrary
limit $k$ on the number of moves per piece (and arbitrary board size),
denoting this game by (\textsc{Generalized}) \solochessn{S,k}
where $S$ is the set of allowed piece types.
They proved that \solochessn{\{\pawn\},2} and \solochessn{S,1}
can be solved in linear time,
while \solochessn{\{\bishop\},2}, \solochessn{\{\rook\},2},
and \solochessn{\{\queen\},2} are NP-complete.

This paper analyzes the complexity of Solo Chess puzzles
without the restriction on the number of moves per piece, or equivalently
when the move limit per piece is larger than the number of pieces.
We denote this game by \solochessn{S},
which is equivalent to \solochessn{S,\infty}.
We also consider the game both with and without the restriction of a single
uncapturable king, using $\oneking$ to denote the game with this restriction
and $\king$ to denote the more general case that permits
multiple capturable kings.
We also extend this notion to $T^1$ for any piece type~$T$,
meaning that there is one \defn{uncapturable} piece of type $T$
(so it must be the last piece standing).

\paragraph{Our results.}
We prove that, for any \emph{single} standard Chess piece type ($|S| = 1$),
\solochessn{S} can be solved in polynomial time.
In fact, this result holds for a very general model of piece type
defined in Section~\ref{Generalized Chess Model}.
For any \emph{two} distinct
standard Chess piece types ($|S| = 2$), neither of which are uncapturable,
we prove that \solochessn{S} is NP-complete,
by a variety of reductions.
(All problems considered here are trivially in NP.)

For the single uncapturable king $\oneking$, we prove that
the pair $S = \{\oneking,\queen\}$ can in fact be solved in polynomial time,
essentially because king moves are a subset of queen moves.
On the other hand,
$S = \{\oneking,\pawn\}, \{\oneking,\knight\}, \{\oneking,\bishop\}$
are all NP-complete.
We also give polynomial-time algorithms and NP-completeness results for
several other pairs of the form $\{S^1,T\}$ where $S^1$ is uncapturable;
see Table~\ref{tab:results} for our results restricted to standard Chess pieces.

\definecolor{header}{rgb}{0.835,0.68,0.95} 
\definecolor{gray}{rgb}{0.85,0.85,0.85}
\def\header#1{\cellcolor{header}\textbf{#1}}

\begin{table}
    \centering
    \small
    \tabcolsep=0.3\tabcolsep
    \rowcolors{2}{gray!75}{white}
    \begin{tabular}{|c|c|c|c|c|c|c|}
        \hline
        \rowcolor{header}
        ~ & \header{$\pawn$} & \header{$\king$} & \header{$\knight$} & \header{$\rook$} & \header{$\bishop$} & \header{$\queen$} \\ \hline
        \header{$\onepawn$} & P, Thm.~\ref{thm:one} & P,~\ref{thm:two} & NP-c, Thm.~\ref{thm:king-knight} & OPEN & P, Thm.~\ref{thm:two} &  P, Thm.~\ref{thm:two}  \\ 
        \header{$\oneking$} & NP-c, Thm.~\ref{thm:king-pawn} & P, Thm.~\ref{thm:one} & NP-c, Thm.~\ref{thm:king-knight} & OPEN & NP-c, Thm.~\ref{thm:king-bishop} & P, Thm.~\ref{thm:two}  \\ 
        \header{$\oneknight$} & NP-c, Thm.~\ref{thm:short range} & NP-c, Thm.~\ref{thm:short range} & P, Thm.~\ref{thm:one} & NP-c, Thm.~\ref{thm:knight-queen} & NP-c, Thm.~\ref{thm:knight-queen} & NP-c, Thm.~\ref{thm:knight-queen} \\ 
        \header{$\onerook$} & NP-c, Thm.~\ref{thm:oneRook} & NP-c, Cor.~\ref{cor:oneRookKnightKing} & NP-c, Cor.~\ref{cor:oneRookKnightKing} & P, Thm.~\ref{thm:one} & NP-c, Cor.~\ref{cor:oneRookBishop} & P, Thm.~\ref{thm:two} \\ 
        \header{$\onebishop$} & NP-c, Cor.~\ref{cor:oneBishop} & NP-c, Cor.~\ref{cor:oneBishop}  & NP-c, Cor.~\ref{cor:oneBishop}  & NP-c, Cor.~\ref{cor:oneBishopRook} & P, Thm.~\ref{thm:one} & P, Thm.~\ref{thm:two} \\ 
        \header{$\onequeen$} & NP-c, Cor.~\ref{cor:oneQueen} & NP-c, Cor.~\ref{cor:oneQueen} & NP-c, Cor.~\ref{cor:oneQueen} & NP-c, Cor.~\ref{cor:oneQueenRook} & NP-c, Cor.~\ref{cor:oneQueenBishop} & P, Thm.~\ref{thm:one} \\ \hline
    \end{tabular}
    \caption{Summary of our results for one Chess piece type, on the diagonal;
      and for two Chess piece types, one of which (the row label on the left)
      is constrained to be uncapturable and the last piece.}
    \label{tab:results}
\end{table}

This paper is divided by algorithmic vs.\ hardness results.
Section~\ref{Algorithmic Results} describes our algorithmic results for
a single piece type (Section~\ref{sec:onepieceeasy})
and for certain pairs of capturable and uncapturable piece types
(Section~\ref{sec:pairs-easy}),
which both apply to a very general model of piece type defined in
Section~\ref{Generalized Chess Model}.
Section~\ref{Hardness Results} describes our hardness results,
which fall into two main categories: reductions from Hamiltonian Path
(Section~\ref{Hamiltonian Path Reductions})
and reductions from SAT (Section~\ref{SAT Reductions}).
Section~\ref{Open Problems} concludes with some open problems.

\section{Algorithmic Results}
\label{Algorithmic Results}

In this section, we present polynomial-time algorithms for several cases of
Solo Chess.
First, in Section~\ref{Generalized Chess Model}, we define a generalized
abstract notion of moves in which pieces must always capture another piece.
Then, in Section~\ref{sec:onepieceeasy}, we show that any single piece type
in this general game (which covers all normal Chess pieces) can be solved
in polynomial time.
Finally, in Section~\ref{sec:pairs-easy}, we consider two piece types,
one of which consists of a single uncapturable piece, and show that this
problem can also be solved in polynomial time in many cases.

\subsection{Generalized Chess Model}
\label{Generalized Chess Model}

Our algorithmic results apply to a very general model of pieces moving on a
board, which includes all standard piece types from Chess,
as well as many other Fairy Chess pieces such as riders and leapers.

Define a \defn{board} to be a set \(L\) of \defn{locations}
together with a set of \defn{pieces} each assigned a unique location.
Thus, no two pieces can occupy the same location,
but some locations may be \defn{empty}.

Define a \defn{move} to be a sequence
$\langle \ell_0, \ell_1, \dots, \ell_k \rangle$ of locations.
This move is \defn{valid} (in the sense of a capture)
if $\ell_0$ and $\ell_k$ each have a piece while
$\ell_1, \dots, \ell_{k-1}$ are empty.
Executing such a valid move removes the piece at location~$\ell_k$,
and moves the piece at location $\ell_0$ to location~$\ell_k$.

Define a \defn{piece type} to be a set of moves that that type of piece
can make (when valid).  Effectively, a piece type lists, for every possible
starting location, which locations the piece can move to given that
certain other intermediate locations are empty.

For example, the Chess piece type $\rook$ is defined by the moves
\begin{align*}
\Big\{
  \big\langle (x,y), (x+1,y), \dots, (x+i,y) \big\rangle,~ &
  \big\langle (x,y), (x,y+1), \dots, (x,y+i) \big\rangle, \\
  \big\langle (x,y), (x-1,y), \dots, (x-i,y) \big\rangle,~ &
  \big\langle (x,y), (x,y-1), \dots, (x,y-i) \big\rangle ~ \Big| ~ i > 0 \,
\Big\},
\end{align*}
while the Chess piece type $\knight$ is defined by the moves
\begin{align*}
\Big\{
  \big\langle (x,y), (x + s, y + 2 t) \big\rangle,~ &
  \big\langle (x,y), (x + 2 s, y + t) \big\rangle ~ \Big| ~ s,t \in \{-1,+1\}
\Big\},
\end{align*}
where the board is either infinite or we restrict to moves whose
locations are all on the board.
Note how the $\rook$ example implements the blocking nature of rook moves ---
all locations along the way must be empty --- while the $\knight$ example
has no such blocking.
In general, if all moves are sequences of two locations,
then the piece type is \defn{nonblocking}.

We require every piece type $T$ to be \defn{closed under submoves} meaning that,
if $\langle \ell_0, \ell_1, \dots, \ell_k \rangle$ is a move in~$T$,
then $\langle \ell_i, \ell_{i+1}, \dots, \ell_j \rangle$ is a move in $T$
for any integers $0 \leq i < j \leq k$.
(This restriction is automatically satisfied by all nonblocking piece types.)

The definition of piece type supports pieces that move asymmetrically,
such as the Chess piece type~$\pawn$:
\begin{align*}
\Big\{
  \big\langle (x,y), (x + s, y + 1) \big\rangle ~ \Big| ~ s \in \{-1,+1\}
\Big\}.
\end{align*}
If a piece type is closed under reversal of the sequences representing moves,
we call it \defn{symmetric}.

Examples of Chess-like piece types not represented by this model include
the following:
\begin{enumerate}
\item The horses and elephants from Xiangqi (Chinese chess) and Janggi (Korean chess),
whose movement can be blocked by a piece they cannot capture
(so their moves are not closed under submoves);
\item The cannons from Xiangqi and Janggi,
which require a piece to jump over before performing a capture; and
\item Checkers, which land in a space past the piece captured.
\end{enumerate}

In Solo Chess, any location that is initially empty will remain empty forever,
so such locations can be omitted from the board.
More precisely, if \(\ell_i\) is initially empty,
then we can replace any move
\(\langle \ell_0, \dots, \ell_i, \dots, \ell_k \rangle\)
with
\(\langle \ell_0, \dots, \ell_{i-1}, \ell_{i+1} \dots, \ell_k \rangle\);
and any move \(\langle \ell_0, \dots, \ell_i, \dots, \ell_k \rangle\)
for which \(\ell_0\) or \(\ell_k\) is initially empty
can be deleted entirely.
These changes do not change the outcome of the Solo Chess puzzle,
so we assume henceforth that all locations are initially occupied.

\medskip
Now we define a few useful structures and prove some useful facts
about general Solo Chess puzzles and their solutions.

Define the \defn{\BreakText} operation to split a given move into a sequence
of submoves that capture any pieces along the way.  Precisely,
suppose we have a move \(m = \langle \ell_0, \ell_1, \dots, \ell_k \rangle\)
and a set \(L\) of locations including both $\ell_0$ and $\ell_k$.
Let \(\ell_{j_0}, \ell_{j_1}, \ldots, \ell_{j_{k'}}\)
be the subsequence of \(\ell_0, \ell_1, \dots, \ell_k\)
obtained by intersecting with \(L\).
Define \(\Break{m}{L}\) to be the sequence of submoves
\(\langle \ell_{j_r}, \ell_{(j_r) + 1}, \ldots,
\ell_{j_{(r+1)}} \rangle\)
for \(r = 0, 1, \dots, k'-1\) in increasing order.
If \(m\) is a move for piece type $T$ closed under submoves,
then \(\Break{m}{L}\) is a sequence of moves for piece type~$T$.
If furthermore $L$ contains all locations in $m$ currently occupied by pieces,
then $\Break{m}{L}$ is a sequence of \emph{valid} moves.

\begin{lemma}
  \label{lem:solution-no-cycle}
  Let $r$ be the location of the final piece in a valid solution.
  \begin{enumerate}
  \item For every location $v$ other than $r$, there is exactly one move out of $v$ in the solution.
  \item 
    Let $v_0, v_1, \dots$ be the sequence of locations obtained by repeatedly following the unique move out of $v_i$ in the solution, starting at $v_0$.
    Then the sequence is finite and terminates at~$r$.
  \end{enumerate}
\end{lemma}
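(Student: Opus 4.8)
The plan is to reduce both parts to one structural invariant of Solo Chess play, which I will call \emph{monotone emptiness}: once a location is empty it stays empty for the remainder of the solution. To prove this, fix a valid solution and timestamp its moves $1, 2, \dots$ in the order executed. A location $u$ can change from empty to occupied only by serving as the endpoint $\ell_k$ of some executed move, since depositing a piece at $u$ is the only event that could refill it. But a valid move requires $\ell_k$ to be occupied at execution time — there must be a piece there to capture — so no move ever fills a previously empty location, giving monotone emptiness. I expect this step, though short, to be the conceptual crux: the bookkeeping is a little subtle because the piece occupying a given location may be replaced many times (each time $u$ is a capture target), so it is only the \emph{emptiness} of $u$, not the identity of its occupant, that is monotone, and this weaker invariant is exactly what the rest of the argument needs.

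Granting monotone emptiness, each location $u$ is occupied on an initial interval of time and empty thereafter; let $t(u)$ be the first time at which $u$ is empty, with $t(u) = \infty$ if $u$ is never empty. A location becomes empty precisely by being the origin $\ell_0$ of a move (intermediate and endpoint roles never empty a square), so the transition at time $t(u)$ is caused by a move out of $u$, and no later move can have origin $u$, since then $u$ would be empty yet a mover must start from an occupied square. This already yields ``at most one move out of each location.'' For part~(1) I then use that at the end of the solution only $r$ is occupied: every $v \neq r$ is eventually empty, hence is the origin of exactly one move, while $r$ is occupied throughout and is therefore the origin of none.

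For part~(2), let $f(v)$ denote the endpoint of the unique move out of $v$, so the sequence in the statement is $v_{i+1} = f(v_i)$. The key inequality is $t(v) < t(f(v))$: the move out of $v$, executed at time $t(v)$, captures a piece at $f(v)$, so $f(v)$ must still be occupied at that moment, which by monotone emptiness forces $t(v) < t(f(v))$. Thus timestamps strictly increase along $v_0, v_1, \dots$. A strictly increasing sequence of timestamps cannot revisit a location, so the sequence has no cycle; and since $L$ is finite it cannot be infinite, so it must terminate. It can only terminate at a location with no outgoing move, and by part~(1) the unique such location is $r$. This completes the plan; the only remaining care is to phrase $t(\cdot)$ and the strict inequalities consistently, using that distinct moves occupy distinct timestamps, which is routine once monotone emptiness is in hand.
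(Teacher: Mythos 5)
Your proof is correct, and it takes a recognizably different route from the paper's, even though both rest on the same underlying physics of the game (a capture empties exactly the origin square, and the target must be occupied). The paper proves each part by a separate contradiction: for part~(1) it argues directly that a move out of $v$ must exist and cannot repeat, and for part~(2) it supposes the orbit is infinite, extracts a cycle among locations other than $r$, and asserts that since no move leaves the cycle, some location in it stays occupied forever --- a claim it leaves unjustified (the justification is essentially a last-move-in-the-cycle argument). You instead extract one explicit invariant, monotone emptiness, define the first-empty timestamp $t(\cdot)$, and prove the strict inequality $t(v) < t(f(v))$: the orbit then has strictly increasing timestamps, so it is automatically simple and, by finiteness of $L$, terminates, with termination only possible at $r$ since $r$ is never a move's origin. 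This buys two things: a direct (non-contradiction) proof of part~(2), and a rigorous filling-in of exactly the step the paper's cycle argument glosses over --- your inequality $t(v) < t(f(v))$ applied around a putative cycle is what makes ``some location in the cycle remains occupied forever'' precise. The paper's version is shorter; yours is more self-contained and the potential function $t$ is reusable (it is in effect the same bookkeeping the paper later redoes informally in Lemma~\ref{lem:soln-strongly-capturable} when ordering villain moves relative to hero moves). Two small points to keep straight in a final write-up: your part~(1) implicitly uses the paper's standing assumption that all locations are initially occupied (otherwise an initially empty $v \neq r$ would have no move out), so cite it; and note that your claim that $r$ is the origin of no move is slightly stronger than the lemma's literal statement but is needed for, and correctly used in, your termination step.
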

\begin{proof}
  We prove each property separately.
  \begin{enumerate}
  \item If there is no such move, then $v$ would remain occupied forever,
    contradicting that the solution is valid.
    After the first such move, $v$ is forever empty,
    so there cannot be a second move out of $v$.
  \item 
    Suppose not.
    Because the number of locations is finite,
    the sequence must enter a cycle among locations other than~$r$.
    But then there is no move out of this cycle in the solution,
    so at least one of the locations in the cycle must remain occupied forever,
    which contradicts the validity of the solution.
    \qedhere
  \end{enumerate}
\end{proof}

\subsection{One Piece Type is Easy}
\label{sec:onepieceeasy}

In this section, we prove that Solo Chess puzzles with any single piece type
is solvable in polynomial time, even for the very general piece types
defined in Section~\ref{Generalized Chess Model}.

\begin{theorem} \label{thm:one}
  \solochessn{\{T\}} can be solved in polynomial time
  for any single piece type $T$ closed under submoves.
\end{theorem}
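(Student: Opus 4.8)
The plan is to reduce the puzzle to a pure reachability question on a directed graph built from the single-step moves of $T$, and then solve that question by trying each possible surviving location. Define the \emph{step graph} $H$ on the set of (occupied) locations, placing a directed arc $u \to w$ exactly when $\langle u, w \rangle$ is a two-location move of $T$. Because $T$ is closed under submoves, every move $\langle \ell_0, \ell_1, \dots, \ell_k \rangle \in T$ yields the arcs $\ell_0 \to \ell_1 \to \dots \to \ell_k$ in $H$, so the endpoints of \emph{any} move of $T$ are joined by a directed path in~$H$. Note that $H$ has at most $|L|^2$ arcs and, for every concrete piece type, can be constructed in polynomial time by testing each ordered pair of locations for membership in~$T$. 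I claim that \solochessn{\{T\}} is solvable if and only if there is a location $r$ reachable in $H$ from every other location; this is testable in polynomial time by fixing each candidate $r$ in turn and running a single reverse-graph search to check that all locations reach~$r$.

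For the \emph{necessity} direction I would invoke Lemma~\ref{lem:solution-no-cycle}. In any valid solution with final piece at $r$, every other location $v$ has a unique outgoing move, and iterating these moves from $v$ terminates at~$r$. Each such move lies in $T$, hence decomposes into a directed $H$-path by submove closure; concatenating these paths along the chain $v, d(v), d(d(v)), \dots, r$ exhibits a directed $H$-path from $v$ to~$r$. Thus $r$ is reachable from every location, as required.

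For the \emph{sufficiency} direction, suppose some $r$ is reachable from all locations. I would take a spanning in-arborescence $A$ of $H$ directed toward $r$ (for instance, a search tree of the reverse graph rooted at~$r$), so that each non-root location $v$ has a parent $p(v)$ with arc $v \to p(v)$ in $H$, i.e. $\langle v, p(v) \rangle$ is a move of~$T$. Processing the vertices of $A$ from the leaves toward the root, I would capture along $\langle v, p(v) \rangle$: since this is a two-location move it is valid whenever both endpoints are occupied, and the leaves-first order guarantees that $p(v)$ is still occupied when $v$ is eliminated (a capture onto $p(v)$ replaces its piece rather than emptying it, and $p(v)$ itself is processed only later). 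Each capture removes exactly one piece, so after $|L| - 1$ captures only the piece at $r$ remains, yielding a valid solution.

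Because this construction uses only single-step captures, the \BreakText{} operation and longer moves are never actually needed in the sufficiency direction; they matter only in the necessity direction, which is where I expect the main difficulty to lie. The subtle point is that a long move sliding over \emph{empty} squares might appear to let a piece ``jump'' outside the step graph, seemingly enlarging the set of solvable instances; the submove-closure argument is precisely what rules this out, by showing that the two endpoints of every move are connected by a directed path in $H$ regardless of which intermediate squares are empty, so no capture can ever escape the reachability structure of~$H$.
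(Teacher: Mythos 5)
Your proposal is correct and takes essentially the same approach as the paper: your step graph $H$ is exactly the paper's immediate capture graph (under the standing assumption that all locations are initially occupied), your leaves-first capture order along a reverse-search in-arborescence matches the paper's repeated leaf-deletion argument, and your necessity direction---Lemma~\ref{lem:solution-no-cycle} plus the submove-closure decomposition of each solution move into consecutive arcs---is precisely the paper's application of $\BreakFunc$ with $L$ the set of all locations. The only cosmetic difference is that you verify the all-vertices-reach-$r$ condition by direct reverse searches from each candidate root rather than invoking arborescence algorithms, an equivalence the paper itself notes.
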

\begin{proof}
  Define the \defn{immediate capture graph} $\ICG$
  to be the directed graph with a vertex $\ell$ for each location $\ell$,
  and a directed edge $(\ell_0, \ell_k)$
  for every move $\langle \ell_0, \ell_1, \dots, \ell_k \rangle$
  that is valid in the initial board (without being blocked by other pieces).
  By the assumption above that all locations are occupied in the initial board,
  such moves consist of just $k=2$ locations.
  This graph can be computed in polynomial time.

  We claim that the instance is solvable if and only if
  the immediate capture graph $\ICG$
  has a \defn{spanning in-arborescence},
  i.e., a set of edges such that every vertex
  has a unique path to a common root~$r$.%
  \footnote{An \defn{(out-)arborescence} is usually defined in the
    reverse way, with a unique path from the root $r$ to every vertex.
    For this proof, we need the flipped \emph{in} version.
    Existing algorithms for out-arborescence can be applied to in-arborescence
    by reversing all edges in the graph.}
  (For symmetric piece types, the immediate capture graph is undirected,
  so it suffices to find a spanning tree and root it.)
  This property can be checked in polynomial time via Edmonds' 1967 Algorithm
  \cite{edmonds-arborescence},
  or its $O(|E| + |V| \log |V|)$ optimization \cite{fast-arborescence};
  or in $O(|E| + |V|)$ time using a modified depth-first search
  \cite[Exercise 6.10]{arborescences}.

  If a spanning in-arborescence exists, we can solve the instance as follows.
  Every in-arborescence with more than one vertex has a \defn{leaf} vertex,
  i.e., a vertex with no incoming edges and one outgoing edge.
  Repeatedly find a leaf and make the corresponding move from the
  leaf to its \defn{parent} (the vertex reached via the one outgoing edge),
  deleting the leaf.
  Every move is valid because it corresponds to an edge in the
  immediate capture graph and, by construction,
  the two relevant pieces have not yet been captured.
  Because the in-arborescence is spanning, we reduce to a single piece
  at location $r$ in the end.

  Conversely, suppose that the instance is solvable via a sequence of moves.
  Let $r$ be the location of the final piece in the solution.
  We will show that $\ICG$ admits a spanning in-arborescence rooted at~$r$.
  It suffices to show that there is a directed walk in $\ICG$
  from each location to~$r$ \cite[Theorem 2.5(d)]{arborescences}.

  Let $v_0$ be any location and let $v_0, v_1, \dots, r$ be the sequence of locations from Lemma~\ref{lem:solution-no-cycle}.
  For each pair $(v_i, v_{i+1})$ of locations, there is a move \(m\) from \(v_i\) to \(v_{i+1}\) in the solution.
  Then \(\Break{m}{L}\) (where \(L\) is the set of all locations)
  is a sequence of two-location moves,
  each of which by definition corresponds to an edge of~\(\ICG\).
  Hence we obtain a walk from $v_i$ to $v_{i+1}$ in~$\ICG$.
  By concatenating these walks, we obtain a walk from $v_0$ to $r$ in~$\ICG$.
  Therefore $\ICG$ admits a spanning in-arborescence.
\end{proof}

\subsection{One Hero and Villains are Easy}
\label{sec:pairs-easy}

In this section, we present an algorithm for solving certain Solo Chess puzzles
consisting of $n$ copies of one piece type,
and a single copy of a second piece type
which is required to be the final piece on the board
(and is thus uncapturable).
This is a generalization of the one-king restriction from Solo Chess
which says that, if there is a king in the initial puzzle,
it must be the final piece on the board in a valid solution.

We call the single uncapturable piece the \defn{hero},
and the pieces that we have $n$ copies \defn{villains}.
Let $S$ be the hero piece type, and $T$ be the villain piece type.
We require that $S \subseteq T$,
that is, every move for a hero is also a move for a villain.
We also require that $T$ is symmetric.
Two useful cases to think about are when the hero is a Chess king or pawn,
and the villain is a Chess queen.
Among Chess pieces, $\{S^1, T\}$ includes
$\{\oneking,\queen\}$, $\{\onepawn,\queen\}$, $\{\onepawn,\bishop\}$,
$\{\onerook,\queen\}$, and $\{\onebishop,\queen\}$.

The basic idea of the algorithm is to use villains capturing villains
to collapse the pieces down onto a path which the hero can traverse,
thus capturing every villain.
The main difficulty with this approach is \emph{blocking}:
the hero itself prevents villains from moving through it,
which may require moves to be delayed until after the hero is out of the way.

\subsubsection{Walks in Immediate Capture Graphs}

Let \(V\) and \(B\) be disjoint sets of locations.
Define the \defn{immediate capture graph \(\ICG(V, B)\)} to be
the directed graph whose vertex set is \(V\),
and which has an edge \((\ell_0, \ell_m)\) whenever
\(\langle \ell_0, \ell_1, \ldots, \ell_m \rangle \in T\)
is a villain move such that none of \(\ell_1, \ell_2, \ldots, \ell_{m-1}\) are in \(V \cup B\).
(Note that \(\ell_1, \ell_2, \ldots, \ell_{m-1}\) may not be in \(V\).)
This is the graph of valid villain moves when
villains are placed on the locations in \(V\)
and immobile uncapturable blocking pieces are placed on the locations in~\(B\).

We now prove some useful facts ensuring the existence of (directed) walks
in \(\ICG(V, B)\).

\begin{lemma}
  \label{lem:break-icg}
  Let \(V\) and \(B\) be disjoint sets of locations.
  Suppose \(\langle \ell_0, \ell_1, \ldots, \ell_m \rangle \in T\)
  is a villain move such that
  \(\ell_0,\ell_m \in V\) and
  \(\{\ell_0, \ell_1, \dots, \ell_m\} \cap B = \emptyset\).
  Then there is a directed walk in \(\ICG(V, B)\) from \(\ell_0\) to \(\ell_m\).
\end{lemma}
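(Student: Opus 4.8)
The plan is to reduce the given move to a concatenation of edges of $\ICG(V,B)$ via a single application of the \BreakText{} operation from Section~\ref{Generalized Chess Model}, taking the splitting set to be $V$ itself. Since $\ell_0, \ell_m \in V$, the set $V$ contains both endpoints of $m = \langle \ell_0, \ell_1, \ldots, \ell_m \rangle$, so $\Break{m}{V}$ is well defined. Because $T$ is closed under submoves, the output $\Break{m}{V}$ is a sequence of villain moves in $T$, each running from one $V$-location of $m$ to the next consecutive $V$-location of $m$.

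Next I would check that each such submove is in fact an edge of $\ICG(V,B)$. By the construction of \BreakText{}, a submove has the form $\langle \ell_{j_r}, \ell_{(j_r)+1}, \ldots, \ell_{j_{r+1}} \rangle$, where $\ell_{j_r}, \ell_{j_{r+1}} \in V$ and every intermediate location $\ell_{(j_r)+1}, \ldots, \ell_{j_{r+1}-1}$ lies outside $V$ (that is precisely where the splitting cuts). These same intermediate locations also lie outside $B$, since by hypothesis $\{\ell_0, \ell_1, \dots, \ell_m\} \cap B = \emptyset$, so no location of $m$ at all is in $B$. Thus every intermediate location avoids $V \cup B$, which is exactly the defining condition for $(\ell_{j_r}, \ell_{j_{r+1}})$ to be an edge of $\ICG(V,B)$.

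Finally, concatenating these edges in the order produced by \BreakText{} yields a directed walk in $\ICG(V,B)$ from $\ell_0 = \ell_{j_0}$ to $\ell_m = \ell_{j_{k'}}$, as claimed. I do not anticipate a genuine obstacle: the only point needing care is verifying that the interior of each submove avoids $V \cup B$, and this splits cleanly into two halves --- avoidance of $V$ is automatic from how \BreakText{} cuts $m$ at every $V$-location, while avoidance of $B$ is inherited directly from the hypothesis that $m$ avoids $B$ entirely. Everything else is routine bookkeeping in assembling the submoves into a single walk.
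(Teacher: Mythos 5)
Your proof is correct and matches the paper's own argument essentially verbatim: both apply $\Break{m}{V}$, observe that each resulting submove has interior locations outside $V$ (by construction of \BreakFunc) and outside $B$ (by the hypothesis that $m$ avoids $B$), conclude each submove is an edge of $\ICG(V,B)$, and concatenate. No differences worth noting.
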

\begin{proof}
  Because \(\ell_0,\ell_m \in V\),
  \(\ell_0\) and \(\ell_m\) are both vertices of~\(\ICG(V, B)\).
  Consider the sequence of villain moves \(\Break{\langle \ell_0, \ell_1, \ldots, \ell_m \rangle}{V}\).
  By definition of $\BreakFunc$,
  each such move \(\langle \ell_s, \ell_{s+1}, \ldots \ell_{s+r} \rangle\)
  has the property that
  \(\{\ell_{s+1}, \ell_{s+2}, \ldots, \ell_{s+r-1}\} \cap (V \cup B) = \emptyset\)
  while
  \(\ell_s, \ell_{s+r} \in V\).
  Thus \((\ell_s, \ell_{s+r})\) is an edge of~\(\ICG(V, B)\).
  By concatenating these edges together,
  we obtain a directed walk in \(\ICG(V, B)\) from \(\ell_0\) to \(\ell_m\).
\end{proof}
\begin{corollary}
  \label{cor:break-icg-walk}  
  Let \(V\) and \(B\) be disjoint sets of locations,
  and let \(m_0, m_1, \ldots, m_k \in T\) be a sequence of villain moves
  such that
  \begin{enumerate}
  \item the destination of \(m_i\) is the source of \(m_{i+1}\);
  \item each move's source and destination is in \(V\); and
  \item no move passes through a location in \(B\).
  \end{enumerate}
  Then there is a directed walk in \(\ICG(V, B)\) from the source of \(m_0\) to the destination of \(m_k\).
\end{corollary}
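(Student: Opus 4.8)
The plan is to reduce the corollary to a repeated application of Lemma~\ref{lem:break-icg}, followed by a concatenation of the resulting walks. First I would observe that each individual move $m_i$ already satisfies the hypotheses of Lemma~\ref{lem:break-icg} when considered in isolation: condition~2 guarantees that the source and destination of $m_i$ lie in $V$, and condition~3 guarantees that $m_i$ passes through no location in $B$, so the endpoints together with all intermediate locations of $m_i$ are disjoint from $B$. Hence the lemma applies to $m_i$ and produces a directed walk $W_i$ in $\ICG(V, B)$ from the source of $m_i$ to its destination.

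Next I would stitch these walks together. By condition~1, the destination of $m_i$ equals the source of $m_{i+1}$, so the terminal vertex of $W_i$ coincides with the initial vertex of $W_{i+1}$ for every $i$ with $0 \le i < k$. Concatenating $W_0, W_1, \ldots, W_k$ in order therefore yields a single directed walk in $\ICG(V, B)$ whose initial vertex is the source of $m_0$ and whose terminal vertex is the destination of $m_k$, which is exactly what the corollary asserts.

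There is essentially no obstacle here: the corollary is a direct iteration of the lemma. The only points requiring any care are to confirm that condition~3, which is phrased for the whole sequence, does translate into the per-move disjointness $\{\ell_0, \ldots, \ell_m\} \cap B = \emptyset$ that Lemma~\ref{lem:break-icg} requires for each $m_i$ separately, and that the endpoint matching in condition~1 makes the concatenation well-defined. Both are immediate from the statement, so I expect the proof to be a short two- or three-sentence argument.
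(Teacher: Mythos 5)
Your proposal is correct and matches the paper's proof exactly: both apply Lemma~\ref{lem:break-icg} to each move $m_i$ individually and then concatenate the resulting walks using the endpoint matching from condition~1. Your extra care in checking that condition~3 (together with the disjointness of $V$ and $B$ for the endpoints) yields the per-move hypothesis of the lemma is a detail the paper leaves implicit, but the argument is the same.
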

\begin{proof}
  Each move satisfies the conditions of Lemma~\ref{lem:break-icg},
  and so there is a directed walk in \(\ICG(V, B)\)
  from the source to the destination of each move.
  Concatenating these walks together gives the desired walk.
\end{proof}
\begin{corollary}
  \label{cor:icg-walk-subset}
  Let \(V\) and \(B\) be disjoint sets,
  and let \(V'\) and \(B'\) be disjoint sets
  such that \(V \subseteq V'\) and \(B' \subseteq B\).
  (Note the asymmetry.)
  Suppose there is a directed walk from \(\ell_0\) to \(\ell_1\) in \(\ICG(V, B)\).
  Then there is also a directed walk from \(\ell_0\) to \(\ell_1\) in \(\ICG(V', B')\).
\end{corollary}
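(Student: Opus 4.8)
The plan is to reduce to a single edge and then concatenate. A directed walk in $\ICG(V, B)$ from $\ell_0$ to $\ell_1$ is a sequence of edges, so it suffices to show that each individual edge of $\ICG(V, B)$ induces a directed walk in $\ICG(V', B')$ between the same two endpoints; stringing these walks together then produces the desired walk from $\ell_0$ to $\ell_1$ in $\ICG(V', B')$.

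So fix an edge $(a, b)$ of $\ICG(V, B)$. By definition it arises from a villain move $\langle q_0, q_1, \ldots, q_m \rangle \in T$ with $q_0 = a$, $q_m = b$, and $\{q_1, \ldots, q_{m-1}\} \cap (V \cup B) = \emptyset$. I will feed this same move into Lemma~\ref{lem:break-icg}, now applied with the pair $V', B'$ in place of $V, B$. Three hypotheses must be checked: that $V'$ and $B'$ are disjoint, that $q_0, q_m \in V'$, and that $\{q_0, q_1, \ldots, q_m\} \cap B' = \emptyset$.

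Disjointness of $V'$ and $B'$ is given, and $q_0, q_m \in V \subseteq V'$ handles the second. For the third I split the move into interior and endpoints: the interior locations $q_1, \ldots, q_{m-1}$ avoid $V \cup B$ and so in particular avoid $B \supseteq B'$, while the endpoints $q_0, q_m$ lie in $V$, which is disjoint from $B$, hence also avoid $B \supseteq B'$. Thus the entire move avoids $B'$. Importantly, an interior location may well lie in the enlarged set $V'$, but this is harmless: Lemma~\ref{lem:break-icg} makes no assumption that a move's interior avoids its vertex set, since its proof uses $\BreakFunc$ to split the move precisely at such locations. With all hypotheses verified, Lemma~\ref{lem:break-icg} yields a directed walk in $\ICG(V', B')$ from $a$ to $b$, completing the single-edge case.

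The only delicate point --- and the place where the asymmetry $V \subseteq V'$, $B' \subseteq B$ is actually used --- is confirming that the move avoids the new blocker set $B'$: shrinking the blockers can only remove obstructions, while enlarging the vertex set is harmless because Lemma~\ref{lem:break-icg} already accommodates interior vertices by breaking at them. I anticipate no genuine obstacle beyond carefully tracking these set inclusions.
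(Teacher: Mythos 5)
Your proof is correct and takes essentially the same approach as the paper: the paper simply applies Corollary~\ref{cor:break-icg-walk} to the entire walk at once, which is exactly your edge-by-edge application of Lemma~\ref{lem:break-icg} followed by concatenation, unrolled one level. Your explicit verification that the endpoints of each move avoid \(B'\) (using disjointness of \(V\) and \(B\)) is a detail the paper leaves implicit, and you handle it correctly.
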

\begin{proof}
  A walk from \(\ell_0\) to \(\ell_1\) in \(\ICG(V, B)\)
  is a sequence of moves, each of whose source and destination is in \(V \subseteq V'\)
  and none of which pass through locations in \(B \supseteq B'\).
  Thus, by Corollary~\ref{cor:break-icg-walk},
  there is a directed walk from \(\ell_0\) to \(\ell_1\) in \(\ICG(V', B')\).
\end{proof}

\subsubsection{Hero Paths}

Given a board, a \defn{hero location sequence} is any sequence of locations
$p_0, p_1, \ldots, p_k$ where \(p_0\) is the hero's initial position
in the given board.
A \defn{hero path} is a hero location sequence
with the additional property that the hero has a
sequence of valid moves $\langle p_0, \dots, p_1 \rangle$,
$\langle p_1, \dots, p_2 \rangle$, \dots,
$\langle p_{k-1}, \dots, p_k \rangle$
in the initial board (without any other moves being made).
This definition is particularly simple for e.g.\ Chess kings or pawns,
but more generally, hero moves could be blocked by the villain pieces.
We will prove below in Lemma~\ref{lem:hero path} that it suffices to
look at solutions where the hero follows a hero path,
but for now we know that the hero at least follows a hero location sequence.

For a fixed hero location sequence $p_0, p_1, \ldots, p_k$
(collectively denoted~$p$)
and an index $i$ with $0 \leq i < k$,
we define \(H_i(p) = \ICG(L \setminus \{p_0, p_1, \ldots, p_i\}, \{p_i\})\) to be the immediate capture graph
(as defined above)
of the villains on the board after making just the first \(i\) hero moves
(so that the hero is now at \(p_i\) and \(p_0, p_1, \ldots, p_{i-1}\) are empty).
We will just write \(H_i\) (omitting the hero location sequence~\(p\))
when it is clear from context.

Define a villain at location \(v\) to be \defn{strongly capturable}
by hero location sequence $p_0, p_1, \ldots, p_k$ if,
for some index $i$ with $0 \leq i < k$,
there is a directed walk in \(H_i\) from \(v\) to \(p_{i+1}\).
Intuitively, when the hero is at location $p_i$,
the villain can move to hero location $p_{i+1}$,
and then get captured by the hero's next move.
By definition, all of \(p_1, p_2, \ldots, p_k\) are strongly capturable.

\begin{lemma}
  \label{lem:strongly-capturable-later}
  Let $p_0, p_1, \ldots, p_k$ be a hero location sequence.
  Suppose there is a directed walk in \(H_i\)
  from \(v\) to \(p_j\), with \(i < j\).
  Then \(v\) is strongly capturable by $p_0, p_1, \ldots, p_k$.
\end{lemma}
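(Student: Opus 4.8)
The plan is to take the given walk in $H_i$ from $v$ to $p_j$, truncate it at the first moment it reaches a hero location whose index exceeds $i$, and show that this truncated walk already lives in an earlier capture graph $H_{\ell-1}$, where it ends exactly at $p_{(\ell-1)+1}$. This directly witnesses strong capturability with index $\ell-1$.

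Concretely, write the given walk as $v = u_0, u_1, \ldots, u_t = p_j$ in $H_i$. Since $j > i$, some vertex of the walk is a hero location $p_m$ with $m > i$; let $s^\ast$ be the first position at which this happens, and let $\ell > i$ be the least index with $u_{s^\ast} = p_\ell$. I would then consider the truncated walk $u_0, \ldots, u_{s^\ast}$ and argue it is a walk in $H_{\ell-1} = \ICG(L \setminus \{p_0,\ldots,p_{\ell-1}\}, \{p_{\ell-1}\})$. First, every vertex $u_0, \ldots, u_{s^\ast}$ lies in $L \setminus \{p_0, \ldots, p_{\ell-1}\}$: each is already a vertex of $H_i$, hence avoids $p_0, \ldots, p_i$, and by the minimality of $s^\ast$ (and of $\ell$) none of them equals any of $p_{i+1}, \ldots, p_{\ell-1}$.

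For the edges, I would collect the villain moves witnessing the edges $(u_s, u_{s+1})$ of $H_i$ for $0 \le s < s^\ast$. Because these are edges of $H_i = \ICG(L \setminus \{p_0,\ldots,p_i\}, \{p_i\})$, each such move has all of its interior locations in $\{p_0, \ldots, p_{i-1}\}$, while its source and destination are the vertices $u_s, u_{s+1}$ just shown to lie in $L \setminus \{p_0,\ldots,p_{\ell-1}\}$. Since $i-1 \le \ell-2$, the interior locations also lie in $\{p_0, \ldots, p_{\ell-2}\}$ and in particular avoid $p_{\ell-1}$, so no move passes through the blocker $\{p_{\ell-1}\}$. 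These moves therefore satisfy the three hypotheses of Corollary~\ref{cor:break-icg-walk} applied to $\ICG(L \setminus \{p_0,\ldots,p_{\ell-1}\}, \{p_{\ell-1}\})$, yielding a directed walk in $H_{\ell-1}$ from $u_0 = v$ to $u_{s^\ast} = p_\ell = p_{(\ell-1)+1}$. Since $i \le \ell-1 \le j-1 < k$, this is exactly the condition for $v$ to be strongly capturable.

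I expect the main obstacle to be the index bookkeeping in the edge step: carefully tracking which locations are empty (the $p_0, \ldots, p_{i-1}$), which is blocked (the hero, at $p_i$ versus $p_{\ell-1}$), and which carry villains, as we reinterpret the \emph{same} villain moves at the two different ``times'' $i$ and $\ell-1$. The choice of $\ell$ as the first future hero location encountered is precisely what makes both the endpoint condition and the no-passing-through-$p_{\ell-1}$ condition hold simultaneously; it also implicitly uses that the relevant hero locations are distinct, which holds for the simple hero location sequences arising from actual play (and the degenerate case $\ell = i+1$ is immediate, since then $H_{\ell-1} = H_i$).
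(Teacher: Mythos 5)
Your proof is correct and takes essentially the same route as the paper's: the paper assumes WLOG that the walk is minimal, so that its first visit to \(\{p_{i+1}, \ldots, p_k\}\) is its endpoint, which is precisely your explicit truncation at \(s^\ast\) with the index renamed to \(\ell\), followed by the same application of Corollary~\ref{cor:break-icg-walk} to obtain a walk in \(H_{\ell-1}\) (the paper's \(H_{j-1}\)) ending at \(p_{(\ell-1)+1}\). One small nit: your inequality \(\ell - 1 \le j - 1\) need not hold, since the walk could first hit a hero location of index greater than \(j\); but this is harmless, because strong capturability only requires \(\ell - 1 < k\), which is automatic from \(\ell\) being an index of the sequence \(p_0, \ldots, p_k\).
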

\begin{proof}
  Without loss of generality we can assume the walk is minimal,
  so that \(p_j\) is the first place the walk visits any of \(\{p_{i+1}, p_{i+2}, \ldots, p_k\}\).
  Then this walk forms a sequence of moves each of whose source and destination
  is in \(L \setminus \{p_0, p_1, \ldots, p_{j-1}\}\) and which do not pass through \(p_{j-1}\) (because they are edges of \(H_i\)).
  By Corollary~\ref{cor:break-icg-walk},
  there is a directed walk from \(v\) to \(p_j\) in \(H_{j-1}\),
  and so \(v\) is strongly capturable by
  $p_0, p_1, \ldots, p_k$ using index~$j-1$.
\end{proof}

Next we show that knowing the sequence of locations the hero visits
suffices to solve the puzzle.
That is, given a hero path $p_0, p_1, \ldots, p_k$,
we will show how to determine in polynomial time
whether there a valid solution such that the hero makes exactly $k$ moves
through this sequence of locations.
Specifically, we show that it is a necessary and sufficient condition for
all villains to be strongly capturable by $p_0, p_1, \ldots, p_k$,
in two parts.

\begin{lemma}
  \label{lem:soln-strongly-capturable}
  Suppose there is a solution such that the hero makes exactly $k$ moves
  along the hero location sequence $p_0, p_1, \ldots, p_k$.
  Then all villains are strongly capturable by $p_0, p_1, \ldots, p_k$.
\end{lemma}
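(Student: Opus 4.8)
## Proof Plan

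The plan is to take an arbitrary valid solution in which the hero traverses the sequence $p_0, p_1, \ldots, p_k$, and to show that each villain $v$ admits a directed walk in some $H_i$ from $v$ to $p_{i+1}$. The natural idea is to track, for each villain, where its ``trajectory'' of moves eventually leads, and to relate that trajectory to the hero's position at the moment of capture. I would begin by invoking Lemma~\ref{lem:solution-no-cycle}: following the unique move out of each location, every villain's trajectory is finite and terminates at the final surviving piece, which (since the hero is uncapturable) must be the hero itself. So every villain is eventually captured, and its capture happens at some hero location $p_{j}$ when the hero moves from $p_{j-1}$ to $p_j$.

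The key step is to localize, for a fixed villain $v$, the moment at which its forward trajectory first reaches a hero location. First I would trace the chain of moves out of $v$ in the solution: $v = w_0, w_1, w_2, \ldots$, following the unique successor from Lemma~\ref{lem:solution-no-cycle}, until the trajectory first lands on one of the hero locations $p_1, \ldots, p_k$, say at $p_j$. I want to argue that the entire initial portion of this trajectory (before reaching $p_j$) can be ``replayed'' early --- specifically, that all the villain moves $w_0 \to w_1 \to \cdots \to p_j$ are available when the hero is still at $p_{j-1}$, i.e., as moves in the board captured by $H_{j-1}$. The source and destination of each such move lies in $L \setminus \{p_0, \ldots, p_{j-1}\}$ (the trajectory avoids earlier hero locations by our choice of $j$ as the first hit), and I must check that none of these moves passes through the blocking location $p_{j-1}$. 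Once this is established, Corollary~\ref{cor:break-icg-walk} yields a directed walk from $v$ to $p_j$ in $H_{j-1}$, and then Lemma~\ref{lem:strongly-capturable-later} (or directly the definition with index $j-1$) shows $v$ is strongly capturable.

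The main obstacle I anticipate is the \emph{blocking} issue: the villain moves along the trajectory $w_0 \to \cdots \to p_j$ occur at various times in the real solution, at which point various locations are occupied or empty, and I need these same moves to be legitimate edges of $H_{j-1}$, a graph defined with a specific set of emptied locations $\{p_0, \ldots, p_{j-1}\}$ and a single blocker at $p_{j-1}$. The subtlety is that a move that was unblocked at its real execution time might in principle be blocked in the $H_{j-1}$ configuration, or vice versa. To handle this I would use the monotonicity encoded in Corollary~\ref{cor:icg-walk-subset}: since each trajectory move is a genuine villain move in $T$ whose intermediate locations are not in $\{p_0, \ldots, p_{j-1}\}$ (emptied by the hero before it reaches $p_{j-1}$) and which does not pass through $p_{j-1}$ itself, the $\BreakFunc$ decomposition relative to $V = L \setminus \{p_0, \ldots, p_{j-1}\}$ produces edges of $H_{j-1}$ regardless of the other villains' positions. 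Carefully verifying that the blocker at $p_{j-1}$ does not obstruct the trajectory --- which should follow from the fact that the villain reaches $p_j$, not $p_{j-1}$, and that any move genuinely passing through $p_{j-1}$ would have been blocked once the hero arrived there --- is the delicate point, but once it is settled the concatenation and appeal to Corollary~\ref{cor:break-icg-walk} finish the argument.
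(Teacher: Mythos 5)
Your proposal follows the paper's proof essentially step for step: trace the chain of moves out of $v$ given by Lemma~\ref{lem:solution-no-cycle}, truncate at the first hero location $p_j$ (the paper writes $p_{i+1}$ with $j = i+1$), check that the chain moves have sources and destinations outside $\{p_0, \ldots, p_{j-1}\}$ and never pass through $p_{j-1}$, and apply Corollary~\ref{cor:break-icg-walk} to obtain a walk from $v$ to $p_j$ in $H_{j-1}$, so $v$ is strongly capturable with index $j-1$. The timing point you flag as delicate is resolved exactly as you suggest and as the paper does: the hero must make the last move into $p_j$, the chain moves occur in temporal order before it, and throughout that time $p_{j-1}$ remains occupied (its unique outgoing move is the hero's), so no chain move can pass through it.
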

\begin{proof}
  Consider a villain at location \(v_0\).
  Let \(v_0, v_1, \dots, p_k\) be the sequence of locations from
  Lemma~\ref{lem:solution-no-cycle} applied to the solution.
  Truncate the sequence at the first location \(v_r = p_{i+1}\)
  where \(v_r \in \{p_0, p_1, \ldots, p_k\}\),
  so that in particular \(v_j \notin \{p_0, p_1, \ldots, p_i\}\)
  for \(0 \leq j \le r\).

  For each pair \((v_j, v_{j+1})\) with \(0 \le j < r\), there is a villain move $\langle v_j = \ell_0, \ell_1, \ldots, \ell_m = v_{j+1} \rangle$ in the solution.
  These villain moves must occur in the solution before the hero move from \(p_i\) to \(p_{i+1}\), because the hero must make the last move to \(p_{i+1}\).
  Thus \(p_i \notin \{\ell_0, \ell_1, \ldots, \ell_m\}\)
  as \(p_i\) is still occupied when each villain move occurs.

  Therefore we have a sequence of moves
  each of whose source and destination is in \(L \setminus \{p_0, p_1, \ldots, p_i\}\)
  and none of which passes through \(p_i\).
  By Corollary~\ref{cor:break-icg-walk},
  there is a directed walk from \(v_0\) to \(p_{i+1}\) in \(H_i\),
  and so \(v_0\) is strongly capturable.
\end{proof}

\begin{lemma}
  \label{lem:strongly-capturable-soln}
  Let \(p_0, p_1, \ldots, p_k\) be a hero path
  such that every villain is strongly capturable by \(p_0, p_1, \ldots, p_k\).
  Then there is a solution such that the hero makes exactly $k$ moves
  through the sequence of locations $p_0, p_1, \ldots, p_k$.
\end{lemma}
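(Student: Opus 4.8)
The plan is to construct the solution explicitly by simulating the hero's journey along the path $p_0, p_1, \ldots, p_k$, interleaving villain captures between consecutive hero moves. Concretely, I process the hero moves in order: just before the hero's $i$-th move (while the hero sits at $p_i$ and $p_0, \ldots, p_{i-1}$ are already vacated), I use a sequence of villain captures to collapse a suitable set of villains onto the single cell $p_{i+1}$, and then let the hero capture the lone survivor there by the direct move $\langle p_i, p_{i+1}\rangle$. That hero move is legal because, along a hero path, $\langle p_i, \dots, p_{i+1}\rangle$ is valid in the occupied initial board and hence is a two-location capture. Crucially, since $T$ is symmetric, each graph $H_i$ is effectively undirected, so ``$v$ can reach $p_{i+1}$'' is just connectivity, and the natural set to collapse at phase $i$ is the connected component of $p_{i+1}$ in the current graph of present villains. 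Collapsing a connected set onto one cell is exactly the leaf-peeling procedure from the proof of Theorem~\ref{thm:one}: take a spanning tree of the component rooted at $p_{i+1}$ and repeatedly move a leaf onto its parent until a single piece remains at $p_{i+1}$.

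The first block of routine steps is to verify that every move I emit is valid in the evolving board. For a villain edge of $H_i$ between two currently-present villains, the intermediate locations are confined to $\{p_0, \ldots, p_{i-1}\}$ (by definition of $H_i = \ICG(L \setminus \{p_0, \ldots, p_i\}, \{p_i\})$), which are empty once the hero has passed; the hero at $p_i$ is the designated blocker; and the two endpoints carry pieces throughout the leaf-peeling. Hence each collapse edge is executable \emph{regardless of where the other villains sit}, so the phase-$i$ collapse is self-contained. I also record the structural fact that the original villain at $p_{i+1}$ has \emph{latest} witness index exactly $i$: for $j > i$ it is not even a vertex of $H_j$, so it lies in the component collapsed at phase $i$, guaranteeing the hero always has a piece to capture at $p_{i+1}$.

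The heart of the proof, and the step I expect to be the main obstacle, is showing that \emph{every} villain is eventually captured rather than stranded. The natural approach is to assign each villain $v$ its latest witness index $\sigma(v)$, the largest $i$ for which $v$ reaches $p_{i+1}$ in $H_i$; this is well-defined by strong capturability, and Lemma~\ref{lem:strongly-capturable-later} is exactly what lets me always push a witness to the largest admissible index. One then wants to prove that $v$ is captured no later than phase $\sigma(v)$: the witnessing path from $v$ to $p_{\sigma(v)+1}$ in $H_{\sigma(v)}$ has the property that every one of its villains (via the corresponding suffix of the path) also witnesses reaching $p_{\sigma(v)+1}$, so each such helper has latest witness index $\ge \sigma(v)$. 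The danger is the interaction between phases (the ``blocking'' difficulty flagged for this subsection): because ICG walks are only monotone under \emph{adding} villains (Corollary~\ref{cor:icg-walk-subset}), a greedy earlier collapse could consume a helper of $v$ and disconnect $v$ from its target at phase $\sigma(v)$.

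The way I would defuse this obstacle is to argue that such premature consumption never \emph{strands} anyone: if a helper $u$ on $v$'s witnessing path is swept into the component collapsed at some earlier phase $i' < \sigma(v)$, then $u$ is connected to $p_{i'+1}$ in the phase-$i'$ graph, and I claim $v$ is connected to $p_{i'+1}$ there as well, so $v$ is collapsed together with $u$ at phase $i'$. Establishing this claim is where symmetry and the walk-concatenation tools (Corollary~\ref{cor:break-icg-walk} together with Corollary~\ref{cor:icg-walk-subset}) must be combined carefully to relate connectivity in $H_{\sigma(v)}$ to connectivity in the current phase-$i'$ graph, handling the mismatch between the blockers $p_{i'}$ and $p_{\sigma(v)}$ and the extra villains still present at the earlier phase. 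Once this ``swept-together'' invariant is in hand, an induction on $\sigma$ shows that at its own phase (if not captured sooner) each villain still lies in the component of $p_{\sigma(v)+1}$, so it is captured; since the hero makes exactly the $k$ moves $p_0 \to p_1 \to \cdots \to p_k$ and all villains are removed, the board reduces to the hero alone, giving the desired solution.
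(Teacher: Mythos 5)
Your forward-in-time plan is not the paper's approach (the paper builds the move sequence in \emph{reverse}), and as written it has a genuine gap; in fact two of your steps fail concretely. First, collapsing the \emph{entire} connected component of \(p_{i+1}\) at phase \(i\) consumes villains sitting on later hero-path cells and strands the hero. Minimal counterexample: hero path \(p_0, p_1, p_2\) with villains on \(p_1\) and \(p_2\) that are adjacent in \(H_0\) (e.g.\ two queens on consecutive squares). Every villain is strongly capturable by this path, but your phase-\(0\) collapse moves the villain from \(p_2\) onto \(p_1\); after the hero captures at \(p_1\), there is nothing to capture at \(p_2\), so no solution with exactly \(k=2\) hero moves along the path is produced. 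Your justification --- that the villain originally at \(p_{i+1}\) has latest witness index exactly \(i\) and ``so it lies in the component collapsed at phase \(i\)'' --- is a non sequitur: not being a vertex of \(H_j\) for \(j>i\) does nothing to protect that villain from being swept into the component of \(p_{i''+1}\) at an \emph{earlier} phase \(i''<i\), which is exactly what happens above. Any repair (excluding \(p_{i+2},\ldots,p_k\) from collapses) forces you to let a villain whose witness path runs into an occupied later hero cell \emph{terminate} there by capturing onto it and waiting for the hero --- a park-and-merge mechanism your construction never sets up. (A smaller slip of the same kind: hero-path moves need not be two-location captures; they may pass through the already-vacated cells \(p_0,\ldots,p_{i-1}\), which is harmless but means your ``valid in the occupied initial board'' justification is off.)

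Second, the ``swept-together'' invariant that you correctly identify as the heart of the argument is asserted rather than proved, and the tools you cite cannot deliver it. Corollary~\ref{cor:icg-walk-subset} is monotone only when the villain vertex set grows and the blocker set shrinks, whereas comparing \(H_{\sigma(v)}\) (blocker \(p_{\sigma(v)}\), vertices excluding \(p_0,\ldots,p_{\sigma(v)}\)) with the phase-\(i'\) graph of present villains (blocker \(p_{i'}\), with \(p_{i'+1},\ldots,p_{\sigma(v)}\) typically still occupied) goes in neither direction. An edge \(v \to u\) of \(H_{\sigma(v)}\) may pass through some \(p_j\) with \(i' < j < \sigma(v)\); at phase \(i'\) that cell is occupied, so at that moment \(v\) may connect only to the villain sitting on \(p_j\) (which the repaired procedure must not capture) and not to \(u\)'s component at all. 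Then \(u\) is consumed without \(v\), and since closure under submoves does not let you concatenate two \(T\)-moves across the now-empty \(u\) into a single move, \(v\) can be genuinely stranded at phase \(\sigma(v)\). The paper sidesteps both problems simultaneously by outputting the solution in reverse order: it first schedules all hero moves, then processes villains from highest rank to lowest, routing each rank-\(i\) villain along its witness path in \(H_i\) but \emph{truncating} at the first location \(v'\) already touched by a scheduled move, so the parked piece is absorbed by the later (in real time) move at \(v'\). This truncation is precisely the merge step your greedy lacks; it handles collisions between villain paths and with hero-path cells uniformly, and it lets the paper use an arbitrary witness index as the rank, with no need for your ``latest index'' bookkeeping or the unproved invariant. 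To make your forward construction work you would essentially have to reinvent this mechanism.
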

\begin{proof}
  For every villain starting at location~$v$,
  by the definition of strongly capturable,
  we obtain an index \(i\) with $0 \leq i < k$,
  which we call the villain's \defn{rank},
  and a directed walk in \(H_i\) from \(v\) to \(p_{i+1}\).
  By removing any cycles in the walk, we can assume that the walk
  is in fact a simple path.

  We will output in reverse order a sequence of moves
  that solves the instance.
  Sort the villains by rank, breaking ties arbitrarily,
  and process the villains in order from highest rank to lowest.
  For each index \(i\) with \(0 \leq i < k\),
  add the hero move from $p_i$ to $p_{i+1}$ to the output,
  and process the villains of rank $i$ as follows.
  For each villain starting at location $v$ and having rank \(i\),
  find the earliest location $v'$ in the simple path
  from $v$ to $p_{i+1}$ such that a move to or from $v'$
  is already in our output list of moves.
  Such a location always exists because we just added a hero move to $p_{i+1}$.
  Add (in reverse order) the sequence of villain moves corresponding to the
  subpath from $v$ to~$v'$.
  (In particular, we do not add any moves if $v=v'$.)
  Define each villain move added by processing a villain of rank \(i\)
  to also have rank~\(i\).

  We claim that there are no villain moves out of \(p_0, p_1, \ldots, p_k\)
  and that no villain move of rank \(i\) captures~\(p_i\).
  This is because the path from \(v\) to \(p_{i+1}\)
  does not include any of \(p_0, p_1, \ldots, p_i\),
  because those are not vertices of~\(H_i\);
  and because there are already hero moves
  involving all of \(p_i, p_{i+1}, \ldots, p_k\).

  We must show that the resulting sequence of moves is a solution
  that uses the hero path $p_0, p_1, \ldots, p_k$.
  The full hero path \(p_0, p_1, \ldots, p_k\) appears in the solution
  by construction.
  Every location except $p_k$ appears exactly once as the start of a move
  in the output, and there is no output move out of~\(p_k\).
  Each villain move of rank \(i\) occurs when the hero is at \(p_i\),
  having captured all of \(p_0, p_1, \ldots, p_i\).
  No villain move captures the hero because no villain move of rank \(i\)
  captures~\(p_i\).

  It remains to show that every move is legal at the time it is made.
  There are two ways a move could have been illegal:
  either it was made to or from a location that is now empty,
  or it was blocked by an intervening piece.
  The source of a move cannot be empty
  because each location occurs as the source of a move at most once.
  The destination of a move cannot be empty
  because every move is either to $p_k$ or to a location
  for which the unique move from that location occurs later in time.
  No hero move is blocked by the definition of hero path:
  \(p_0, p_1, \ldots, p_k\) must be a legal sequence of moves
  even without moving the villains.
  No villain move is blocked because villain moves of rank \(i\)
  are made only along edges of~\(H_i\);
  by definition of \(H_i\), such a move can be blocked only by
  \(p_0, p_1, \ldots, p_{i-1}\), which are empty when the hero is at~$p_i$.
\end{proof}

Note that Lemma~\ref{lem:soln-strongly-capturable}
applies to any hero location sequence \(p_0 \ldots p_k\),
but Lemma~\ref{lem:strongly-capturable-soln} requires a hero path.
There is a technical issue here, because
the hero moves might be blocked in a hero location sequence,
which a solution might avoid by moving villains out of the way.
However, we can show that there is always an alternate solution
that avoids doing so:

\begin{lemma} \label{lem:hero path}
  Suppose a puzzle has a solution.
  Then there is a solution such that the sequence of locations
  visited by the hero forms a hero path.
\end{lemma}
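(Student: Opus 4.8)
The plan is to reduce the statement to the machinery already in place. Given a solution, let $p = p_0, p_1, \ldots, p_k$ be the hero's location sequence; by Lemma~\ref{lem:soln-strongly-capturable} every villain is strongly capturable by $p$. By Lemma~\ref{lem:strongly-capturable-soln}, it therefore suffices to produce a hero \emph{path} $q$ by which every villain is \emph{still} strongly capturable: applying Lemma~\ref{lem:strongly-capturable-soln} to $q$ then yields a solution whose hero trajectory is the hero path $q$, as desired. If the hero is nonblocking (all hero moves have two locations) then $p$ is already a hero path and there is nothing to prove, so the content is entirely in the blocking case.

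To build $q$, I refine $p$: in each hero move $\langle p_i, \ldots, p_{i+1}\rangle$, insert as new hero stops exactly those intermediate locations the hero has not already vacated (the squares still occupied when it is about to move). Each resulting submove then skips only previously vacated hero squares, so it is legal in the hero-only evolution; hence $q$ is genuinely a hero path, and the hero never stops twice on the same square. Equivalently, $q$ is obtained by repeatedly taking the first hero move that is still blocked, inserting the closest blocking location along it, and iterating; this terminates because each inserted square becomes a new hero location and there are finitely many locations. The effect is that every square the hero crosses in $q$ is now a hero location.

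The substance is to show every villain remains strongly capturable by $q$. Fix a villain $v$ and consider its capture chain $v = v_0, v_1, \ldots$ from Lemma~\ref{lem:solution-no-cycle}, truncated at the first chain vertex $u = q_t$ that is a hero location of $q$. By the arborescence structure, all moves of this truncated chain are executed before the hero passes through $u$, and every chain vertex strictly before $u$ is an (uncaptured) villain rather than a hero location of~$q$. I then feed this chain into $\ICG$ form via Corollary~\ref{cor:break-icg-walk}, with villain set $L \setminus \{q_0, \ldots, q_{t-1}\}$ and blocker $\{q_{t-1}\}$, obtaining a directed walk in $H_{t-1}(q)$ from $v$ to $u = q_t$. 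Since $u = q_t$ is a hero location with $t > t-1$, Lemma~\ref{lem:strongly-capturable-later} gives that $v$ is strongly capturable by~$q$. (When $u$ is one of the inserted squares, this is exactly the statement that the hero captures $v$ as it crosses that square, in place of a villain having to vacate it first.)

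The delicate step, which I expect to be the main obstacle, is verifying the hypotheses of Corollary~\ref{cor:break-icg-walk} in this transfer — in particular that no move of the truncated chain passes through the hero's current square $q_{t-1}$. This is where the timing of the original solution is essential: $q_{t-1}$ is occupied (by the hero for the first still-blocked move, or by a not-yet-vacated villain in general) throughout the interval in which the chain up to $u$ is executed, so none of those legal solution moves could have run through it. Making this timing argument precise — and carrying the invariant through the successive insertions so that the capture walks one works with always retain this structure — is the technical heart of the proof; the remaining bookkeeping (index shifts under insertion, and the monotonicity that emptying more squares only adds edges) is routine and handled by Corollary~\ref{cor:icg-walk-subset} together with Lemma~\ref{lem:strongly-capturable-later}.
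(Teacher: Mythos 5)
Your overall skeleton matches the paper's proof: the same construction of the hero path $q$ by breaking each blocked hero move at its still-uncaptured blocking villains, the same reduction of the problem to strong capturability via Lemma~\ref{lem:strongly-capturable-soln}, and the same use of the capture chain from Lemma~\ref{lem:solution-no-cycle} truncated at its first hero-path location $u = q_t$. But the timing invariant you flag as the ``technical heart'' is genuinely false, and it is precisely the step where the paper argues differently. You claim that $q_{t-1}$, the hero-path location immediately preceding $u$, is occupied throughout the interval in which the chain's moves are executed, so that you may take $\{q_{t-1}\}$ as the blocker set and obtain a walk in $H_{t-1}(q)$. However, $q_{t-1}$ may be one of the \emph{inserted} locations: a blocking villain that, in the original solution, vacated its square long before the chain's final move into $q_t$. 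Concretely (hero rook, villain queens): the blocking villain at $b$ moves away at time $1$; a villain move $w \to c$ passing through the now-empty $b$ occurs at time $5$; the hero's move through $b$ into $c$ occurs at time $10$. Your construction inserts $b$ into the hero path, so $q_{t-1} = b$ and $q_t = c$; the chain for $w$ truncates at $c$, and its single move passes through $q_{t-1}$, which was empty when the move was made. The hypotheses of Corollary~\ref{cor:break-icg-walk} with blocker $\{q_{t-1}\}$ fail, and indeed no walk from $w$ to $c$ need exist in $H_{t-1}(q)$ (note $b$ is not even a vertex of that graph), so your transfer breaks down exactly where you anticipated difficulty.

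The paper's fix is to anchor the timing at $p_s$, the hero's \emph{actual} location in the original solution at the moment the final chain move into $u$ is made. Because $p_s$ is an original hero stop, it is occupied from the start of play until the hero's move out of it (first by the piece the hero captures there, then by the hero), so no chain move passes through it; the paper then separately verifies $s < t$ (the hero cannot move into or through $q_t$ until after the last capture at $q_t$, which happens while the hero sits at $p_s$), yielding a walk in $H_s(q)$ from $v$ to $q_t$. This is where Lemma~\ref{lem:strongly-capturable-later} earns its keep: its minimal-walk truncation re-anchors at the \emph{first} hero-path location the walk meets --- in the example above, the walk $w \to b \to c$ in $H_s$ gets truncated at $b = q_{t-1}$, showing $w$ strongly capturable with index $t-2$ (the villain moves to $b$ while the hero is at $q_{t-2}$, and is captured by the hero's next move). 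By contrast, your invocation of that lemma with the index pair $(t-1, t)$ is vacuous --- a walk in $H_{t-1}$ to $q_t$ is already the definition of strong capturability --- which signals that the lemma was not deployed where it is actually needed. The remaining ingredients of your write-up (the construction and validity of $q$, the chain truncation, Corollaries~\ref{cor:break-icg-walk} and~\ref{cor:icg-walk-subset}) agree with the paper; the gap is confined to, but fatal at, the occupancy claim for $q_{t-1}$.
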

\begin{proof}
  Consider taking just the subsequence of hero moves from the solution,
  and attempting to play them without making any villain moves.
  When doing so, we may encounter a hero move \(m\)
  that is illegal because it is blocked by some set \(V_m\) of villains
  which have not yet been captured.
  Replace each such hero move \(m\) with \(\Break{m}{V_m}\),
  which is a sequence of valid moves.
  The resulting sequence of hero moves forms a hero path
  \(p_0, p_1, \ldots, p_k\),
  which contains the original hero location sequence
  and also contains all of the \(V_m\) sets.

  It remains to show that there exists a solution such that
  the hero makes exactly $k$ moves following this hero path.
  By Lemma~\ref{lem:strongly-capturable-soln}, it suffices to show that
  every villain is strongly capturable by \(p_0, p_1, \ldots, p_k\).
  We do this using a similar argument to Lemma~\ref{lem:soln-strongly-capturable}.

  Consider a villain at location \(v_0\).
  Let \(v_0, v_1, \dots, v_r\) be the sequence of locations from
  Lemma~\ref{lem:solution-no-cycle} applied to the original solution.
  Truncate the sequence at the first location \(v_r = p_{i+1}\)
  where \(v_r \in \{p_0, p_1, \ldots, p_k\}\),
  so that in particular \(v_j \notin \{p_0, p_1, \ldots, p_i\}\)
  for \(0 \leq j \le r\).

  Let \(p_s\) be the location of the hero in the original solution
  when the villain move from \(v_{r-1}\) to \(p_{i+1}\) gets made,
  so that all of the above villain moves occur in the solution
  before the unique hero move out of \(p_s\).
  None of these villain moves passes through \(p_s\)
  because it is occupied when they occur.
  It must be that \(s < i + 1\),
  for in the solution the hero cannot move into or through \(p_{i+1}\)
  until after the last villain move to \(p_{i+1}\),
  which occurs after the hero move to \(p_s\).

  By Corollary~\ref{cor:break-icg-walk},
  there is a directed walk from \(v_0\) to \(p_{i+1}\) in \(H_s(p)\),
  and so by Lemma~\ref{lem:strongly-capturable-later},
  \(v_0\) is strongly capturable by \(p_0, p_1, \ldots, p_k\).
\end{proof}

By the above lemmas, we can consider a solution to the puzzle to consist of
just a hero path for which all villains strongly capturable.

Next we define the notion of ``weak capturability'',
which can be used to rule out prefixes of solution hero paths.
For a hero path \(p_0, p_1, \ldots, p_k\),
define \(G_i(p) = ICG(L \setminus \{p_0, p_1, \ldots, p_i\}, \emptyset)\)
to be the immediate capture graph of the board if we remove
\(p_0, p_1, \ldots, p_i\) entirely.
Define a villain at location \(v\) to be
\defn{weakly capturable after \(p_0, p_1, \ldots, p_k\)} if
there is a hero path \(p_0, p_1, \ldots, p_{k'}\)
with \(p_0, p_1, \ldots, p_k\) as a prefix,
such that there is a directed walk in \(G_k\) from \(v\) to \(p_{k'}\).
While $H_i(p)$ is a subgraph of $G_i(p)$, weak capturability does not
necessarily imply strong capturability with the same hero path
because only the latter gets to pick an index~$i$;
weak capturability must use $i=k$.
(Intuitively, strong capturability means that the villain has effectively
already been captured, while weak capturability means that the villain
could be in the future.)

We show that hero paths can be ruled out as potential prefixes of
solutions if they do not at least make all the villains weakly capturable:

\begin{lemma}
  \label{lem:strong-or-weak}
  Let \(p_0, p_1, \dots, p_k\) be a hero path, and suppose that there is a
  solution whose hero path has \(p_0, p_1, \ldots, p_k\) as a prefix.
  Then every villain is either strongly capturable by \(p_0, p_1, \ldots, p_k\)
  or weakly capturable after \(p_0, p_1, \ldots, p_k\).
\end{lemma}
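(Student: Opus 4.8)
The plan is to reduce everything to Lemma~\ref{lem:soln-strongly-capturable} together with the monotonicity of immediate capture graphs from Corollary~\ref{cor:icg-walk-subset}, via a single case split. Write the solution's full hero path as $p_0, p_1, \ldots, p_K$ with $K \geq k$, so that it literally extends the given prefix $p_0, p_1, \ldots, p_k$. Applying Lemma~\ref{lem:soln-strongly-capturable} to this solution, every villain $v$ is strongly capturable by $p_0, \ldots, p_K$: there is an index $i$ with $0 \leq i < K$ and a directed walk in $H_i$ from $v$ to $p_{i+1}$. I would then split on whether $i < k$ or $i \geq k$.

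First, if $i < k$, I claim $v$ is already strongly capturable by the prefix. The point is that $H_i(p) = \ICG(L \setminus \{p_0, \ldots, p_i\}, \{p_i\})$ depends only on $p_0, \ldots, p_i$, so the graph $H_i$ is identical whether computed from the prefix or the full path. Since $0 \leq i < k$ and $p_{i+1}$ lies within the prefix, the walk from $v$ to $p_{i+1}$ witnesses strong capturability by $p_0, \ldots, p_k$ directly, with no further work.

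The second case, $i \geq k$, is where the real content lies, and where I expect the one genuine step. Here I would transfer the walk from $H_i$ into $G_k = \ICG(L \setminus \{p_0, \ldots, p_k\}, \emptyset)$ using Corollary~\ref{cor:icg-walk-subset}. The hypotheses match exactly once one checks the asymmetric inclusions: since $i \geq k$ we have $\{p_0, \ldots, p_k\} \subseteq \{p_0, \ldots, p_i\}$, hence the vertex set $L \setminus \{p_0, \ldots, p_i\}$ of $H_i$ is contained in the vertex set $L \setminus \{p_0, \ldots, p_k\}$ of $G_k$, while the blocker set $\emptyset$ of $G_k$ is contained in the blocker set $\{p_i\}$ of $H_i$. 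Thus the corollary yields a directed walk in $G_k$ from $v$ to $p_{i+1}$. Finally, setting $k' = i+1$ (note $k < k' \leq K$), the sequence $p_0, \ldots, p_{k'}$ is a prefix of the solution's hero path and so is itself a hero path extending $p_0, \ldots, p_k$; together with the walk in $G_k$ from $v$ to $p_{k'}$, this is exactly the definition of $v$ being weakly capturable after $p_0, \ldots, p_k$.

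The main obstacle, such as it is, is getting the direction of the monotonicity right: weak capturability is forced to use the graph $G_k$ at the fixed index $k$, whereas the solution captures the villain at the possibly larger index $i$ in $H_i$, so one must push the walk from the ``smaller, more blocked'' graph $H_i$ down into the ``larger, unblocked'' graph $G_k$. Corollary~\ref{cor:icg-walk-subset} is tailored for precisely this asymmetry---more vertices and fewer blockers make walks only easier---so once the two inclusions above are verified the argument closes immediately. I would also record the small fact that any prefix of a hero path is a hero path, since the hero's valid moves along the prefix form an initial segment of its valid moves along the full path.
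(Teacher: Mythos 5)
Your proof is correct and takes essentially the same route as the paper's: both apply Lemma~\ref{lem:soln-strongly-capturable} to the solution's full hero path, split on whether the resulting index \(i\) is less than \(k\), and for \(i \ge k\) use Corollary~\ref{cor:icg-walk-subset} with exactly the asymmetric inclusions you verify to push the walk from \(H_i\) into \(G_k\), witnessing weak capturability via the prefix \(p_0, \ldots, p_{i+1}\). Your explicit checks (that \(H_i\) depends only on \(p_0, \ldots, p_i\), and that a prefix of a hero path is a hero path) are details the paper leaves implicit but are accurate.
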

\begin{proof}
  Let \(p_0, p_1, \ldots, p_{k'}\) be the hero path in the solution,
  and let \(v\) be the location of a villain.
  By Lemma~\ref{lem:soln-strongly-capturable},
  \(v\) is strongly capturable by \(p_0 \ldots p_{k'}\).
  That is, there exists an index \(i < k'\) and a directed walk in \(H_i\)
  from \(v\) to \(p_{i+1}\).
  If \(i < k\), then \(v\) is strongly capturable by \(p_0 \ldots p_k\),
  so suppose \(i \ge k\).

  By Corollary~\ref{cor:icg-walk-subset}, a directed walk in
  \(H_i = \ICG(L \setminus \{p_0, p_1, \ldots, p_i\}, \{p_i\})\)
  extends to a directed walk in
  \(G_k = \ICG(L \setminus \{p_0, p_1, \ldots, p_k\}, \emptyset)\).
  Thus there is a directed walk from \(v\) to \(p_{i+1}\) in \(G_k\),
  and so \(v\) is weakly capturable after \(p_0, p_1, \ldots, p_k\).
\end{proof}

\subsection{Interesting Hero Paths}

Consider a villain at location $v$.
Define a hero path $p_0 \dots p_k$ to be \defn{$v$-interesting} if
\begin{enumerate}
\item $v$ is not strongly capturable by \(p_0, p_1, \ldots, p_k\); and
\item $v$ is weakly capturable after \(p_0, p_1, \ldots, p_k\).
\end{enumerate}

We will show that $v$-interesting paths form a tree for any~$v$,
and that hero paths have $v$-interesting prefixes for some~$v$.
Together these limit the number of possible hero paths we need to search.

\begin{lemma}
  \label{lem:interesting-prefix}
  If $p_0, p_1, \ldots, p_k$ is a $v$-interesting path,
  then every prefix of the path is also.
\end{lemma}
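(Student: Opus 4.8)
The plan is to check directly that an arbitrary prefix $p_0, p_1, \ldots, p_j$ (with $0 \le j \le k$) satisfies the two defining conditions of a $v$-interesting path; equivalently, one could verify only the one-step-shorter prefix $p_0, \ldots, p_{k-1}$ and then finish by downward induction. Both conditions reduce to a monotonicity observation about how the auxiliary graphs $H_i$ and $G_i$ behave when the hero location sequence is truncated, combined with Corollary~\ref{cor:icg-walk-subset}.

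First I would handle condition~(1), that $v$ is not strongly capturable by the prefix. The key point is that strong capturability is monotone in the length of the path: the witness graph $H_i = \ICG(L \setminus \{p_0, \ldots, p_i\}, \{p_i\})$ and the target $p_{i+1}$ depend only on the first $i+1$ locations of the sequence. So if $v$ were strongly capturable by $p_0, \ldots, p_j$ via some index $i < j \le k$, then the very same directed walk in $H_i$ from $v$ to $p_{i+1}$ would witness strong capturability by the full path $p_0, \ldots, p_k$ (using the same index $i < k$), contradicting the assumption that the full path is $v$-interesting. Hence the prefix cannot make $v$ strongly capturable either.

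Next I would handle condition~(2), that $v$ is weakly capturable after the prefix. Since $v$ is weakly capturable after the full path, there is an extension $p_0, \ldots, p_{k'}$ of $p_0, \ldots, p_k$ together with a directed walk in $G_k = \ICG(L \setminus \{p_0, \ldots, p_k\}, \emptyset)$ from $v$ to $p_{k'}$. Because $j \le k$, this same extension is also an extension of the prefix $p_0, \ldots, p_j$, so it is a legitimate candidate for weak capturability after the prefix. It then remains to transport the walk from $G_k$ into $G_j = \ICG(L \setminus \{p_0, \ldots, p_j\}, \emptyset)$. Since truncating to the prefix removes fewer hero locations, we have $L \setminus \{p_0, \ldots, p_k\} \subseteq L \setminus \{p_0, \ldots, p_j\}$, while the blocking set is empty in both graphs. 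This is precisely the asymmetric hypothesis ($V \subseteq V'$ with $B' \subseteq B$) of Corollary~\ref{cor:icg-walk-subset}, which therefore upgrades the walk in $G_k$ to a walk in $G_j$ from $v$ to $p_{k'}$. Thus $v$ is weakly capturable after the prefix, and combining the two conditions shows the prefix is $v$-interesting.

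The step I expect to require the most care is keeping the two monotonicities pointing in the right directions: non-strong-capturability must propagate from the longer path down to the prefix (because strong capturability can only be \emph{gained} by lengthening the sequence, since a longer path offers more indices $i$), whereas weak capturability must also propagate down to the prefix but for the opposite reason, namely that $G_j$ is a \emph{supergraph} of $G_k$ so every walk survives. Getting the containment direction in Corollary~\ref{cor:icg-walk-subset} exactly right (enlarging $V$, shrinking $B$) is the only genuinely delicate piece of bookkeeping; everything else is a direct unwinding of the definitions of strong and weak capturability.
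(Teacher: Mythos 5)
Your proof is correct and is essentially the same as the paper's: you prove condition~(1) by the same contrapositive (a strong-capturability witness for the prefix, with its index $i$ and walk in $H_i$, is verbatim a witness for the full path), and condition~(2) by reusing the same extension $p_0,\ldots,p_{k'}$ and applying Corollary~\ref{cor:icg-walk-subset} to lift the walk from $G_k$ to $G_j$. The paper's proof is just a terser version of exactly this argument, so there is nothing to add.
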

\begin{proof}
  Let \(p_0, p_1, \ldots, p_j\) be a prefix of \(p_0, p_1, \ldots, p_k\),
  i.e., $j \leq k$.

  If $v$ is strongly capturable by \(p_0 \ldots p_j\),
  then it is strongly capturable by \(p_0 \ldots p_k\) also,
  using the same index.

  Suppose $v$ is weakly capturable after $p_0, p_1, \ldots, p_k$.
  Then there is some hero path \(p_0, p_1, \ldots, p_{k'}\)
  with \(p_0, p_1, \ldots, p_k\) as a prefix
  and a directed walk from \(v\) to \(p_{k'}\)
  in \(G_k = \ICG(L \setminus \{p_0, p_1, \ldots, p_k\}, \emptyset)\).
  By Corollary~\ref{cor:icg-walk-subset},
  this directed walk extends to a directed walk in
  \(G_j = \ICG(L \setminus \{p_0, p_1, \ldots, p_j\}, \emptyset)\).
  Thus \(v\) is weakly capturable after \(p_0, p_1, \ldots, p_j\).
\end{proof}

\begin{corollary}
  \label{cor:minimal-interesting}
  Suppose \(p_0, p_1, \ldots, p_k\) with \(k > 0\)
  is a \defn{minimal solution}; that is, no prefix is also a solution.
  Then there is a villain at location \(v\)
  such that \(p_0, p_1, \ldots, p_j\) is \(v\)-interesting for all \(j < k\).
\end{corollary}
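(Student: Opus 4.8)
The plan is to reduce the entire statement to a single claim about the longest proper prefix $p_0, p_1, \ldots, p_{k-1}$: namely, that there is one villain $v$ for which this particular prefix is $v$-interesting. Once that is in hand, Lemma~\ref{lem:interesting-prefix} immediately propagates $v$-interestingness downward to every shorter prefix $p_0, \ldots, p_j$ with $j \le k-1$, which is exactly the range $j < k$ demanded by the corollary. (Here the hypothesis $k > 0$ is what guarantees that $p_0, \ldots, p_{k-1}$ is a genuine prefix and that the index set $\{\, j : j < k \,\}$ is nonempty. Note also that every prefix of a hero path is again a hero path, so all of these objects are well-defined.)

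First I would use minimality to produce the candidate villain. Since $p_0, \ldots, p_k$ is a minimal solution and $k > 0$, the prefix $p_0, \ldots, p_{k-1}$ is not itself a solution. By Lemma~\ref{lem:strongly-capturable-soln}, if every villain were strongly capturable by $p_0, \ldots, p_{k-1}$, then this prefix would already yield a solution, contradicting minimality. Hence there exists a villain $v$ that is \emph{not} strongly capturable by $p_0, \ldots, p_{k-1}$, which is precisely the first defining condition of a $v$-interesting path.

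Next I would establish the second condition, weak capturability. Here I would invoke Lemma~\ref{lem:strong-or-weak} applied to the hero path $p_0, \ldots, p_{k-1}$, using the fact that the full solution $p_0, \ldots, p_k$ is a solution having $p_0, \ldots, p_{k-1}$ as a prefix. That lemma forces every villain --- and in particular $v$ --- to be either strongly capturable by $p_0, \ldots, p_{k-1}$ or weakly capturable after it. Since $v$ fails the former by the previous paragraph, it must satisfy the latter. Both defining conditions now hold, so $p_0, \ldots, p_{k-1}$ is $v$-interesting, and the reduction described above completes the proof.

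I expect no serious obstacle; the only point requiring genuine care is the reduction itself. One must resist the temptation to exhibit, for each index $j < k$ separately, a possibly different non-strongly-capturable villain, since the corollary demands a \emph{single} villain $v$ that works uniformly across all $j$. The correct move is to fix $v$ once at the level of the longest prefix $p_0, \ldots, p_{k-1}$ and let Lemma~\ref{lem:interesting-prefix} handle the downward propagation, rather than arguing prefix by prefix.
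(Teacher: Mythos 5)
Your proposal is correct and matches the paper's own proof essentially line for line: you extract a non-strongly-capturable villain from the failure of the prefix $p_0,\ldots,p_{k-1}$ via Lemma~\ref{lem:strongly-capturable-soln}, obtain weak capturability from Lemma~\ref{lem:strong-or-weak}, and propagate $v$-interestingness to all shorter prefixes with Lemma~\ref{lem:interesting-prefix}. Your closing remark about fixing a \emph{single} villain at the longest prefix rather than choosing one per index is exactly the right point of care, and it is precisely how the paper's argument works.
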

\begin{proof}
  Because \(p_0, p_1, \ldots, p_{k-1}\) is not a solution,
  by Lemma~\ref{lem:strongly-capturable-soln}, there is some \(v\)
  that is not strongly capturable by \(p_0, p_1, \ldots, p_{k-1}\).
  By Lemma~\ref{lem:strong-or-weak},
  \(v\) is weakly capturable after \(p_0, p_1, \ldots, p_{k-1}\).
  Hence \(p_0, p_1, \ldots, p_{k-1}\) is \(v\)-interesting.
  Finally, by Lemma~\ref{lem:interesting-prefix},
  \(p_0, p_1, \ldots, p_j\) is \(v\)-interesting for all \(j < k\).
\end{proof}

\begin{lemma}
  \label{lem:weak-connected}
  Suppose the villain piece type \(T\) is symmetric.
  If \(v\) is weakly capturable after a hero path $p_0, p_1, \ldots, p_k$,
  then \(v\) is connected to \(p_k\) in \(G_{k-1}\).
\end{lemma}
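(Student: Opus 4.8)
The plan is to bridge the two immediate capture graphs that appear in the statement. Weak capturability is witnessed inside \(G_k = \ICG(L \setminus \{p_0, \ldots, p_k\}, \emptyset)\), whereas the desired connectivity lives in \(G_{k-1} = \ICG(L \setminus \{p_0, \ldots, p_{k-1}\}, \emptyset)\). The only difference between these graphs is that \(p_k\) is removed from the vertex set of \(G_k\) but retained in \(G_{k-1}\); in particular \(p_k\) is a vertex of \(G_{k-1}\), which is exactly what makes the target statement meaningful. So my strategy is two steps: first, lift the weak-capturability walk up from \(G_k\) into \(G_{k-1}\), reaching some far endpoint \(p_{k'}\); second, walk back from \(p_{k'}\) to \(p_k\) inside \(G_{k-1}\) along the hero's own continuation, then invoke symmetry to merge the two walks into a single undirected connectivity claim.

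For the first step, unfolding the definition of weak capturability produces a hero path \(p_0, \ldots, p_{k'}\) extending the given prefix together with a directed walk in \(G_k\) from \(v\) to \(p_{k'}\); since \(p_{k'}\) must be a genuine vertex of \(G_k\), we have \(k' > k\). I would first note that the locations of a hero path are pairwise distinct (each hero move is a capture, so it lands on an occupied square and can never revisit a vacated one), which guarantees that \(v\), \(p_k\), and \(p_{k'}\) are all vertices of \(G_{k-1}\). Because \(L \setminus \{p_0, \ldots, p_k\} \subseteq L \setminus \{p_0, \ldots, p_{k-1}\}\) and both blocker sets are empty, Corollary~\ref{cor:icg-walk-subset} immediately promotes the walk to a directed walk from \(v\) to \(p_{k'}\) in \(G_{k-1}\).

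For the second step, I would exploit \(S \subseteq T\): for each \(k \le j < k'\), the hero move \(\langle p_j, \ldots, p_{j+1}\rangle\) is also a villain move, its endpoints lie in \(L \setminus \{p_0, \ldots, p_{k-1}\}\) by distinctness, and it meets no blocker since \(B = \emptyset\). Lemma~\ref{lem:break-icg} then yields a directed walk in \(G_{k-1}\) from \(p_j\) to \(p_{j+1}\), and concatenating over \(j\) (Corollary~\ref{cor:break-icg-walk}) gives a directed walk from \(p_k\) to \(p_{k'}\). Finally, since \(T\) is symmetric the graph \(G_{k-1}\) is effectively undirected, so the walk from \(v\) to \(p_{k'}\) obtained in the first step, together with the walk from \(p_k\) to \(p_{k'}\) just constructed, places \(v\) and \(p_k\) in the same connected component of \(G_{k-1}\), as required.

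The step I expect to be the main obstacle is the second one: one must recognize that the hero's continuation moves can be re-read as villain moves and then split into genuine edges of \(G_{k-1}\), even though their intermediate squares are previously vacated hero locations, some of which (namely \(p_k, \ldots, p_{j-1}\)) are still present as vertices of \(G_{k-1}\). This is precisely what Lemma~\ref{lem:break-icg} is for --- it splits the move at exactly those surviving locations --- so the only real bookkeeping is to confirm that the endpoints lie in the vertex set, which the distinctness of hero-path locations supplies. Symmetry of \(T\) enters only at the very end, to convert two directed walks sharing the endpoint \(p_{k'}\) into one undirected connectivity statement.
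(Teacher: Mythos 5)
Your proposal is correct and takes essentially the same route as the paper's proof: lift the weak-capturability walk from \(G_k\) to \(G_{k-1}\) via Corollary~\ref{cor:icg-walk-subset}, reinterpret the hero suffix \(p_k, \ldots, p_{k'}\) as villain moves (using \(S \subseteq T\)) and split them into a walk ending at \(p_{k'}\) in \(G_{k-1}\), then use symmetry of \(T\) to join the two walks at \(p_{k'}\). The only cosmetic difference is that you apply Lemma~\ref{lem:break-icg} move-by-move directly with vertex set \(L \setminus \{p_0, \ldots, p_{k-1}\}\), whereas the paper first builds the walk in \(\ICG(\{p_k, \ldots, p_{k'}\}, \emptyset)\) via Corollary~\ref{cor:break-icg-walk} and then enlarges the vertex set with Corollary~\ref{cor:icg-walk-subset}.
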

\begin{proof}
  By definition of weakly capturable, there is a hero path
  \(p_0, p_1, \ldots, p_k, p_{k+1}, \ldots, p_{k'}\) such that
  there is a path \(P_1\) from \(v\) to \(p_{k'}\) in \(G_k\).
  By Corollary~\ref{cor:icg-walk-subset},
  \(P_1\) extends to a path \(P'_1\) in \(G_{k-1}\).

  Consider the suffix of hero moves \(p_k, p_{k+1}, \ldots, p_{k'}\)
  from the hero path.
  Because the hero piece type \(S\)
  is a subset of the villain piece type \(T\),
  we know by Corollary~\ref{cor:break-icg-walk} that
  this sequence of hero moves extends to a path \(P_2\) from \(p_k\) to \(p_{k'}\)
  in \(\ICG(\{p_k, p_{k+1}, \ldots, p_{k'}\}, \emptyset)\).
  By Corollary~\ref{cor:icg-walk-subset}, \(P_2\) extends to a path \(P_2'\) in \(G_{k-1}\).

  Concatenating \(P'_1\) with the reverse of \(P'_2\)
  (by symmetry of villain moves),
  we obtain a path from \(v\) to \(p_k\) in \(G_{k-1}\).
\end{proof}

\begin{lemma}
  \label{lem:no-interesting-cycle}
  Suppose the villain piece type \(T\) is symmetric.
  If $p_0, p_1, \ldots, p_k$ and $q_0, q_1, \ldots, q_j$
  are two different hero paths with the same start and end points,
  then at most one of them is $v$-interesting.
\end{lemma}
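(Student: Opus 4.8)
The plan is to argue by contradiction: assume both hero paths $p = p_0\dots p_k$ and $q = q_0\dots q_j$ (with $p_0=q_0$, $p_k=q_j=e$, and $p\neq q$) are $v$-interesting, and derive a contradiction by showing that $v$ is in fact strongly capturable by one of them. First I would record two structural simplifications. Since the hero vacates each square it leaves and a capture must land on an occupied square, a hero can never revisit a location, so both paths are simple; and since every $q_\ell$ is strongly capturable by $q$ by definition, I may assume $v$ lies off both paths (otherwise one of the two paths is immediately not $v$-interesting). Next, using Lemma~\ref{lem:interesting-prefix} I would truncate both paths at the start $w$ of their maximal common suffix: the truncated paths $p_0\dots w$ and $q_0\dots w$ are still $v$-interesting and still share endpoints, but now have \emph{distinct} penultimate vertices. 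So I may assume $p_{k-1}\neq q_{j-1}$.

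Then I would translate $v$-interesting into connectivity. Because $T$ is symmetric, each $\ICG(V,B)$ is undirected, so ``directed walk'' becomes ``connected.'' Weak capturability after $p$ together with Lemma~\ref{lem:weak-connected} gives that $v$ is connected to $e$ in $G_{k-1}(p)=\ICG(L\setminus\{p_0,\dots,p_{k-1}\},\emptyset)$, while non-strong-capturability at index $k-1$ gives that $v$ is \emph{not} connected to $e$ once a blocker is placed at $p_{k-1}$, i.e.\ in $H_{k-1}(p)=\ICG(L\setminus\{p_0,\dots,p_{k-1}\},\{p_{k-1}\})$. In other words $p_{k-1}$ is an essential ``bridge'': every $v$--$e$ walk in $G_{k-1}(p)$ uses a villain move that passes through $p_{k-1}$. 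The symmetric statements hold for $q$ with bridge $q_{j-1}$.

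The core of the argument would then trace a simple $v$--$e$ walk in one path's graph until it first meets the other path, say at a hero square $q_\ell$ of $q$, and invoke closure under submoves: the move on which the walk first reaches $q_\ell$ splits (via $\BreakFunc$, or simply by taking a prefix submove) into a genuine villain move landing $v$ on $q_\ell$. The intention is that this exhibits a walk in $H_{\ell-1}(q)$ from $v$ to $q_\ell$, so that by Lemma~\ref{lem:strongly-capturable-later} (targeting $q_\ell$ with $\ell>\ell-1$) the villain $v$ is strongly capturable by $q$, contradicting that $q$ is $v$-interesting. Closure under submoves is precisely what forbids a villain from ``tunneling through'' a hero square without being capturable at it.

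The hard part — and the step I expect to be the main obstacle — is that the two hero paths empty out \emph{different} sets of squares, so the two connectivity worlds are genuinely different boards sharing only the start. Corollary~\ref{cor:icg-walk-subset} is asymmetric: it lets me enlarge the vertex set and shrink the blocker, but never add a blocker or re-occupy a vacated square. Consequently a walk obtained in $G_{k-1}(p)$ (which treats all of $p$'s prefix as passable) cannot be ported verbatim into $q$'s strong-capturability graph $H_{\ell-1}(q)$, whose passable squares are $q$'s prefix and whose blocker sits at $q_{\ell-1}$. Making the first-meeting argument actually produce a walk that is legal in $H_{\ell-1}(q)$ — reconciling the mismatched empty-square sets and the one-step shift of the blocker, and checking that the relevant submoves survive — is where the real work lies. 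I would expect to lean on the reduction to distinct penultimate vertices, on the symmetry of $T$ to reverse the hero suffix (as in the proof of Lemma~\ref{lem:weak-connected}), and possibly on an induction on the combined path length to push this through.
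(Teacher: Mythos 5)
Your proposal sets up the right contradiction and correctly translates $v$-interestingness into connectivity facts ($v$--$e$ connectivity in $G_{k-1}(p)$ via Lemma~\ref{lem:weak-connected}, non-connectivity in $H_{k-1}(p)$), but the core step is missing, and you say so yourself: you never produce the walk in $H_{\ell-1}(q)$ that your ``first meeting'' plan requires. The obstruction you identify is real and is fatal to that plan, not just hard work: a walk extracted from $G_{k-1}(p)$ treats all of $p$'s prefix as vacated, and Corollary~\ref{cor:icg-walk-subset} only lets you grow the villain set and shrink the blocker set, so there is no way to port such a walk into $q$'s graph $H_{\ell-1}(q)$, whose vacated squares are $q$'s prefix and which has a blocker at $q_{\ell-1}$ --- for instance, the walk may pass straight through $q_{\ell-1}$ whenever that square lies in $p$'s prefix. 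Your reduction to distinct penultimate vertices (truncating at the maximal common suffix) does not remove this mismatch, since the two truncated paths can still diverge and reconverge arbitrarily in the middle.

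The paper's proof dissolves the ``two worlds'' problem by a different decomposition. It anchors at the \emph{first divergence} index $i$ (so $p_i = q_i$ but $p_{i+1} \ne q_{i+1}$) and, using Lemma~\ref{lem:interesting-prefix}, assumes $k$ and $j$ are minimal with $p_k = q_j$ (first reconvergence). Applying Lemma~\ref{lem:weak-connected} to the \emph{prefix} $p_0,\ldots,p_{i+1}$ puts the walk $P_1$ in $G_i$ --- a graph determined by the common prefix alone, hence identical for both paths, which is exactly what your version lacks. It then truncates $P_1$ at its first entry into $R = \{p_{i+1},\ldots,p_k,q_{i+1},\ldots,q_j\}$, say at $q_r$ (the $p_r$ case is symmetric), and --- this is the second idea your plan is missing --- rides the hero suffix $q_r,\ldots,q_j$ \emph{as villain moves} (legal since $S \subseteq T$, via Corollary~\ref{cor:break-icg-walk}; minimality of the reconvergence guarantees these moves avoid $p_{i+1},\ldots,p_{k-1}$) to reach $q_j = p_k$. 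Both pieces import into the single graph $H_{k-1}(p)$ by Corollary~\ref{cor:icg-walk-subset} used only in its legal direction, yielding a $v$-to-$p_k$ walk in $H_{k-1}(p)$. The contradiction is thus with the interestingness of $p$ \emph{itself} (strong capturability of $v$ by $p$ at index $k-1$), rather than with $q$'s interestingness at the meeting index, as you attempted; redirecting the contradiction in this way is what makes the graph bookkeeping come out right.
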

\begin{proof}
  Suppose for contradiction that both paths are $v$-interesting.
  Let $i+1$ be the first index at which the two paths diverge, so $p_i = q_i$ but $p_{i+1} \ne q_{i+1}$.
  By Lemma~\ref{lem:interesting-prefix},
  we can assume without loss of generality that \(k\) and \(j\) are the
  smallest integers $> i$ for which $p_k = q_j$.
  By Lemma~\ref{lem:interesting-prefix},
  \(p_0, p_1, \ldots, p_{i+1}\) is \(v\)-interesting.
  By Lemma~\ref{lem:weak-connected},
  there is a path \(P_1\) in \(G_i\) from \(v\) to \(p_{i+1}\).
  Truncate \(P_1\) at the first point that it enters
  \(R = \{p_{i+1}, p_{i+2}, \ldots, p_k, q_{i+1}, q_{i+2}, \ldots, q_j\}\);
  without loss of generality, suppose that this point is \(q_r\)
  for some \(i + 1 \le r \le j\).
  Thus \(P_1\) becomes a path in \(\ICG\big((L \setminus (p \cup q)) \cup \{q_r\}, R \setminus \{q_r\}\big)\).
  By Corollary~\ref{cor:icg-walk-subset}, \(P_1\) extends to a path \(P'_1\) in \(H_{k-1}(p)\).

  Consider the suffix of hero moves \(q_r, q_{r+1}, \ldots, q_j\)
  from the corresponding hero path.
  By the definition of hero path and minimality of~\(j\),
  none of these moves pass through any of \(p_{i+1}, p_{i+2}, \ldots, p_{k-1}\).
  Because the hero move type \(S\) is a subset of the villain piece type~\(T\),
  we know by Corollary~\ref{cor:break-icg-walk} that
  this sequence of hero moves extends to a path \(P_2\) from \(q_r\) to \(q_j\)
  in \(\ICG(\{q_r, q_{r+1}, \ldots, q_j\}, \{p_{i+1}, p_{i+2}, \ldots, p_{k-1}\})\).
  By Corollary~\ref{cor:icg-walk-subset}, \(P_2\) extends to a path \(P_2'\) in \(H_{k-1}(p)\).

  Finally, concatenating \(P_1'\) and \(P_2'\),
  we obtain a path in \(H_{k-1}(p)\) from \(v\) to \(q_j = p_k\).
  But then \(v\) is strongly capturable by \(p_0, p_1, \ldots, p_k\),
  contradicting that it was \(v\)-interesting.
\end{proof}

Call a hero path \defn{interesting} if it is \(v\)-interesting
for some villain location \(v\).

\begin{lemma}
  \label{lem:compute-interesting}
  Given a hero path \(p_0, p_1, \ldots, p_k\), we can compute
  in polynomial time whether it is interesting and whether it is a solution.
\end{lemma}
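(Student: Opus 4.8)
The plan is to answer the two questions---``is this hero path a solution?'' and ``is it interesting?''---by reducing each to a family of reachability computations in the immediate capture graphs already introduced, using the capturability characterizations from the preceding lemmas. The only genuinely delicate piece is deciding \emph{weak} capturability, since that quantifies over all hero path extensions.

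First I would decide whether the path is a solution. By Lemmas~\ref{lem:soln-strongly-capturable} and~\ref{lem:strongly-capturable-soln}, a hero path is a solution exactly when every villain is strongly capturable by it, so it suffices to compute the set of strongly capturable villains. For each index $i$ with $0 \le i < k$, I would build $H_i = \ICG(L \setminus \{p_0, \ldots, p_i\}, \{p_i\})$ in polynomial time and run a single backward search from $p_{i+1}$ to find all vertices admitting a directed walk to $p_{i+1}$ in $H_i$; by definition these are precisely the villains strongly capturable using index~$i$. Taking the union over all $k \le |L|$ indices yields the full strongly capturable set, and the path is a solution iff this set equals the set of all villains. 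Everything here is manifestly polynomial.

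Next I would decide whether the path is interesting, i.e.\ whether some villain $v$ is weakly capturable after $p_0, \ldots, p_k$ but not strongly capturable by it. The strongly capturable set is already computed, so the task reduces to computing the weakly capturable set and testing whether it contains anything outside the strongly capturable set. Because $T$ is symmetric, the graph $G_k = \ICG(L \setminus \{p_0, \ldots, p_k\}, \emptyset)$ is effectively undirected, so directed walks in $G_k$ coincide with connectivity; I would compute its connected components once. By definition, $v$ is weakly capturable after $p_0, \ldots, p_k$ exactly when $v$ lies in the same $G_k$-component as the endpoint $p_{k'}$ of some hero path extension. Hence, letting $E$ be the set of all locations that can occur as the endpoint of a hero path extension from $p_k$, the weakly capturable villains are exactly those whose $G_k$-component meets $E$, and the path is interesting iff some such villain is not strongly capturable.

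The main obstacle is computing $E$ in polynomial time. Extending the hero path is a single-piece capture-reachability problem: starting at $p_k$ with $p_0, \ldots, p_{k-1}$ empty, the hero captures villains one at a time using moves of type $S$ (which may be directed, e.g.\ for a pawn hero), and its own trail of vacated squares can unblock moves that were previously obstructed---even letting it cross between $G_k$-components. I would compute $E$ by a monotone least-fixpoint search that grows the set of reachable hero positions together with the set of squares the hero can have cleared en route, exploiting the fact that capturing only removes blockers and so never disables a previously available move. The subtle point I expect to be hardest is verifying that this fixpoint is \emph{exact} rather than an over-approximation: the single-piece constraint means the hero occupies one square and clears only the squares on its actual route, so I must confirm that whenever the fixpoint marks an endpoint reachable using several cleared intermediates, those intermediates really can all be vacated along one legal route (for riders they are collinear and can be captured in order, and leapers are nonblocking, so this should go through). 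Lemma~\ref{lem:weak-connected}, which forces every weakly capturable $v$ to be connected to $p_k$ in $G_{k-1}$, gives a clean necessary condition to prune candidates and cross-check the computation.
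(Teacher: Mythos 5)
Your tests for solution-hood and strong capturability coincide with the paper's. The paper's (very terse) proof builds each \(H_i\) and checks whether \(p_{i+1}\) is reachable from the villain's location; it does not even spell out the solution test, which you correctly supply as ``every villain strongly capturable,'' citing exactly the right equivalence (Lemmas~\ref{lem:soln-strongly-capturable} and~\ref{lem:strongly-capturable-soln}). For weak capturability your test also has the same shape as the paper's: the paper searches for a location \(\ell\) reachable from \(v\) in \(G_k\) and ``reachable from \(p_k\) in the graph of possible hero moves,'' which is precisely your condition that the \(G_k\)-component of \(v\) (undirected, by symmetry of \(T\)) meet the set \(E\) of possible endpoints of hero-path extensions.

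The one place you genuinely diverge is that you take seriously what the paper buries in the undefined phrase ``the graph of possible hero moves'': for a blocking hero (e.g.\ a rook or bishop hero among queens, cases Theorem~\ref{thm:two} does claim to cover), computing \(E\) is not a plain reachability search, since every hero move must be a capture, vacated squares cannot be landed on but can be jumped over, and the jumpable set grows with the hero's own trail. Your monotone fixpoint is the natural attempt, but its exactness is a real gap, which you yourself flag: maintaining a single merged set of ``cleared'' squares conflates clearances achieved along mutually incompatible self-avoiding routes, which is exactly how such fixpoints over-approximate, and your parenthetical (collinearity within one move; leapers are nonblocking) addresses blocking \emph{within} a single move, not this cross-route merging. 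So as written that step is unproven --- though in fairness it is precisely the step the paper's own one-line proof leaves unargued as well. Two mitigating observations: for the nonblocking single-step heroes (pawn, king) your computation of \(E\) is indeed a plain search over unvisited squares, matching the paper; and for the purposes of Theorem~\ref{thm:two} the exactness issue can be sidestepped entirely, because the overall algorithm only needs a polynomial-time, prefix-closed surrogate sandwiched between true weak capturability and the conclusion of Lemma~\ref{lem:weak-connected} --- for instance ``\(v\) is connected to \(p_k\) in \(G_{k-1}\),'' a single search --- since that is all Corollary~\ref{cor:minimal-interesting} and Lemma~\ref{lem:no-interesting-cycle} actually use.
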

\begin{proof}
  We can compute whether a villain location \(v\) is strongly capturable
  by constructing the graphs \(H_i\) and
  checking whether \(p_{i+1}\) is reachable from \(Q\) for each~$i$.

  We can also compute whether a villain location \(v\) is weakly capturable
  by constructing the graph \(G_k\) and searching over all locations
  for a location \(\ell\) such that \(\ell\) is reachable from \(v\) in \(G_k\)
  and also reachable from \(p_k\) in the graph of possible hero moves.
\end{proof}

\begin{theorem} \label{thm:two}
  \solochessn{\{S^1,T\}} can be solved in polynomial time
  for any two piece types $S \subseteq T$ closed under submoves
  where \(T\) is symmetric.
\end{theorem}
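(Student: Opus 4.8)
The plan is to reduce solving the puzzle to a polynomial-size search over \emph{interesting} hero paths. By Lemma~\ref{lem:hero path} we may assume any solution has its hero follow a hero path, and by Lemmas~\ref{lem:soln-strongly-capturable} and~\ref{lem:strongly-capturable-soln} a hero path $p_0, p_1, \ldots, p_k$ yields a solution if and only if every villain is strongly capturable by it, a condition we can test in polynomial time by Lemma~\ref{lem:compute-interesting}. The obstacle is that there are exponentially many hero paths, so I would instead enumerate only the interesting prefixes of a (minimal) solution, which the structural lemmas show to be few.

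Concretely, for each villain location $v$ I would grow the tree of $v$-interesting hero paths by breadth-first search. The root is the length-zero path $\langle p_0 \rangle$, and the children of a $v$-interesting path $p_0, \ldots, p_j$ are the paths $p_0, \ldots, p_j, p_{j+1}$ obtained by appending a single valid hero move of the initial board. Lemma~\ref{lem:interesting-prefix} guarantees that the set of $v$-interesting paths is prefix-closed, so it genuinely forms such a tree; and Lemma~\ref{lem:no-interesting-cycle}, which is where symmetry of $T$ is used, guarantees that for a fixed $v$ no two $v$-interesting paths share an endpoint. Hence the tree has at most $|L|$ nodes, one per possible endpoint, and by marking endpoints as visited the search explores each at most once. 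At every node I would use Lemma~\ref{lem:compute-interesting} to test each one-move extension both for being a solution (in which case I output it) and for being $v$-interesting (in which case I enqueue it). Since there are at most $|L|$ villains, at most $|L|$ tree nodes per villain, at most $|L|$ extensions per node, and each test is polynomial, the whole procedure runs in polynomial time.

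For correctness, soundness is immediate because any path output is verified to be a solution. For completeness, suppose a solution exists; by Lemma~\ref{lem:hero path} there is a hero-path solution, and hence a minimal one $p_0, \ldots, p_k$. If $k = 0$ the hero is already alone, which I would detect as a base case. If $k \ge 1$, Corollary~\ref{cor:minimal-interesting} supplies a villain $v$ for which every proper prefix $p_0, \ldots, p_j$ with $j < k$ is $v$-interesting; in particular the search tree for this $v$, started at $\langle p_0 \rangle$ and advancing one hero move at a time, reaches $p_0, \ldots, p_{k-1}$, and appending the final hero move to $p_k$ produces the solution, which the extension test detects.

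The main obstacle is the node-count bound on the search tree: a priori a prefix-closed family of paths can be exponentially large, and the crux is Lemma~\ref{lem:no-interesting-cycle}'s collapse of this family to at most one path per endpoint (which is exactly why the theorem needs $T$ symmetric). The remaining work is routine bookkeeping: confirming that extension generation, the strong- and weak-capturability tests, and the visited-endpoint deduplication are all polynomial, and handling the trivial base case of a board consisting of the hero alone.
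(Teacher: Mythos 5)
Your proposal is correct and follows essentially the same route as the paper's own proof: search the tree of interesting hero paths, using Corollary~\ref{cor:minimal-interesting} for completeness, Lemma~\ref{lem:no-interesting-cycle} (where the symmetry of $T$ enters) to bound the tree to at most one $v$-interesting path per endpoint, and Lemma~\ref{lem:compute-interesting} for the polynomial-time tests. Your write-up is in fact somewhat more explicit than the paper's terse argument --- spelling out the per-villain breadth-first search, the endpoint-deduplication bound, the base case $k=0$, and the completeness argument via a minimal solution --- but the underlying algorithm and its justification are identical.
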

\begin{proof}
  By Corollary~\ref{cor:minimal-interesting},
  every prefix of a minimal solution is necessarily interesting.
  Furthermore, by Lemma~\ref{lem:no-interesting-cycle},
  there is at most one interesting hero path ending at each location.

  Our algorithm finds a minimal solution
  by searching over the tree of all interesting hero paths.
  By Lemma~\ref{lem:compute-interesting},
  we can test whether a hero path is interesting in polynomial time,
  as well as whether it solves the instance.
  There are only polynomially many locations, interesting paths,
  and moves to try, so this search takes polynomial time.
\end{proof}

\section{Hardness Results}
\label{Hardness Results}

In this section, we prove that \solochessn{S} is NP-complete
for any set $S$ of two distinct standard Chess pieces.
These reductions also work when one of the piece types is denoted
\defn{special}, restricting that there is only one copy of that piece
and that it must be the last piece on the board.
This is a generalization of the one-king restriction $\oneking$;
we use the same notation $T^1$ for other piece types~$T$.
But all of our hardness reductions also work when piece type $T$
is not constrained and could be captured.

We divide the section into two subsections based on the source of reduction.
In Section~\ref{Hamiltonian Path Reductions},
we give multiple reductions from Hamiltonian Path,
where the special piece needs to visit the other pieces.
In Section~\ref{SAT Reductions}, we reduce from a special case of 3SAT,
where the special piece sets variables and satisfies clauses.

\subsection{Hamiltonian Path Reductions}
\label{Hamiltonian Path Reductions}

All of these reductions are from Hamiltonian Path
in maximum-degree-3 grid graphs with a specified start vertex and
possibly a specified end vertex, each of degree~$1$,
or generalizations thereof
(e.g., sometimes we do not need the grid-graph or degree-$1$ property).
This problem is NP-hard by a slight modification to \cite{Degree3GridHamPath}
described in Appendix~\ref{app:start vertex}.


The first reduction, when the special piece is a knight,
is particularly easy:

\begin{theorem} \label{thm:short range}
  For any $T \in \{\pawn,\king\}$,
	\solochessn{\{\oneknight,T\}} and \solochessn{\{\knight,T\}} are NP-hard.
\end{theorem}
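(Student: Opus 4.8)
The plan is to reduce from Hamiltonian Path with a specified (degree-$1$) start vertex $s$ in grid graphs, the NP-hard problem cited above. The key structural observation is that $\pawn$ and $\king$ are \emph{short range}: a piece of either type, sitting at a location $\ell'$, can capture a piece at $\ell$ only when $\ell$ lies among the eight neighbors of $\ell'$, i.e.\ when $\ell$ and $\ell'$ are within Chebyshev distance~$1$ (``king-adjacent''). I therefore intend to place all pieces so that \emph{no two piece-locations are king-adjacent}. Then no $T$-piece can ever capture anything---neither another $T$-piece nor the knight---so every capture in the entire puzzle is a knight capture, the knight is never itself captured, and any solution consists of the knight capturing all $n$ copies of $T$ one at a time. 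Since the knight is a nonblocking leaper, each capture moves it to an occupied square exactly one knight move away; hence a solution is exactly a sequence of locations, starting at the knight's initial square and visiting every piece-location along knight moves. In graph terms, a solution is a Hamiltonian path, starting at the knight, in the graph whose vertices are the piece-locations and whose edges join knight-move pairs.

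The remaining task is to lay out the vertices of the input graph $G$ as board locations so that this knight-move graph is isomorphic to $G$. For this I use the linear embedding $\phi(i,j) = (i+2j,\,2i-j)$, that is, multiplication by $M = \left(\begin{smallmatrix}1&2\\2&-1\end{smallmatrix}\right)$, placing a $T$-piece at $\phi(v)$ for every vertex $v \neq s$ of $G$ and the knight at $\phi(s)$. The crucial algebraic fact is $M^2 = 5I$, so $M$ acts as $\sqrt5$ times an orthogonal map and scales every distance by $\sqrt5$. Consequently distinct vertices of $G$ map to distinct locations at distance at least $\sqrt5 > \sqrt2$, so no two piece-locations are king-adjacent, justifying the preceding paragraph. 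Moreover, since $\phi(u)-\phi(v) = M(u-v)$ and $M^{-1} = M/5$, a short check of the eight knight-move vectors shows that $M\delta$ is a knight move precisely when $\delta \in \{(\pm1,0),(0,\pm1)\}$: the four ``good'' moves $(\pm1,2),(\pm2,-1)$ have integer preimages $M^{-1}k = Mk/5$, while the other four do not. Hence $\phi(u)$ and $\phi(v)$ are a knight move apart if and only if $u$ and $v$ are orthogonally adjacent in the grid, i.e.\ joined by an edge of $G$. Thus $\phi$ is an isomorphism from $G$ onto the knight-move graph on the piece-locations.

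Putting these together gives the equivalence: the constructed puzzle is solvable if and only if the knight can capture all $T$-pieces, if and only if the knight-move graph has a Hamiltonian path from $\phi(s)$, if and only if $G$ has a Hamiltonian path from $s$. This argument is uniform across all four cases. It used only that $T$ is short range, so it covers both $T = \pawn$ and $T = \king$. It applies verbatim to the uncapturable knight $\oneknight$ (which is forced to be the last surviving piece, exactly as happens when it makes every capture) and to the ordinary knight $\knight$ (which here can never be captured, so the same reasoning forces it to make every capture and survive). The embedding is computable in polynomial time and uses a board of polynomial size, and all the problems are trivially in NP, so this establishes NP-hardness (indeed NP-completeness) of both \solochessn{\{\oneknight,T\}} and \solochessn{\{\knight,T\}}.

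I expect the only nonroutine step to be verifying the two geometric properties of $\phi$---that grid adjacency corresponds \emph{exactly} to knight adjacency with no spurious knight edges, and that no two images are king-adjacent---but both reduce to the identity $M^2 = 5I$ together with the finite case analysis of the eight knight moves. Once those are in hand, the short-range argument and both directions of the reduction follow immediately, which is why this is the ``particularly easy'' reduction.
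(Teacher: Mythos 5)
Your proposal is correct and takes essentially the same route as the paper: the paper also reduces from Hamiltonian Path in maximum-degree-$3$ grid graphs with a specified start vertex (Lemma~\ref{lem:start vertex}), rotates the grid by $\arctan\frac{1}{2}$ and scales so that grid edges become knight moves --- exactly your map $M$ with $M^2 = 5I$ --- and observes that the pawns or kings can never capture, forcing the lone knight to trace a Hamiltonian path from $s$. One cosmetic slip: the four knight vectors with integer preimage under $M^{-1}=M/5$ are $\pm(1,2)$ and $\pm(2,-1)$, not $(\pm 1,2),(\pm 2,-1)$ as literally written, though your stated conclusion $\delta\in\{(\pm1,0),(0,\pm1)\}$ is correct, and your explicit verification that there are no spurious knight edges and no king-adjacencies is detail the paper delegates to its figure.
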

\begin{proof}
  The reduction is from Hamiltonian Path in grid graphs with a given start vertex $s$ (Lemma~\ref{lem:start vertex}).
  Figure~\ref{fig:PawnKnightNP} shows an example of the reduction. We rotate the grid graph by $\arctan \frac{1}{2}$ and scale it so that adjacent vertices form valid knight moves. We place pawns or kings at the grid-graph vertices, except for $s$ where we place a knight (the only knight in the construction). The pawns or kings cannot make any captures, so they are immobile. Thus the knight must capture all of the other pieces; such a sequence of captures corresponds to a Hamiltonian path starting at~$s$.
\end{proof}

\begin{figure}
	\centering
	\subcaptionbox{Pawns and one knight}{\includegraphics[scale=0.5]{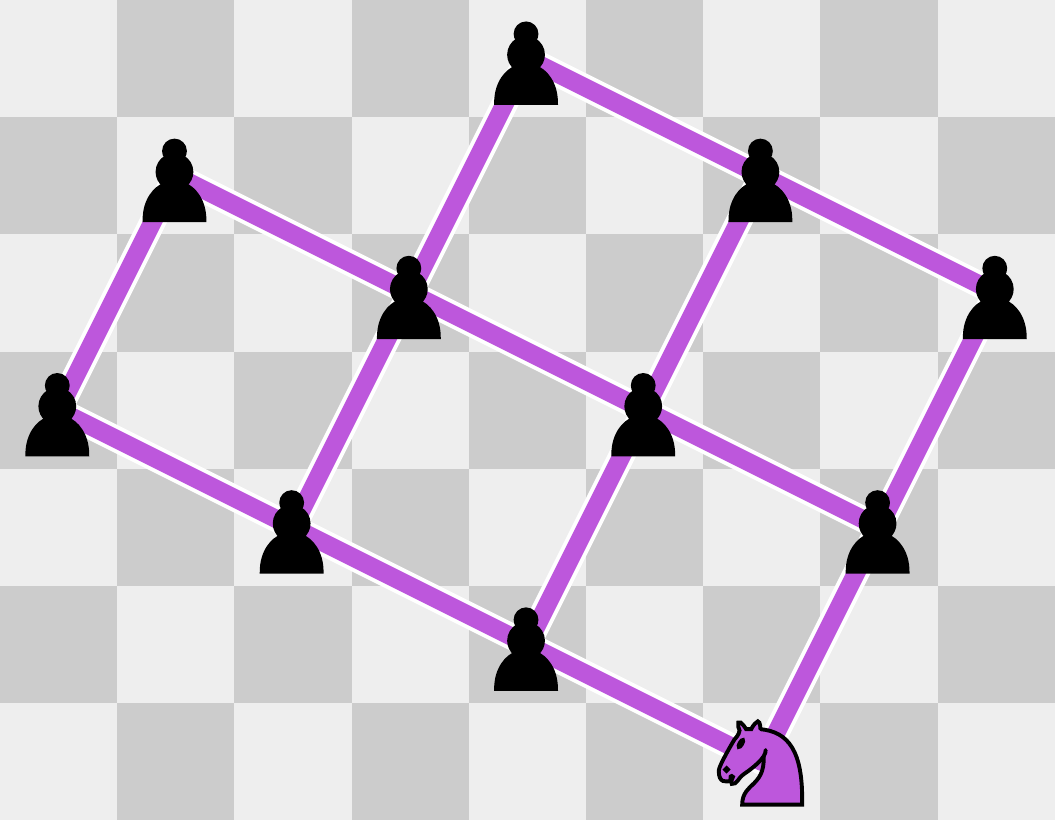}}\hfil
	\subcaptionbox{Kings and one knight}{\includegraphics[scale=0.5]{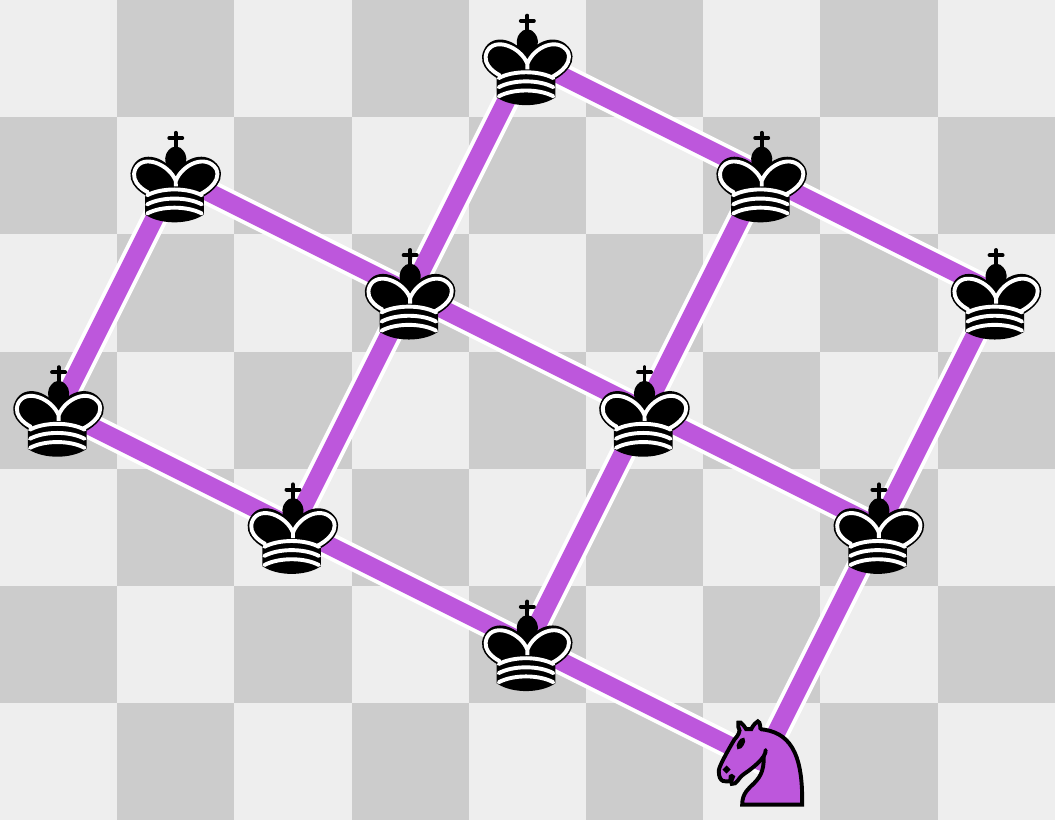}}
  \caption{Placing pawns or kings and one knight to simulate Hamiltonian Path in a grid graph with a specified start vertex.}
	\label{fig:PawnKnightNP}
\end{figure}

Next we give a reduction from Hamiltonian Path to \solochessn{\{\onerook, \pawn\}}.
This reduction forms the basis for proving NP-hardness of several
other piece combinations by scaling and/or rotation.
Like the previous reduction,
the main idea behind these constructions is to place many pieces of one type
so that they have no available moves, and a single piece of the special type
which must then capture all the other pieces, requiring a Hamiltonian path.

\begin{theorem}\label{thm:oneRook}
  \solochessn{\{\onerook, \pawn\}} and \solochessn{\{\rook,\pawn\}} are NP-hard.
\end{theorem}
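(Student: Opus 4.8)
The plan is to reduce from Hamiltonian Path in maximum-degree-3 grid graphs with a specified degree-1 start vertex $s$ (Lemma~\ref{lem:start vertex}). Given such a graph $G$, I would build a board with a single $\rook$ at (a scaled copy of) $s$ and a $\pawn$ at every other vertex, placing the pieces so that no two occupied cells are ever diagonally adjacent. Because a pawn can only capture on a forward diagonal, every pawn is then immobile, and the lone rook is never threatened; hence the rook is automatically the sole surviving piece whether or not it is declared uncapturable. This makes \solochessn{\{\onerook,\pawn\}} and \solochessn{\{\rook,\pawn\}} behave identically on the instance (and both are trivially in NP), so it suffices to show that the board is solvable exactly when the rook can capture every pawn, i.e.\ when $G$ has a Hamiltonian path from $s$.

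The key design goal is a \emph{rook-visibility} placement: position the vertices so that two cells share a row or column with no piece between them precisely when the corresponding vertices are adjacent in $G$. A naive placement at the literal grid coordinates fails --- for instance, several degree-1 ``teeth'' of a comb graph would share a chessboard row, letting the rook sweep them along that row even though $G$ has no Hamiltonian path --- so the placement must separate non-adjacent features onto distinct rows and columns (this is what the figure would illustrate). With such a placement the forward direction is immediate: given a Hamiltonian path $s = u_0, u_1, \dots, u_{n-1}$, the rook visits the $u_i$ in order, and each step $u_i \to u_{i+1}$ is a legal capture since $u_i$ and $u_{i+1}$ are adjacent and hence collinear with only empty cells between them; this captures every pawn.

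The hard part is the reverse direction. Given a solution, Lemma~\ref{lem:solution-no-cycle} shows the captures are organized so that following the unique outgoing move from each cell leads to the rook's final cell, yielding an ordering $\pi_0 = s, \pi_1, \dots, \pi_{n-1}$ that visits every vertex once. The obstacle is that capturing a pawn empties its cell, so a later rook move may slide across that now-empty cell and capture a vertex that is collinear with but \emph{not} adjacent to the source --- a shortcut. Such shortcuts are unavoidable in principle: a degree-3 vertex forces three collinear vertices on some line, and once the middle one is captured the outer two become mutually visible. I would handle this by observing that any shortcut step $\pi_i \to \pi_{i+1}$ corresponds to a path in $G$ through vertices captured earlier (consecutive collinear present vertices of a grid graph are genuinely adjacent), and then arguing that the ordering can be rerouted into one whose consecutive vertices are all adjacent, i.e.\ a genuine Hamiltonian path.

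Establishing that this rerouting always succeeds --- equivalently, that the rook's extra blocking/unblocking power never lets it clear a non-Hamiltonian instance --- is the crux of the proof. The rook-visibility placement is exactly what makes it go through: every shortcut is confined to pass only through already-visited vertices, and the dead-end structure of low-degree vertices (each degree-1 leaf other than $s$ must be a terminal of the capture sequence) is preserved, so a board with too many leaves, such as the comb, remains unsolvable despite the shortcuts. Finally, since the construction uses a single rook that is never capturable, the identical argument yields NP-hardness for both \solochessn{\{\onerook, \pawn\}} and \solochessn{\{\rook,\pawn\}}.
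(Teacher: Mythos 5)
There is a genuine gap at what you yourself identify as the crux: the claim that any capture ordering using shortcuts ``can be rerouted'' into a Hamiltonian path is false for the one-pawn-per-vertex placement you propose, and no choice of rook-visibility placement can rescue it. As you observe, a degree-$3$ vertex $c$ must have two of its neighbors, say $x$ and $y$, collinear with $c$ on opposite sides (same-side placement would block one of the two adjacencies), so after $c$ is captured the rook can slide from $x$ through the empty cell to $y$. Now take the claw: a grid graph with center $c$ at $(0,0)$ and leaves at $(\pm 1,0)$ and $(0,1)$, with start vertex $s=(0,1)$. This graph has three leaves, hence no Hamiltonian path at all, yet in every rook-visibility placement the rook solves the board via $s \to c \to x \to y$, the last move being exactly the forced shortcut. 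The shortcut does pass only through the already-visited vertex $c$, but the visiting order $s,c,x,y$ cannot be ``rerouted'' into a Hamiltonian path because none exists --- so the reverse direction of your reduction maps a NO instance to a solvable board, and your observation about already-visited vertices does not imply the conclusion you need. Lemma~\ref{lem:solution-no-cycle} gives you the ordering, but nothing in your argument controls what the extra unblocking power of the rook can achieve.

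The paper's proof avoids this problem by abandoning the one-piece-per-vertex idea entirely: each graph vertex becomes a gadget of \emph{seven} pawns forming a wide height-$2$ rectangle, each vertex gets its own dedicated pair of rows, and each edge gets its own dedicated column containing exactly two pawns (see Figure~\ref{fig:PawnRookNP}). Because rows and columns are never shared between distinct vertices or edges, no cross-gadget shortcuts exist, and the combinatorial argument is local: the rook can enter a vertex gadget only through its three edge pawns, and in a maximum-degree-$3$ graph entering a gadget twice strands the rook, so it must enter and exit each gadget exactly once, sweeping the rectangle clockwise or counterclockwise and choosing an exit edge. Note also that the paper reduces from the variant with \emph{both} a specified start $s$ and a specified end $t$ of degree $1$ (Lemma~\ref{lem:start vertex} provides this), which is what makes the ``cannot re-enter'' argument close; your proposal uses only a specified start. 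To repair your proof you would need a vertex gadget of this kind --- some mechanism that spends the rook's blocking/unblocking freedom inside structures where it is harmless --- rather than a rerouting lemma, which the claw shows cannot hold.
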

\begin{proof}
  The reduction is from Hamiltonian Path in a maximum-degree-$3$ graph
  with a given start vertex $s$ and destination vertex $t$,
  both of degree~$1$ (Lemma~\ref{lem:start vertex}).

  Refer to Figure~\ref{fig:PawnRookNP}.
  Each vertex other than \(s\) and \(t\) consists of seven pawns arranged among two rows,
  with four pawns marking the corners of a very-wide height-$2$ rectangle,
  and three pawns on the bottom row of the rectangle
  which each form half of an edge to another vertex.
  More precisely, if we label each vertex with an integer $1$ through $|V|$
  and each edge with an integer $1$ through $|E|$, then
  vertex $i$ places its four corner pawns at positions
  $(i,3i),(i,3i+1),(4|V|+i,3i),(4|V|+i,3i+1)$;
  and edge $k$ connecting vertices $i$ and $j$
  adds pawns at the locations $(|V|+k,3i),(|V|+k,3j)$,
  in the bottom rows of vertex $i$'s and vertex $j$'s rectangles respectively.
  Thus every column has zero or two pawns,
  and each row has between zero and five pawns.
  For the start vertex $s$ with incident edge \(k_s\), we add a single rook at \((k_s, 1)\).
  Similarly for the destination vertex $t$ with incident edge \(k_t\), we add a single pawn at \((k_t, 0)\).

  In this construction,
  none of the pawns can make any captures,
  so only the rook can ever move.
  The rook can only enter or exit a vertex via its three edge pawns,
  and thus can enter a vertex at most once: entering, exiting, and
  entering again would prevent ever exiting again in a maximum-degree-$3$
  graph, preventing us from getting to $t$ (which itself has degree $1$
  so it cannot ever be exited).
  In order to visit all the pawns,
  the rook must therefore enter and exit each vertex other than \(s\) and \(t\) exactly once.
  Once the rook enters a vertex
  it must therefore visit all seven pawns of the vertex.
  By going clockwise or counterclockwise around the vertex rectangle,
  the rook can choose to leave along either of the two other edges
  incident to the vertex.
  Thus the rook capturing all pawns from its starting location in $s$
  if and only if there is a Hamiltonian path from $s$ to~$t$.

  \begin{figure}
	\centering
	\includegraphics[scale=0.5]{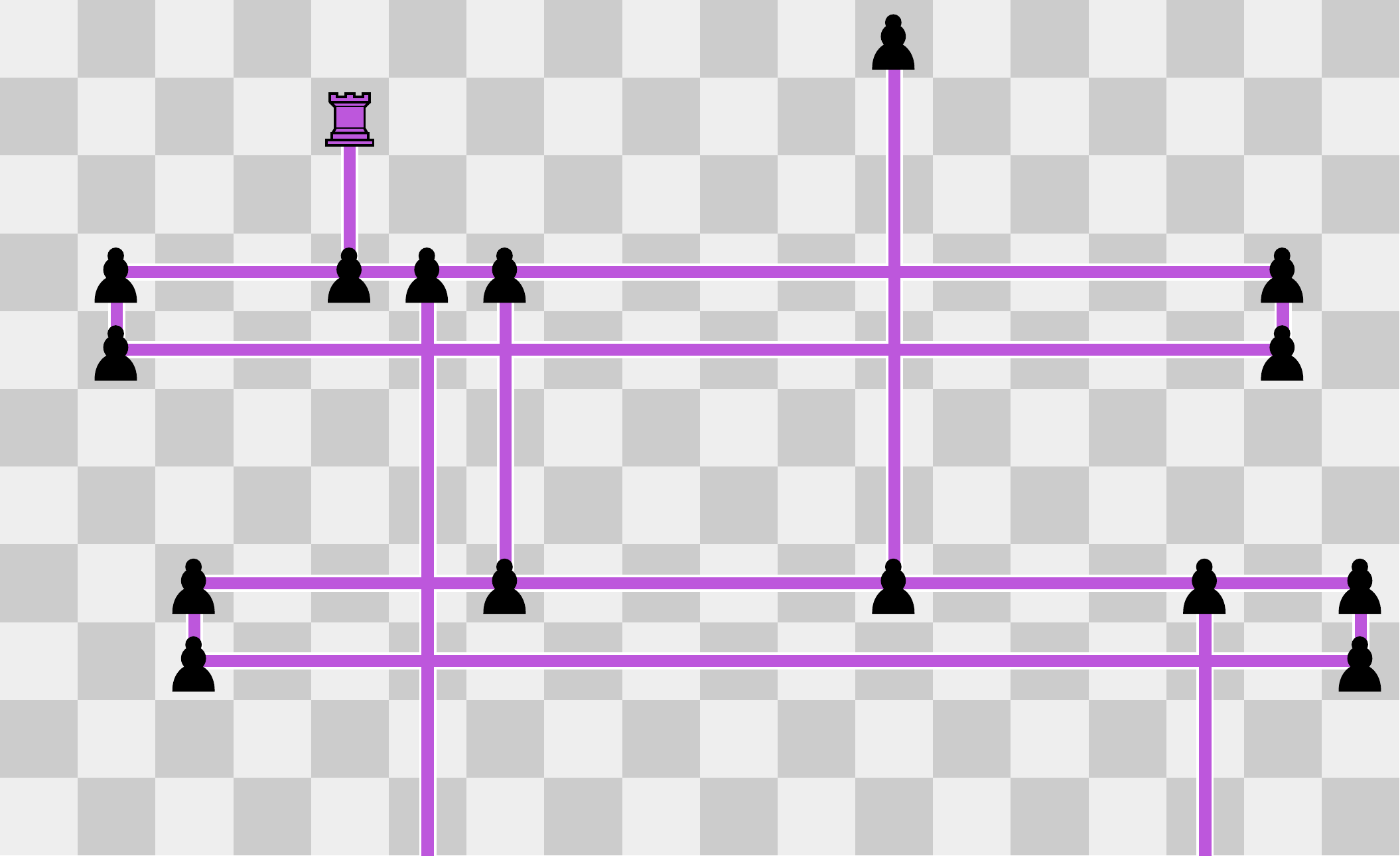}
    \caption{Placing pawns and one rook to simulate Hamiltonian Path in a maximum-degree-$3$ graph with specified start and end vertices.
      Two vertices connected by an edge are shown, as well as the start and destination vertices.}
	\label{fig:PawnRookNP}
  \end{figure}
\end{proof}

\begin{corollary}\label{cor:oneRookKnightKing}
  For any $T \in \{\king,\knight\}$,
  \solochessn{\{\onerook, T\}} and \solochessn{\{\rook, T\}} are NP-hard.
\end{corollary}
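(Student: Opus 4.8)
The plan is to leverage the reduction from Theorem~\ref{thm:oneRook} essentially verbatim, modifying only the piece type used as the immobile "wall" pieces. Recall that in the rook reduction, the entire construction depended on the non-rook pieces being \emph{immobile}: the pawns could never capture anything, so only the rook could move, and its capture sequence was forced to correspond to a Hamiltonian path. The key observation is that the argument never used any property of pawns other than their immobility in the given configuration. So the natural approach is to replace each pawn by a piece of type $T \in \{\king, \knight\}$, arranged so that these pieces remain immobile.

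First I would examine how the pawns were placed and check that kings (respectively knights) placed at the same locations are also unable to make any captures. For kings, this requires that no two of the placed pieces are a king's-move apart; for knights, that no two are a knight's-move apart. The geometry in Figure~\ref{fig:PawnRookNP} uses a very-wide rectangle (width $4|V|$) with pieces spaced at least one unit apart vertically and the horizontal edge-columns separated by the vertex-indexing offsets, so adjacent occupied locations are generically far apart. The main work is to verify, or to adjust by an appropriate scaling factor, that under the target piece type no immobilizing constraint is violated. Concretely, I would scale the coordinates of the entire construction by a large enough integer constant $c$ (depending only on $T$) so that any two distinct occupied locations are farther apart than any single move of a king or knight; since a king moves one step and a knight moves within a bounded $2 \times 1$ window, choosing $c \ge 3$ suffices to guarantee that the $T$-pieces are mutually non-capturing and hence immobile.

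With immobility established, the rest of the argument transfers with no change: the unique rook (or one of many rooks, in the non-special case) must capture every other piece, it can enter each vertex-gadget at most once in a maximum-degree-$3$ graph, and traversing the rectangle clockwise or counterclockwise lets it choose its exit edge, so a complete capture sequence exists exactly when a Hamiltonian path from $s$ to $t$ exists. This works identically for both the special case $\onerook$ (a single uncapturable rook that must survive) and the unconstrained case $\rook$, exactly as in Theorem~\ref{thm:oneRook}, because the rook is the only mobile piece and must be the last one standing regardless. Membership in NP is immediate (as noted in the introduction, all these problems are trivially in NP), so NP-hardness yields the full result.

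The step I expect to be the main obstacle is verifying the immobility of the wall pieces for knights, since knight moves are non-local in a subtler way than king moves: two pieces placed on the same row but in columns differing by exactly one or two, or on rows differing by one or two with small column offsets, could form a knight move. I would need to check the edge-column placements $(|V|+k, 3i)$ against the corner placements and against each other, since edges indexed by consecutive $k$ sit in adjacent columns. Rather than case-analyze the original unscaled coordinates, the cleanest route is the uniform scaling argument above: scaling by $c \ge 3$ makes every pair of distinct occupied locations differ by at least $3$ in some coordinate, which exceeds the reach of any knight move and trivially kills all king moves, rendering the detailed geometric check unnecessary.
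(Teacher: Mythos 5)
Your proposal is correct and matches the paper's own proof essentially verbatim: the paper likewise scales the Theorem~\ref{thm:oneRook} construction by a constant factor in both dimensions (it uses $4$; your $c \ge 3$ also works) and replaces each pawn with a piece of type~$T$, observing that the scaling immobilizes kings and knights while leaving rook moves unaffected. Nothing further is needed.
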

\begin{proof}
  Scale the construction from Theorem~\ref{thm:oneRook}
  by a factor of $4$ in both dimensions,
  and replace each pawn with a piece of type~$T$.
  This scaling prevents kings or knights from making captures,
  without affecting rook moves.
\end{proof}

\begin{corollary}\label{cor:oneRookBishop}\label{cor:oneQueen}\label{cor:oneQueenBishop}
  For any $S \in \{\rook,\queen\}$ and
  $T \in \{\pawn, \king, \knight, \bishop\}$.
  \solochessn{\{S^1, T\}} and \solochessn{\{S,T\}} are NP-hard.
\end{corollary}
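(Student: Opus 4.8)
The plan is to build on the rook--pawn construction of Theorem~\ref{thm:oneRook} and its scaled variant in Corollary~\ref{cor:oneRookKnightKing}, reducing from the same Hamiltonian Path problem (Lemma~\ref{lem:start vertex}). Recall that in that construction every villain is placed so that only the single special rook can move, and the rook must capture all villains, which forces a Hamiltonian path. The two genuinely new difficulties here are that (i)~bishop villains are long-range and could capture one another along a shared diagonal, and (ii)~a queen special piece has diagonal moves that might shortcut the intended row/column traversal. I would resolve both with a single affine rescaling of the board.

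Concretely, I would lay out the pieces of the Theorem~\ref{thm:oneRook} construction, then scale the $x$-coordinates by a factor $N$ and the $y$-coordinates by a factor $N+1$, where $N$ is a polynomially bounded integer larger than every coordinate difference appearing in the construction (and larger than~$2$). Because each coordinate is scaled injectively and monotonically, two pieces share a row (resp.\ column) and lie in the same relative order after rescaling if and only if they did before; hence the rook-connectivity structure, the blocking relations within rows and columns, and the entire entering/exiting-a-vertex argument are preserved verbatim, and rook or queen moves along rows and columns still capture exactly the intended pieces. The key point is that this rescaling destroys every diagonal: if two distinct piece locations satisfied $N|\delta x| = (N+1)|\delta y|$ for their original differences $\delta x,\delta y$, then $N(|\delta x|-|\delta y|)=|\delta y|$ would force $|\delta y|$ to be a positive multiple of $N$, contradicting $|\delta y| < N$. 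So no two of the polynomially many piece locations, including the special piece's start square, lie on a common diagonal.

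This single modification handles all the required cases at once. The large factors $N$ and $N+1$ push any two distinct pieces far apart, so the short-range villains ($\pawn$, $\king$, $\knight$) still cannot capture anything; for instance, the $x$-displacement of a knight move is $\pm1$ or $\pm2$, never a multiple of $N$, so no knight move lands on another piece. The absence of shared diagonals makes bishop villains immobile as well, so for every $T \in \{\pawn,\king,\knight,\bishop\}$ the villains are immobile and only the special piece ever moves. For the queen special piece I would argue that its diagonal moves are never useful throughout the solution, not merely at the start: every square the queen comes to occupy is a square that originally held a piece, so the set of occupied squares is always a subset of the original piece locations, no two of which share a diagonal. Thus the queen can never make a diagonal capture and behaves exactly as a rook, and the reduction from Theorem~\ref{thm:oneRook} carries through unchanged.

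Finally, since the unique special piece survives regardless of whether it is declared uncapturable (the immobile villains can never capture it), both \solochessn{\{S^1,T\}} and \solochessn{\{S,T\}} inherit NP-hardness for every $S \in \{\rook,\queen\}$ and $T \in \{\pawn,\king,\knight,\bishop\}$ (membership in NP being trivial as noted earlier). I expect the main obstacle to be the queen argument: one must confirm that no diagonal shortcut ever becomes available during play, which is precisely what the ``occupied squares are a subset of original locations'' observation secures, and that the chosen scaling simultaneously annihilates all diagonals while leaving every intended rook move, and the full traversal logic, intact.
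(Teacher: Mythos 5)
Your proposal is correct and follows essentially the same route as the paper: both scale the Theorem~\ref{thm:oneRook} construction so that no two piece locations share a diagonal (making the queen behave as a rook and bishop villains immobile) while spacing pieces out enough to immobilize short-range villains. The only difference is cosmetic --- you use one anisotropic scaling by coprime factors $N$ and $N+1$ with a divisibility argument, where the paper stretches the $x$-direction by $h+1$ and then scales by an extra factor of $4$ for the $\king$/$\knight$ cases --- and your ``occupied squares remain a subset of original locations'' observation for the queen is exactly the justification the paper leaves implicit.
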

\begin{proof}
  Scale the construction from Theorem~\ref{thm:oneRook} by \(h+1\)
  in the $x$ direction, where $h$ is the height of the original construction.
  (In other words, add $h$ empty columns between every consecutive
  pair of columns of pieces.)
  This scaling spaces out the pieces far enough so that
  no diagonal captures are possible.
  If $S = \queen$, replacing the rook with a queen
  does not add any additional diagonal moves.
  Finally we replace each pawn with a piece of type~$T$.
  If $T \in \{\pawn,\bishop\}$, these pieces cannot move
  because there are no diagonal moves.
  If $T \in \{\king,\knight\}$, we scale by an additional factor of $4$
  in both dimensions (as in Corollary~\ref{cor:oneRookKnightKing})
  to guarantee these pieces have no moves.
\end{proof}

\begin{corollary}\label{cor:oneBishop}
  For any $T \in \{\pawn, \king, \knight\}$,
  \solochessn{\{\onebishop, T\}} and \solochessn{\{\bishop, T\}}
  are NP-hard.
\end{corollary}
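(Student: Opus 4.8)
The plan is to reuse the pawn-and-rook construction of Theorem~\ref{thm:oneRook} and rotate it by $45^\circ$ so that the rook's horizontal and vertical moves become the bishop's two diagonal directions. Concretely, I would apply the linear map $(x,y) \mapsto (x+y,\, x-y)$ to every piece location, replacing the single rook by a bishop and (for $T \in \{\king,\knight\}$) replacing every pawn by a piece of type~$T$. This map sends each row (constant $y$) and each column (constant $x$) bijectively to an antidiagonal (constant $x-y$) and a diagonal (constant $x+y$), so a rook move between two collinear pieces becomes a bishop move between their images; conversely, the transformed diagonals $X-Y=\text{const}$ and $X+Y=\text{const}$ correspond exactly to original rows $2y=\text{const}$ and columns $2x=\text{const}$, so the only pairs of images sharing a bishop diagonal are images of pairs that shared a row or column originally. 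Betweenness along lines is preserved, so the bishop's blocking behavior matches the rook's, the transformed bishop-capture graph is isomorphic to the original rook-capture graph, and the Hamiltonian-path correspondence of Theorem~\ref{thm:oneRook} carries over verbatim.

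The one thing that must be rechecked is that the pieces of type $T$ stay immobile after the rotation, and this is where the real work lies. The map $(x,y)\mapsto(x+y,x-y)$ sends the integer lattice into the even sublattice $\{(a,b): a+b \text{ even}\}$. A $\knight$ move changes $x+y$ by an odd amount, so it always leaves the even sublattice; since every piece lies on that sublattice, no knight can ever capture, and the $T=\knight$ case needs no further modification. For $T \in \{\pawn,\king\}$ the situation is more delicate, because a pawn's or king's diagonal captures correspond, under the inverse map, precisely to unit horizontal or vertical steps in the original construction, and the original construction does contain pieces one unit apart (for instance the two corner pawns $(i,3i)$ and $(i,3i+1)$ of a vertex, and certain row neighbors at $x$-distance~$1$). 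To destroy all such unit adjacencies I would first scale the entire Theorem~\ref{thm:oneRook} construction by a factor of $2$ in both coordinates before applying the rotation; this makes all coordinate differences even, so no two pieces sit at horizontal or vertical distance~$1$, and hence no pawn or king has any legal capture. Scaling does not affect the rook-to-bishop simulation, since it only lengthens rook moves while preserving collinearity and betweenness.

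The main obstacle, then, is exactly this immobility check: the same $45^\circ$ rotation that makes the bishop behave like a rook simultaneously turns the original grid's axis-aligned adjacencies into diagonal adjacencies, which are precisely the directions in which pawns and kings capture. The parity argument disposes of knights for free, and the preliminary factor-of-$2$ scaling disposes of pawns and kings; one should also note that the axis-aligned king captures $(\pm1,0)$ and $(0,\pm1)$ can never occur, since those displacements are odd while all pieces lie on the even sublattice. Finally, because the construction contains a single bishop and every other piece is immobile, the bishop itself is never capturable, so the reduction establishes hardness of both $\solochessn{\{\onebishop,T\}}$ and $\solochessn{\{\bishop,T\}}$ for each $T \in \{\pawn,\king,\knight\}$ (membership in NP being trivial as noted earlier).
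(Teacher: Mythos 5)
Your proposal is correct and follows the paper's approach: the paper likewise proves this corollary by rotating the Theorem~\ref{thm:oneRook} construction by $45^\circ$ and scaling by $\sqrt 2$ (exactly your map $(x,y)\mapsto(x+y,x-y)$), replacing the rook with a bishop and the pawns with pieces of type~$T$. The one genuine difference is how immobility of the $T$-pieces is handled, and here your treatment is more careful than the paper's. The paper adds an extra factor-of-$4$ scaling only for $T \in \{\king,\knight\}$ (citing Corollary~\ref{cor:oneRookKnightKing}) and says nothing further for $T=\pawn$ --- but as you correctly observe, the unit orthogonal adjacencies of Theorem~\ref{thm:oneRook} (e.g.\ the corner pawns $(i,3i)$ and $(i,3i+1)$) become diagonal adjacencies after rotation, which are precisely pawn-capture displacements; so the rotated-but-unscaled construction does admit pawn captures, and the paper's prose glosses over this (its figure presumably embodies some spacing, but the written proof does not justify it). Your pre-rotation scaling by $2$ cleanly repairs this: all inter-piece displacements become even in both coordinates, and since every pawn, king, and knight capture has an odd coordinate displacement, all three piece types are immobile at once. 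Your parity observation that images lie on the sublattice $\{(a,b): a+b \text{ even}\}$, which knight moves (and axis-aligned king steps) always exit, is also a nice economy --- it disposes of the knight case with no scaling at all, where the paper spends a factor of $4$. Both routes yield valid reductions; yours is the tighter write-up.
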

\begin{proof}
  Rotate the construction from Theorem~\ref{thm:oneRook}
  by \(45^\circ\) and scale by $\sqrt 2$; see Figure~\ref{fig:BishopPawnNP}.
  This transformation turns rook moves into bishop moves:
  vertices that were orthogonally adjacent are now diagonally adjacent.
  Replace the rook with a bishop,
  and replace each pawn with a piece of type~$T$.
  For $T \in \{\king, \knight\}$, we scale by an additional factor of $4$
  in both dimensions (as in Corollary~\ref{cor:oneRookKnightKing})
  to guarantee that these pieces have no moves.
\end{proof}

\begin{figure}
	\centering
	\includegraphics[scale=0.5]{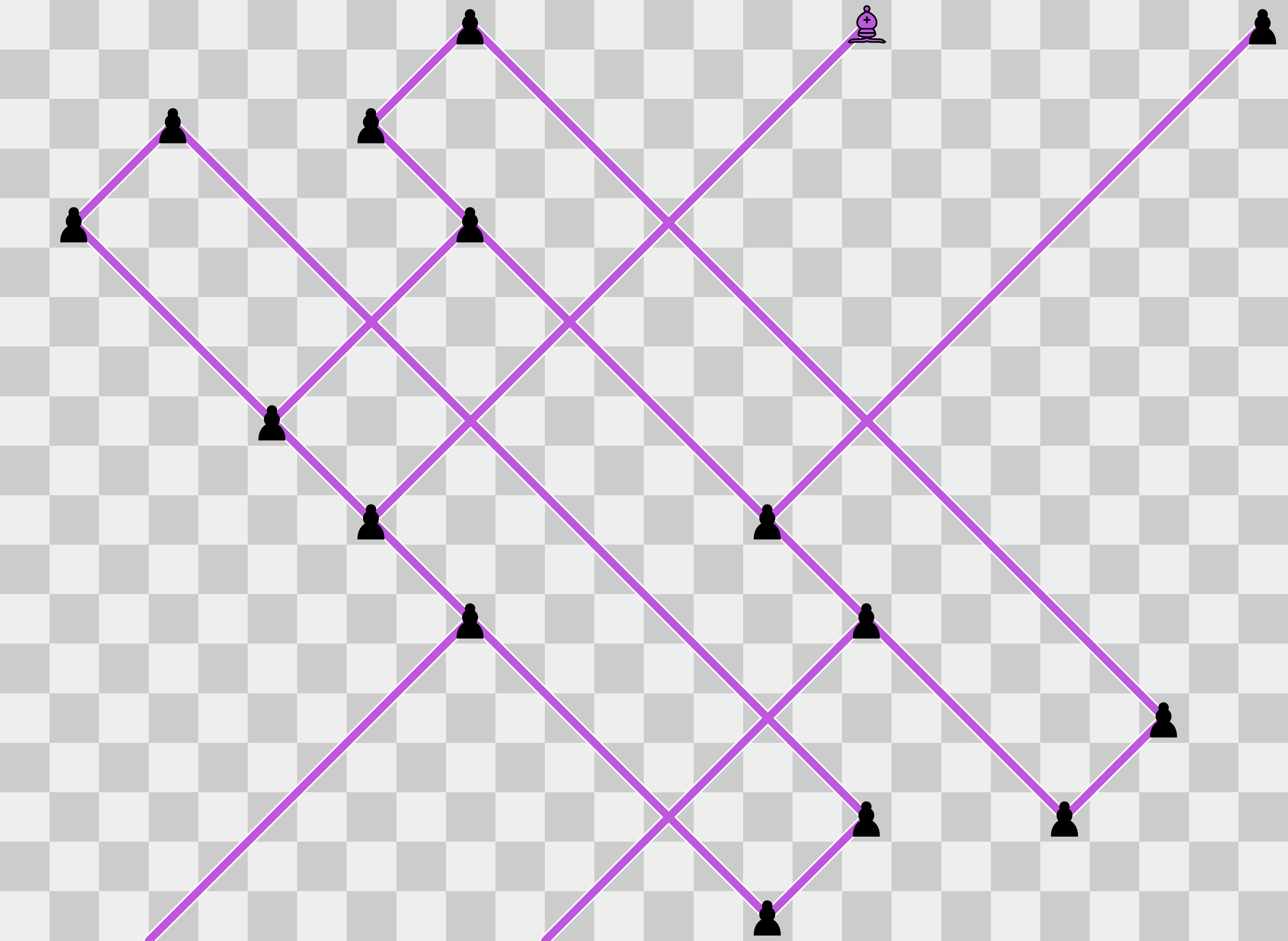}
	\caption{Placing pawns and one bishop to simulate Hamiltonian Path in a maximum-degree-$3$ graph with specified start and end vertices, by rotating the construction in Figure~\ref{fig:PawnRookNP}.}
	\label{fig:BishopPawnNP}
\end{figure}
	
\begin{corollary}\label{cor:oneBishopRook}\label{cor:oneQueenRook}
  For any $T \in \{\bishop, \queen\}$,
	\solochessn{\{T^1, \rook\}} and \solochessn{\{T, \rook\}} are NP-hard.
\end{corollary}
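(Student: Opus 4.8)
The plan is to reuse the rook-traversal construction of Theorem~\ref{thm:oneRook}, converting the single mobile piece into the special bishop or queen by the $45^\circ$ rotation idea of Corollary~\ref{cor:oneBishop}, while making every villain a rook. As in all the Hamiltonian-path reductions, I want the special piece to be forced to capture each villain, with a winning capture sequence corresponding exactly to a Hamiltonian path from $s$ to~$t$. The two things I must establish beyond Theorem~\ref{thm:oneRook} are that the villain rooks are immobile and that the extra moves a queen has over a bishop cannot be exploited.

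The main obstacle is villain immobility. Rooks attack along entire rows and columns, so—unlike the pawns, kings, and knights of the earlier corollaries—they cannot be frozen merely by spacing them apart; two villain rooks are mutually immobile precisely when they share neither a row nor a column. I would arrange this by first scaling the construction of Theorem~\ref{thm:oneRook} in the $y$-direction by a large integer factor $M$, chosen larger than the width of the construction (the maximum horizontal distance between two pieces, which is $O(|V|+|E|)$ and hence polynomial), and only afterward rotating by $45^\circ$ and scaling by $\sqrt 2$ to restore integer coordinates. The composite sends a piece at $(x,y)$ to $(x-My,\;x+My)$. Two piece locations then share a board row iff they have equal $x+My$, and a board column iff they have equal $x-My$; either equality rearranges to $x_1-x_2 = \pm M(y_1-y_2)$, and since $M$ exceeds every horizontal distance while $|y_1-y_2|\ge 1$ whenever $y_1\neq y_2$, it forces $y_1=y_2$ and then $x_1=x_2$. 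Hence no two piece locations share a row or column, so every villain rook has no legal capture and is immobile.

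It remains to check correctness. The composite map is affine, so it preserves collinearity and betweenness, and the rotation sends the original horizontal and vertical segments (the old rook moves) to the two families of board diagonals. Thus a bishop started at the former rook's location can make exactly the moves the rook made: two villains lie on a common diagonal with nothing between them iff the corresponding pawns were rook-reachable in Theorem~\ref{thm:oneRook}, and they are captured in the same order, so the special piece clears the board iff there is a Hamiltonian path from $s$ to~$t$. Because the villains never move, the special piece is automatically the last surviving piece, which settles both \solochessn{\{\onebishop,\rook\}} and \solochessn{\{\bishop,\rook\}}. For a queen the identical construction works: a queen has the bishop's diagonal moves plus orthogonal moves, but an orthogonal move could only capture a villain sharing the queen's current row or column. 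Since the queen always occupies either the start or a previously captured villain's square, and no two piece locations share a row or column, no surviving villain ever lies on the queen's row or column; its orthogonal moves are therefore useless and it is forced to behave exactly like the bishop. This yields \solochessn{\{\onequeen,\rook\}} and \solochessn{\{\queen,\rook\}} as well. The single new ingredient relative to Corollary~\ref{cor:oneBishop}—and the step I expect to require the most care—is the anisotropic pre-scaling that simultaneously makes the rook villains non-attacking and neutralizes the queen's orthogonal captures.
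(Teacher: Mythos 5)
Your proposal is correct and takes essentially the same approach as the paper: the paper likewise applies an anisotropic pre-scaling to the Theorem~\ref{thm:oneRook} construction (stretching the $x$-axis by $h+1$, where $h$ is the height, whereas you stretch the $y$-axis by $M$ exceeding the width---symmetric variants of the same trick) so that no two pieces are diagonally aligned, and then rotates by $45^\circ$ so that no two pieces share a row or column, rendering the villain rooks immobile and the queen's orthogonal moves useless. Your write-up simply makes explicit the verifications (the $x_1-x_2=\pm M(y_1-y_2)$ computation, the preservation of blocking along the new diagonals, and the neutralization of the queen's orthogonal captures) that the paper's terser proof leaves implicit.
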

\begin{proof}
  First scale the construction from Theorem~\ref{thm:oneRook} by \(h+1\)
  in the $x$ direction, as in Corollary~\ref{cor:oneRookKnightKing}.
  Then rotate by $45^\circ$ and scale by~$\sqrt 2$,
  as in Corollary~\ref{cor:oneBishop}.
  The initial scaling eliminates diagonal alignments
  in the original construction,
  thus preventing pieces from aligning orthogonally in the rotated version.
  Then replace the rook with a piece of type~$T$,
  and replace each pawn with a rook.
\end{proof}

The next reduction is also from Hamiltonian Path,
but does not use the same framework as the previous reductions. 

\begin{theorem} \label{thm:king-pawn}
	\solochessn{\{\oneking,\pawn\}} and
  \solochessn{\{\king,\pawn\}} are NP-hard.
\end{theorem}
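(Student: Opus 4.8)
The plan is to reduce from Hamiltonian Path in a maximum-degree-$3$ graph with a specified degree-$1$ start vertex $s$ (Lemma~\ref{lem:start vertex}), building a board whose single \king{} (the uncapturable hero) starts at the image of $s$ and must capture every \pawn{} in order to be the last piece standing. Because the construction uses exactly one king, the instances of \solochessn{\{\oneking,\pawn\}} and \solochessn{\{\king,\pawn\}} coincide, so hardness of either yields both. The high-level geometry mirrors the rook gadget of Theorem~\ref{thm:oneRook} (Figure~\ref{fig:PawnRookNP}): each vertex becomes a horizontal ``bar'' of pawns carrying one \emph{half-edge pawn} for every incident edge, and each edge becomes a vertical corridor joining the bars of its two endpoints. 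The essential difference, and the reason a new framework is needed, is that the king is \emph{short-range}: where the rook crossed an edge in a single long move over empty cells, the king must walk along a solid chain of pawns, so every corridor must be subdivided into king-adjacent pawns. I would scale the whole layout so that distinct gadgets are far apart, ensuring that the only diagonal pawn adjacencies---hence the only possible pawn captures and the only ``extra'' diagonal king moves---occur at the corners where a corridor meets a bar.

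The forward direction is routine. Given a Hamiltonian path from $s$, the king walks bar by bar, entering each bar at one end, sweeping across it to capture all of its pawns (including the half-edge pawns of \emph{all} incident edges, exactly as the rook does), and leaving along the corridor of the next path edge. A straight horizontal or vertical chain of pawns contains no capturable pair, since a pawn captures only diagonally upward; thus along any straight stretch the pawns are immobile and the king simply consumes them one at a time. As in Theorem~\ref{thm:oneRook}, all half-edge pawns are captured because every bar is visited, so the edges not used by the path leave no stray pieces.

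The reverse direction is the crux, and the main obstacle is that here, unlike in the previous reductions, the pawns are \emph{not} globally immobile: at every corner the two cells flanking the turn are diagonally adjacent, so the lower pawn can capture the upper one, and the king also has a diagonal shortcut available. The key lemma I would prove is that, with suitable gadget design, all such extra moves are \emph{self-defeating}. A pawn capture moves a pawn one row upward onto a cell from which it can no longer be removed---no piece lies below it to capture it, and the king can reach it only by abandoning a now-disconnected stretch of pawns---while a diagonal king shortcut strands either the skipped corner pawn or the remainder of a corridor. Formally I would argue this with a monotone invariant: pawn captures strictly increase height while the uncaptured pawns must stay ``reachable'' by the king, and I would check that every corner, bar-end, and corridor gadget breaks this reachability once an unintended move is played. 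It follows that any solution consists solely of the intended king walk, whose capture sequence visits every bar exactly once.

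It then remains to verify that the degree-$3$ vertex gadget forces the king to enter and leave each bar exactly once, so that the sequence of bars visited is a Hamiltonian path in the input graph; this is the short-range analogue of the ``enter once, exit once'' argument for the rook's rectangle, where degree $3$ together with the obligation to sweep the entire bar leaves the king only the choice of which two incident corridors serve as entrance and exit. I expect the bulk of the work to lie in the self-defeating-move lemma above, since it must be established uniformly across all the gadgets and is precisely what replaces the global immobility that made the earlier reductions easy.
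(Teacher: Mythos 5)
Your framework---make every pawn immobile and force the single king to capture everything---breaks in the forward direction, at exactly the point where the rook construction does not transfer to a short-range piece. In Theorem~\ref{thm:oneRook} each edge is an \emph{empty} corridor: the edge contributes only two half-edge pawns, each sitting inside a vertex row, so the rook sweeps them up when it visits the endpoint vertices, and edges not on the Hamiltonian path leave nothing behind. Your king must instead walk along a solid chain of pawns in every corridor, so for any edge \emph{not} used by the Hamiltonian path the interior corridor pawns are stranded: by your own design goal they are immobile (a straight vertical chain admits no pawn captures), and the king cannot detour through an unused corridor without entering a bar a second time, which your own degree-$3$ argument must forbid. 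Your claim that ``the edges not used by the path leave no stray pieces'' is therefore false for your construction, and no solution exists even when a Hamiltonian path does. There are further unresolved problems downstream---the ``self-defeating move'' lemma is only a plan, a two-row bar of pawns is full of diagonal pawn adjacencies (so pawns there are not immobile), and since every king move must be a capture, the king cannot backtrack over squares it has emptied to exit a bar at an interior corridor---but the stranded corridors are already fatal.

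The paper's proof resolves this tension in the opposite way: instead of fighting pawn mobility, it exploits it. The grid graph is rotated $45^\circ$ and scaled so that edges become \emph{diagonal} chains of pawns; since pawns capture diagonally upward, every pawn on such a chain has a legal capture, and all pawns on edges and vertices not used by the king can eliminate themselves by capturing upward (processed from the bottom up, using the unlimited-moves assumption). The only immobile pieces are one carefully placed ``green'' pawn per vertex, positioned on a square of the opposite color so it has no legal capture; these are what force the king to visit every vertex. The reverse direction then needs only two facts: green pawns can never move, so the king must reach each vertex's green pawn, and maximum degree $3$ ensures each vertex is visited at most once, so the king's route is a Hamiltonian path. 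If you want to salvage your approach, the lesson is that for a short-range hero the off-path pieces must be able to clean themselves up, which is precisely the mechanism your global-immobility invariant rules out.
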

\begin{proof}
The reduction is from Hamiltonian Path in maximum-degree-3 grid graphs with a specified start vertex $s$ (Lemma~\ref{lem:start vertex}).
Figure~\ref{fig:1KingPawn} shows an example of the construction.
First, we rotate the given grid graph by $45^\circ$ and scale it by
$3 \sqrt 2$, placing pawns at the vertices and along the edges.
Pawns at vertices are drawn blue.
Adjacent vertex pawns are three spaces apart diagonally,
and have a diagonal chain of two (black) pawns between them.
All of these pawns forming the grid graph are placed on the light squares
of the board.

Assume pawns capture upward.
Now, for each vertex with at least one upward incident edge,
we place a green pawn on an adjacent dark square:
if there are two upward incident edges, then we place it above the vertex,
and otherwise we place it below the vertex.
Note that vertices with only downward incident edges do not get a green pawn;
in this case, we color the vertex pawn green.
The result is that every vertex has exactly one green pawn,
which has no legal captures, while all other pawns have legal captures.
All of the other pawns (black or blue in Figure~\ref{fig:1KingPawn})
have at least one legal capture.
The king replaces the green pawn on the starting vertex $s$ of the
Hamiltonian Path problem (the bottommost vertex in Figure~\ref{fig:1KingPawn}).

\begin{figure}
	\centering
	\includegraphics[scale=0.5]{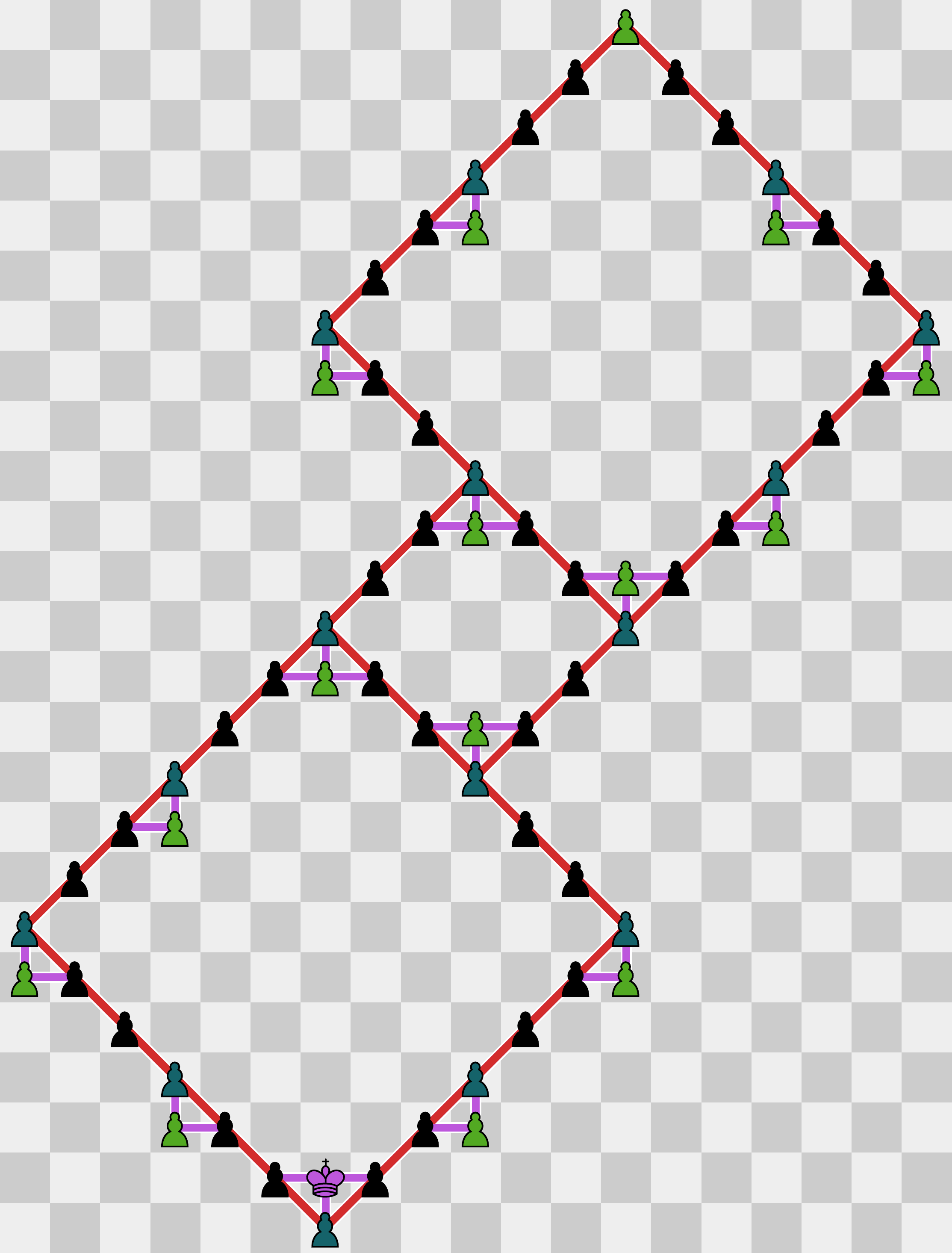}
  \caption{Placing pawns and one king to represent Hamiltonian Path in a maximum-degree-$3$ grid graph with a specified start vertex. Blue pawns are at grid-graph vertices, while green pawns have no capture move.}
	\label{fig:1KingPawn}
\end{figure}

If there is a Hamiltonian path, then we claim that
there is a valid capture sequence that leaves only the king at the end.
The king will capture along the Hamiltonian path,
making sure to divert and capture the green pawn at each vertex.
Before the king moves, though,
any pawns on squares which are not part of the Hamiltonian path
capture a pawn above them, starting from the bottommost pawns.
These pawns always have legal captures because, by our construction,
every pawn either has a legal capture or is a green pawn
that is part of the Hamiltonian path.
After these captures happen,
the only remaining pawns are those on the Hamiltonian path,
so the king can simply walk along that path, taking all the pawns.

Conversely, if there is a valid solution to the Solo Chess problem,
then we claim that there must exist a Hamiltonian path
in the underlying grid graph.
Because the green pawns can never move,
the king must at some point capture every green pawn.
Thus the king's path starts at the king's initial position,
passes through pawns, never captures the same square twice,
and captures every green pawn.
Because the graph has maximum degree $3$,
the king can visit each vertex at most once,
and because every vertex has a green pawn adjacent to it,
the king's path must be able to visit each vertex at least once.
Thus the king's path provides a Hamiltonian path in the graph.
(This argument works even without the $\oneking$ restriction because,
if the king gets captured before reaching all the green pawns,
the puzzle cannot be solved.)
\end{proof}

At this point, we have completed our proof that \solochessn{S} is NP-complete
for any two standard Chess pieces:

\begin{corollary}
  \solochessn{S} is NP-complete for any set $S$ of two distinct
  standard Chess pieces.
\end{corollary}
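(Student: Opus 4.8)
The plan is to combine the trivial membership in NP with the NP-hardness results already established piece-by-piece, checking that these collectively cover all $\binom{6}{2}=15$ unordered pairs of distinct standard Chess pieces drawn from $\{\pawn,\king,\knight,\rook,\bishop,\queen\}$.

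Membership in NP is immediate. A candidate solution is a sequence of capture moves; since every move removes exactly one piece and the board starts with $n$ pieces, a solution has at most $n-1$ moves, so it is a certificate of polynomial size. Each move can be verified in polynomial time (check that both endpoints are occupied, the intermediate locations are empty, and the move belongs to the relevant piece type), and we simply confirm that a single piece remains at the end. Hence every problem considered is in NP, and it remains only to establish NP-hardness for each pair.

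For hardness I would verify each of the fifteen pairs against a preceding result, invoking in every case the \emph{unconstrained} variant $\{S,T\}$ (each theorem and corollary proves both this and the $S^1$-constrained version). The five pairs containing a pawn are handled by Theorem~\ref{thm:king-pawn} for $\{\pawn,\king\}$, Theorem~\ref{thm:short range} for $\{\pawn,\knight\}$, Theorem~\ref{thm:oneRook} for $\{\pawn,\rook\}$, Corollary~\ref{cor:oneBishop} for $\{\pawn,\bishop\}$, and Corollary~\ref{cor:oneQueen} for $\{\pawn,\queen\}$. Of the remaining pairs, those containing a king are given by Theorem~\ref{thm:short range} for $\{\king,\knight\}$, Corollary~\ref{cor:oneRookKnightKing} for $\{\king,\rook\}$, Corollary~\ref{cor:oneBishop} for $\{\king,\bishop\}$, and Corollary~\ref{cor:oneQueen} for $\{\king,\queen\}$; those containing a knight but no pawn or king by Corollary~\ref{cor:oneRookKnightKing} for $\{\knight,\rook\}$, Corollary~\ref{cor:oneBishop} for $\{\knight,\bishop\}$, and Corollary~\ref{cor:oneQueen} for $\{\knight,\queen\}$; and the three pairs among the long-range pieces by Corollary~\ref{cor:oneRookBishop} for $\{\rook,\bishop\}$, Corollary~\ref{cor:oneQueenRook} for $\{\rook,\queen\}$, and Corollary~\ref{cor:oneQueenBishop} for $\{\bishop,\queen\}$. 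Since all fifteen pairs are accounted for, every such instance is NP-hard, and together with membership in NP this yields NP-completeness.

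The work here is organizational rather than mathematical, so the main thing to watch is exhaustiveness of the case split: I must confirm that the several overlapping corollaries — some of which share label aliases, such as \ref{cor:oneQueen} coinciding with \ref{cor:oneRookBishop} and \ref{cor:oneQueenBishop} — genuinely supply each remaining pair, and that each cited statement applies to the unrestricted pair and not merely to the version with a single uncapturable piece.
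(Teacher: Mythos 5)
Your proposal is correct and takes essentially the same approach as the paper: trivial NP membership plus a case analysis citing the unconstrained variants of the earlier hardness results, which you verify covers all fifteen pairs (the paper does the same by successive elimination of $\rook$/$\queen$, then $\bishop$, then $\knight$). Your explicit enumeration is in fact slightly more careful than the paper's own wording, which cites Corollaries~\ref{cor:oneQueen} and~\ref{cor:oneQueenBishop} for all pairs containing a rook or queen even though these are two labels for the \emph{same} corollary (with $T \in \{\pawn,\king,\knight,\bishop\}$), so the pair $\{\rook,\queen\}$ actually requires Corollary~\ref{cor:oneQueenRook} --- which you cite correctly.
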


\begin{proof}
  Corollary~\ref{cor:oneQueen} and~\ref{cor:oneQueenBishop}
  together cover all cases where $\rook \in S$ or $\queen \in S$,
  leaving $S \subseteq \{\pawn, \king, \knight, \bishop\}$.
  Corollary~\ref{cor:oneBishop} covers all remaining cases
  where $\bishop \in S$,
  leaving $S \subseteq \{\pawn, \king, \knight\}$.
  Theorem~\ref{thm:short range} covers all remaining cases
  where $\knight \in S$, leaving $S \subseteq \{\pawn, \king\}$.
  Theorem~\ref{thm:king-pawn} covers the final case $S = \{\pawn, \king\}$.
\end{proof}

\subsection{SAT Reductions}
\label{SAT Reductions}

Next we turn to the uncapturable restriction for some of the piece types
not covered by previous reductions.
The reductions in this section are from a special case of 3SAT%
\footnote{By \defn{3SAT}, we mean CNF Satisfiability with \emph{at most}
  three variables per clause, rather than \emph{exactly} three variables
  per clause (E3SAT).}
with at most two occurrences of each literal,
which was shown to be NP-hard by Tovey \cite[Theorem~2.1]{tovey3sat3}.

It is convenient here to reduce from a planar version of 3SAT.
De~Berg and Khosravi \cite[Theorem~1]{sided3sat}
prove NP-hardness of \defn{Planar Monotone 3SAT}.
In this variation of 3SAT, the graph with
a vertex for each clause, a vertex for each variable,
edges between each clause and the variables it contains,
and a Hamiltonian cycle passing through all the variables,
must have a planar embedding.
Furthermore, in this embedding, all clauses containing positive literals
must be placed inside the Hamiltonian cycle, while all clauses containing
negative literals must be placed outside it;
in particular, every clause either consists entirely of positive literals
or consists entirely of negative literals.
(A more precise name for this problem is
``Sided Var-Linked Planar Monotone 3SAT'' \cite{ivan-thesis}.)
Equivalently, one can imagine arranging the variables along a line
in the plane, with all positive clauses (and their edges)
on one side of the line, and all negative clauses on the other side.

We also want the condition that each literal occurs at most twice.
In Appendix~\ref{app:sided-3sat-(1,2)}, we show that the combined problem ---
Planar Monotone 3SAT with at most two occurrences of each literal --- remains
NP-hard.

\begin{theorem} \label{thm:king-knight}
  For any $T \in \{\pawn,\king\}$,
	\solochessn{\{T^1,\knight\}} is NP-hard.
\end{theorem}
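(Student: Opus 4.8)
The goal is to prove NP-hardness of \solochessn{\{T^1,\knight\}} for $T \in \{\pawn,\king\}$, where the special piece (pawn or king) must be the last surviving piece. This is a SAT reduction, so I will plan a gadget-based construction from Planar Monotone 3SAT with at most two occurrences of each literal.

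The plan is to reduce from the planar monotone variant of 3SAT established above. The key structural idea is that the special piece $T^1$ (a pawn or king) acts as a traveling selector that walks through the construction: first setting variable truth values by choosing which of two branches to traverse, and then verifying that every clause is satisfied. The knights serve as the immobile ``filler'' pieces that can only be captured, never move on their own (by spacing them so no knight-to-knight capture is possible), so that the special piece must personally capture every knight. Because every knight must be captured, the special piece's capture sequence must form a single connected traversal covering all knights, and the one-piece survivor constraint forces the special piece itself to be that traversal's endpoint.

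First I would lay out the planar embedding from Planar Monotone 3SAT: variables along a central line, positive clauses above, negative clauses below. I would design a \emph{variable gadget} as a small loop or fork that the special piece enters and must exit, where the two possible routes through the gadget correspond to setting the variable true or false; crucially, choosing one route should make the literal knights on one side capturable (``used up'') while leaving the other side to be captured only if the corresponding clause is independently satisfied. I would design a \emph{clause gadget} containing a knight (or small cluster) that can be captured only if at least one of its incident literal wires has been activated by the variable choice; the at-most-two-occurrences condition keeps the wire fan-out bounded so that $T$'s short-range moves (single-step king moves or single diagonal pawn moves) suffice to route signals. The monotone/planar/sided structure ensures positive and negative wiring never cross, so the whole thing embeds on a grid with knights placed at lattice points forming legal knight-distance steps along the intended paths.

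The main obstacle will be engineering the gadgets so that the \emph{only} legal complete capture sequences correspond exactly to satisfying assignments, using the very limited mobility of a pawn (which moves only diagonally forward) or king (one step any direction). The forward-only pawn is the harder case: I expect to need careful vertical layout so the pawn's monotone upward progress naturally encodes the one-directional pass from variables to clauses, possibly treating the pawn and king cases with a shared grid skeleton but slightly different wire geometry, analogous to how Figure~\ref{fig:PawnKnightNP} handles both pawns and kings. I would verify both directions: a satisfying assignment yields a capture sequence by having the special piece route through the chosen literal branches and then mop up all satisfied clause knights, ending alone; and conversely, any valid solution forces a consistent assignment because each variable gadget admits only the two intended traversals and each clause knight's capturability certifies satisfaction. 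Membership in NP is immediate, completing NP-completeness.
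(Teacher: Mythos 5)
Your framing matches the paper's proof at the top level --- same source problem (Planar Monotone 3SAT with at most two occurrences per literal), same ``hero as traveling selector'' walking bottom-to-top and choosing one of two branches per variable gadget --- but your central mechanical decision is wrong, and it is the opposite of what the construction actually needs. You space the knights so that \emph{no knight-to-knight capture is possible}, making them immobile fillers that the hero must personally capture. If the knights are immobile, the variable-choice semantics collapse: the knights on the \emph{unchosen} branch of each variable gadget can never be captured by anyone, so a solvable instance would force the hero to traverse both branches of every variable, eliminating the assignment choice entirely. (For the pawn this is outright impossible, since its motion is monotone upward; for the king, the paper explicitly observes that going up one branch and back down the other leaves it stuck.) Moreover, with immobile villains there is no mechanism by which a variable choice can ``activate'' a distant clause knight, as your clause gadget requires: capturability of an immobile piece can only be certified by the hero physically walking to it, which is either always or never possible independent of the assignment --- and a pawn can never reach the negative clauses placed below the variable line at all once it has passed them. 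With every villain frozen, solvability degenerates into a covering/Hamiltonian-path condition on the hero's route (the regime of Theorem~\ref{thm:short range}, where the roles are reversed and the \emph{hero} is the long-range knight), not satisfiability.

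The missing idea is that the knights must be mobile, and that knight-to-knight captures perform the gating and the cleanup. In the paper's construction (Figure~\ref{fig:1KingKnight}), each clause is a noncrossing path of knights attached to its literal knights; the knights along such a chain capture one another, collapsing the chain onto a literal-knight square lying on the hero's route \emph{before} the hero arrives, after which the hero captures the single surviving knight in passing. This is precisely what makes a clause chain removable if and only if at least one of its literal knights is visited. Likewise, the knights of the unchosen branch and any unused literal knights are knight-connected to \emph{both} branches of the gadget, so they can collapse onto whichever path the hero chooses. Without villain mobility neither the clause gating nor the unchosen-branch cleanup exists, so your gadgets cannot be realized as described; repairing the proposal essentially requires abandoning the immobility assumption and rebuilding the gadgets around knight-chain collapses, i.e., the paper's construction.
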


\begin{proof}
  We reduce from Planar Monotone 3SAT with at most two occurrences
  of each literal (Lemma~\ref{lem:sided-3sat-(1,2)}).
  Refer to Figure~\ref{fig:1KingKnight}.

  For each variable \(x_i\), we construct a variable gadget
  consisting of two pawn-traversable paths.
  Each path contains two \defn{literal knights} (drawn green)
  corresponding to literals for that variable:
  the literal knights on the left path correspond to the positive literal \(x_i\),
  while the literal knights on the right path correspond to the negative literal \(\neg x_i\).
  These literal knights are connected together by noncrossing paths of knights corresponding to the 3SAT clauses.
  The stipulation that each literal occurs at most twice ensures that we have enough literal knights to construct the 3SAT instance.

  Assume pawns capture upward.
  The lone pawn or king must traverse the board from bottom to top,
  visiting each variable gadget in turn.
  At each variable gadget, it is presented with a choice of two paths,
  allowing it to visit either the positive literal knights
  or the negative literal knights for that variable, but not both.
  (A king could go up one literal path and down the other literal path,
  but then it would get stuck, unable to reach a final knight at the top
  of the construction.)
  In order to capture the knights used in clauses,
  at least one literal knight from each clause must be visited by the
  pawn or king.
  All other knights, including literal knights not used in a clause,
  are connected to both sides of the gadget,
  so that they may be captured regardless of which path is taken.
  Thus the Solo Chess instance can be solved if and only if
  the 3SAT instance is satisfiable.
\end{proof}

\begin{figure}
  \centering
  $\vcenter{\hbox{\includegraphics[scale=0.42]{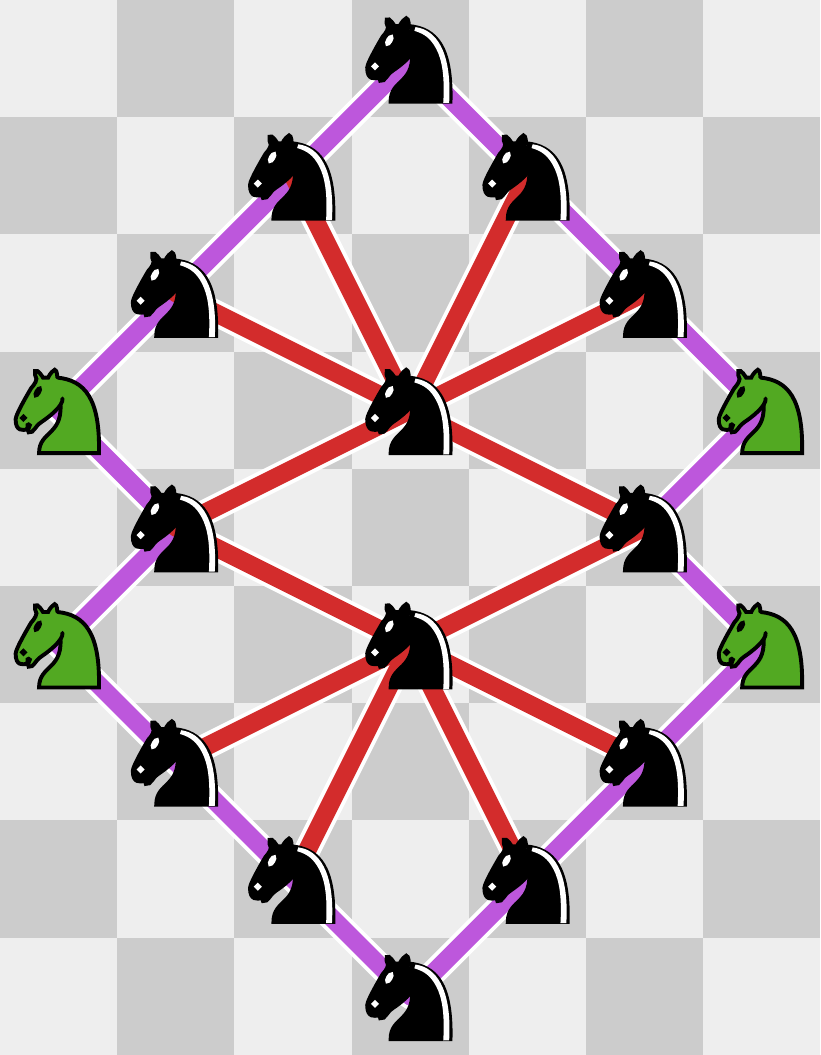}}}$\hfil
  $\vcenter{\hbox{\includegraphics[scale=0.42]{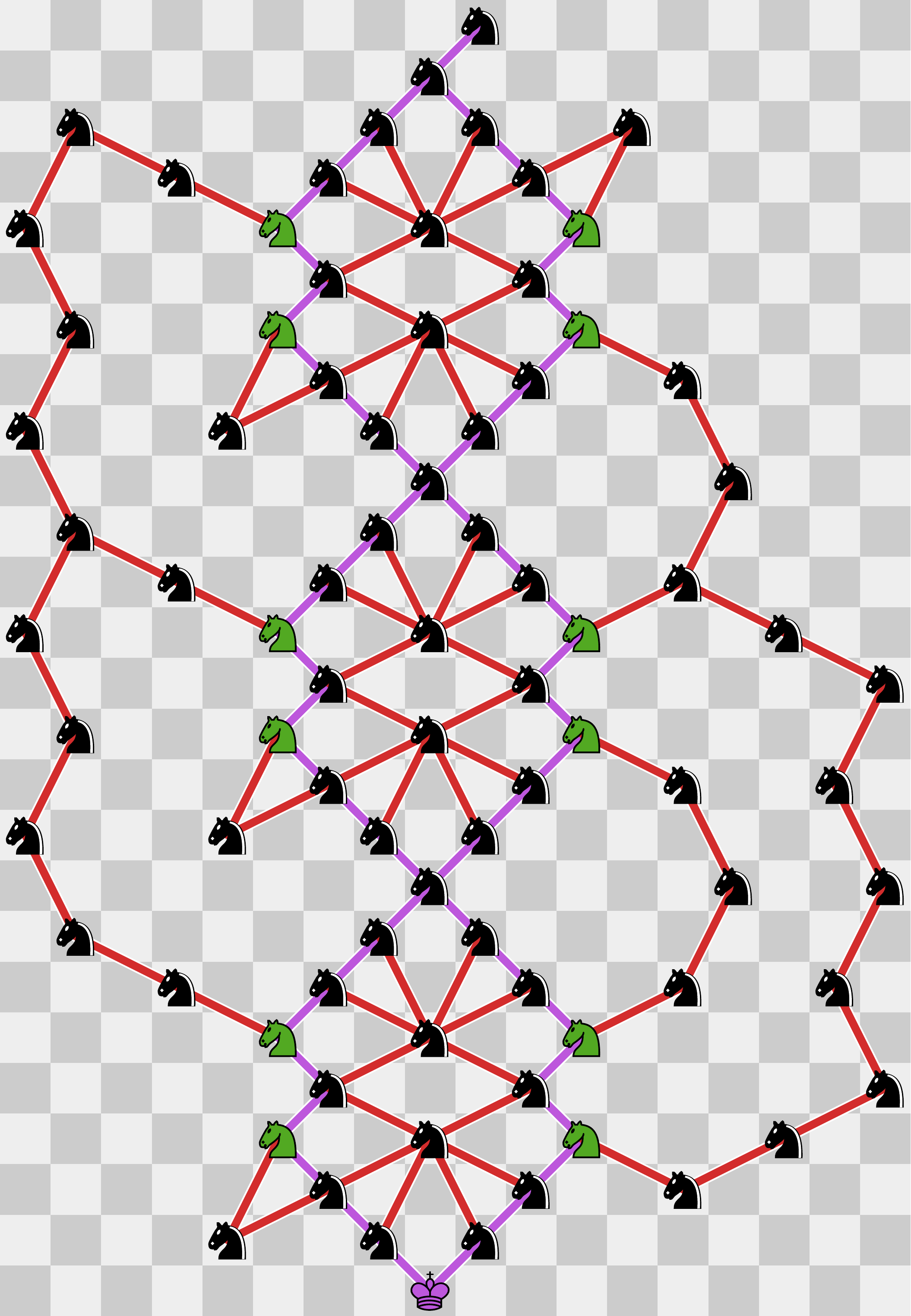}}}$
  \caption{
    Variable gadget (left) and example reduction output (right)
    for \solochessn{\{\oneking, \knight\}}.
    This instance corresponds to the formula \((x_1 \vee x_2 \vee x_3) \wedge (\neg x_1 \vee \neg x_2 \vee \neg x_3) \wedge (\neg x_2 \vee \neg x_3)\).
    At least one green literal knight must be visited in each clause.
  }
  \label{fig:1KingKnight}
\end{figure}

The other reductions in this section are similar; we just have to design a suitable variable gadget in each case.

\begin{theorem}
  \label{thm:king-bishop}
  \solochessn{\{\oneking,\bishop\}} is NP-hard.
\end{theorem}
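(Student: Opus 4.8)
The plan is to follow the same template as Theorem~\ref{thm:king-knight}, reducing from Planar Monotone 3SAT with at most two occurrences of each literal (Lemma~\ref{lem:sided-3sat-(1,2)}), but replacing the knight-based gadgets with bishop-based gadgets. The key conceptual ingredients carry over directly: place many bishops so that almost all of them have no legal capture and are therefore immobile, except for a single uncapturable king whose path through the board encodes both a choice of truth assignment and a verification that every clause is satisfied. As before, the king must visit and capture every immobile bishop, so the existence of a solving capture sequence will correspond exactly to the existence of a satisfying assignment. First I would lay out the overall geometry: a sequence of variable gadgets stacked so the king traverses them in order, with \emph{literal bishops} marking the positive and negative literals, and clause structures connecting literal bishops across gadgets.

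The main design work is the variable gadget, and this is where the bishop case differs substantially from the knight case. For knights, the gadget exploited that a single knight can hop between two parallel literal paths; for bishops I must instead build two diagonally-traversable paths (one for the positive literal $x_i$, one for the negative literal $\neg x_i$) such that the king, entering the gadget from below, is forced to commit to exactly one of the two paths and cannot traverse both. The crucial subtlety the proof of Theorem~\ref{thm:king-knight} flags is that a king, being more mobile than a pawn, could try to ascend one literal path and descend the other; the gadget must be arranged so that doing so strands the king and prevents it from reaching a mandatory final bishop at the top. I would engineer this with a bishop placed so that once the king has entered both literal paths it can no longer diagonally reach the exit, just as in the knight construction. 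The bishops must be placed on squares of a single diagonal color so that the king's diagonal captures align with available bishops, while ensuring that the two literal paths lie on the appropriate diagonals relative to the king's entry and exit points.

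The clause gadgets would connect literal bishops belonging to the same clause via noncrossing diagonal paths of bishops, using the planarity and sidedness of the 3SAT instance (positive clauses on one side, negative clauses on the other) to route these diagonal connections without unwanted intersections. Literal bishops \emph{not} used in any clause, and all the connecting-path bishops, must be reachable regardless of which literal path the king chooses, so that the solvability constraint comes only from requiring at least one literal bishop per clause to be visited. I would argue that if the formula is satisfiable, the king traverses the chosen literal path in each variable gadget, detouring to sweep up every clause-connection bishop attached to the chosen literals and every unused bishop, and finishes at the top with only the king remaining; conversely, a solving sequence forces the king to visit at least one literal bishop per clause (since those bishops are immobile), inducing a consistent satisfying assignment.

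The hard part will be verifying the geometry of the bishop diagonals: unlike rook/pawn constructions where scaling trivially kills unwanted captures, bishops move on diagonals in both directions, so I must carefully check that no two bishops accidentally share a diagonal with empty squares between them (which would create an unintended capture that could let the king cheat, for example by shortcutting between the two literal paths or skipping a clause). Controlling these diagonal alignments while still routing the clause connections planarly is the central obstacle. I expect the bulk of the real work to be in the figure and the accompanying case analysis showing that the only captures available to the king are the intended ones, so that the ``enter both literal paths and get stuck'' argument is airtight and every clause bishop is forced to be visited.
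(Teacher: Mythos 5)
Your skeleton matches the paper's proof exactly: the paper reduces from Planar Monotone 3SAT with at most two occurrences of each literal, reuses the Theorem~\ref{thm:king-knight} template (stacked variable gadgets, two traversable literal paths, the king stranded if it tries to go up one path and down the other, clause literals joined by noncrossing connections, unused literals joined to both paths), and instantiates it with the bishop variable gadget of Figure~\ref{fig:1KingBishop}. But your capture mechanism is wrong, and the error is load-bearing. You stipulate that almost all bishops be immobile (no two bishops on a common clear diagonal), that the king ``must visit and capture every immobile bishop,'' and that it cleans up clause-connection bishops by ``detouring'' along them; you even name the elimination of all accidental bishop--bishop diagonal alignments as the central obstacle. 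That is the mechanism of the Hamiltonian Path reductions in Section~\ref{Hamiltonian Path Reductions}, not of the SAT reductions, and it is internally inconsistent with the clause semantics you want. If every bishop is immobile, every connecting-chain bishop must be captured by the king personally. A chain attached to an unchosen (but satisfied-elsewhere) literal is then a dead end: since every move must be a capture, a king that consumes such a chain cannot backtrack over the now-empty squares, and for a three-literal clause the connection is a branching tree that no single non-revisiting walk entering at one literal can cover. So either the king must visit \emph{all} literals of every clause (destroying the truth-assignment choice) or it strands itself or leaves bishops uncaptured; the intended ``at least one literal per clause'' disjunction never materializes. Your converse direction inherits the same flaw, since it argues from the (false, in the correct construction) premise that literal bishops are immobile.

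The paper's gadget works the opposite way: the connecting bishops are \emph{deliberately} placed on shared diagonals so that they can capture one another, and a clause's bishops collapse via bishop-takes-bishop moves onto one of the clause's literal squares, which must therefore lie on the king's chosen route; unused literal bishops and their connectors are chained ``to both sides'' of the gadget so they can funnel onto whichever variable path the king takes. The king never leaves its chosen literal paths --- villain-on-villain captures do all of the cleanup. This is also the mechanism of the template you cite: the knights in Theorem~\ref{thm:king-knight} capture along their connecting paths (the hero takes only one side, so pieces ``connected to both sides'' can only disappear by capturing each other), and even in Theorem~\ref{thm:king-pawn} the off-path pawns capture one another upward. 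What forces the king to visit a literal of each clause is precisely that a clause chain can only be annihilated by collapsing onto a square the king visits. Accordingly, the real geometric work in Figure~\ref{fig:1KingBishop} is controlling \emph{which} diagonals are shared, not forbidding all of them; your proposal, as written, optimizes for the wrong invariant and its correctness argument does not go through.
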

\begin{proof}
  We reduce from Planar Monotone 3SAT with at most two occurrences
  of each literal, similar to Theorem~\ref{thm:king-knight}.
  Instead, we use the variable gadget shown in Figure~\ref{fig:1KingBishop}.
  We add bishops to connect literal bishops in each clause,
  or to connect unused literal bishops to both paths
  so that they may be captured regardless of which is chosen.
\end{proof}

\begin{figure}
	\centering
	\includegraphics[scale=0.42]{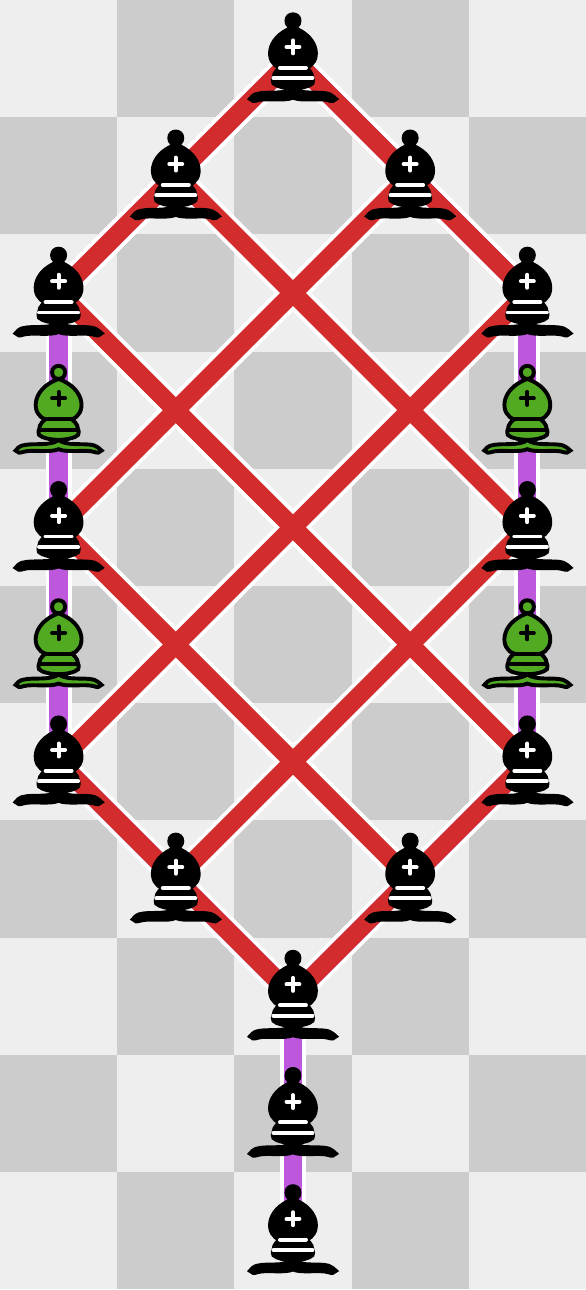}
	\caption{Variable gadget for \solochessn{\{\oneking, \bishop\}}.}
	\label{fig:1KingBishop}
\end{figure}

\begin{theorem}
  \label{thm:knight-queen}
  For any $T \in \{\bishop, \rook, \queen\}$,
  \solochessn{\{\oneknight, T\}} is NP-hard.
\end{theorem}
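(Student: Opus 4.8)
The plan is to reduce from Planar Monotone 3SAT with at most two occurrences of each literal (Lemma~\ref{lem:sided-3sat-(1,2)}), reusing the traversal framework of Theorem~\ref{thm:king-knight}. The uncapturable piece---now the \knight---will be forced to walk across the board from bottom to top, passing through one variable gadget per variable and committing at each to either the positive or the negative literal path. The literal pieces it collects, together with the clause-connecting pieces of type~$T$, encode a truth assignment, and \solochessn{\{\oneknight,T\}} will be solvable exactly when that assignment satisfies the formula.

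The one genuinely new ingredient relative to Theorem~\ref{thm:king-knight} is that the villains are now long-range---\bishop, \rook, or \queen---rather than \knight{}s, so they can in principle capture one another. I would neutralize this by placing every villain so that no two of them share a row, a column, or a diagonal, i.e., in general position with respect to \queen{} moves. Since this is the most restrictive of the three movement patterns, one such placement makes the villains mutually non-attacking for all of $T \in \{\bishop,\rook,\queen\}$ at once, so a single construction settles all three cases. With no villain able to move, the only mover is the \knight, which as a leaper is never blocked by the villains it has not yet captured; hence a solution is exactly a sequence of \knight{} captures visiting every villain location, consecutive locations differing by a knight move---a Hamiltonian path of the hero through the villain positions.

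With the villains immobilized, the remaining task is to lay out their positions as a knight graph realizing the SAT structure, analogous to the gadget of Figure~\ref{fig:1KingKnight}. Each variable gadget offers two knight-traversable literal paths (positive and negative), and each clause is a knight-traversable chain of villains joining its literal pieces, placed inside or outside the variable line according to its polarity, following the planar monotone embedding. A clause chain can be swept clean only if the hero reaches one of that clause's literal pieces, which happens precisely when the corresponding literal is set true by the chosen path; literal pieces not consumed along their own variable path, as well as pure connector pieces, are attached so that they are collected whenever some clause containing them is satisfied. As in Theorem~\ref{thm:king-knight}, the bound of two occurrences per literal guarantees enough literal pieces, and the solvable-iff-satisfiable correspondence then follows from the same case analysis on which literal path the hero commits to at each variable.

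The step I expect to be the main obstacle is the geometric realization: embedding all of the required knight-move connectivity while simultaneously guaranteeing that no two villains ever align on a shared row, column, or diagonal. Knight-adjacent positions are automatically safe, since a $(\pm 1, \pm 2)$ offset shares no rank, file, or diagonal, so the only danger is from non-adjacent villains in distant gadgets; I would control this by spreading the gadgets across well-separated, slightly perturbed coordinate ranges that give every villain a distinct row, a distinct column, and distinct diagonals of both slopes, then verifying that the needed knight edges survive the spacing. Where insisting on full general position is too rigid, the fallback is to tolerate a shared line only when an intervening villain sits between the two pieces, which still forbids the capture because \bishop{}s, \rook{}s, and \queen{}s cannot capture through an occupied square.
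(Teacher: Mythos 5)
Your proposal diverges from the paper's approach in a way that creates real gaps. The paper's reduction (like Theorem~\ref{thm:king-knight}, on which it is explicitly modeled) keeps the villains \emph{mobile}: the variable gadget consists of two knight-traversable paths of queens/rooks/bishops, and the clause pieces are connected by villain paths that \emph{collapse onto} a literal piece via villain-on-villain captures, so the hero only ever traverses the chosen literal paths and never needs to visit the clause chains at all. You instead try to immobilize all villains (general position with respect to queen moves) and force the hero into a Hamiltonian path through every villain square. This breaks the clause mechanism: since every knight move must be a capture, the hero cannot pass through an empty square, cannot revisit a square, and hence cannot sweep a clause chain joining up to three literals and return --- a degree-3 spider has no Hamiltonian path covering all three legs from an entry point --- and if the clause detour consumes literal squares lying on \emph{other} variables' paths, those paths are severed for the hero's later traversal. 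The paper's chain-collapse mechanism is precisely what avoids forcing a Hamiltonian path, and your construction has no substitute for it.

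The geometric step you flag as the main obstacle is also worse than you suggest. Knight-adjacent squares are indeed non-aligned, but squares \emph{two} knight moves apart frequently are: $(1,2)+(2,1)=(3,3)$ is a shared diagonal, $(1,2)+(-1,2)=(0,4)$ a shared column, $(2,1)+(2,-1)=(4,0)$ a shared row, so turns and zigzags in your gadget paths inevitably create aligned villain pairs, and it is not established that the required connectivity (variable paths, clause attachments, both-sides connections) can be realized in full general position. Your fallback --- tolerating an alignment when an intervening villain blocks it --- is unsound under your own framework: the blocker is itself a villain that the hero must eventually capture, and once it is gone the aligned pair becomes mutually attacking, so villain moves become available mid-solution and the claimed equivalence ``solution $=$ Hamiltonian path of the hero'' collapses, in both directions of the reduction. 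To repair the argument you would either need to prove the general-position embedding exists with an ordering argument controlling when blockers disappear, or (as the paper does) embrace villain mobility and design the gadgets so that the only \emph{useful} villain moves are the intended chain collapses.
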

\begin{proof}
  We reduce from Planar Monotone 3SAT with at most two occurrences
  of each literal, similar to Theorem~\ref{thm:king-knight}.
  Instead, we use the variable gadget shown in Figure~\ref{fig:1KnightQueen}
  (for the case $T = \queen$).
  Note that all non-literal pieces are connected to both paths
  even if the queens in the diagram are replaced by rooks or bishops.
  We add queens/rooks/bishops to connect literal queens/rooks/bishops
  in each clause, or to connect unused literal queens/rooks/bishops
  to both paths.
\end{proof}

\begin{figure}
	\centering
	\includegraphics[scale=0.42]{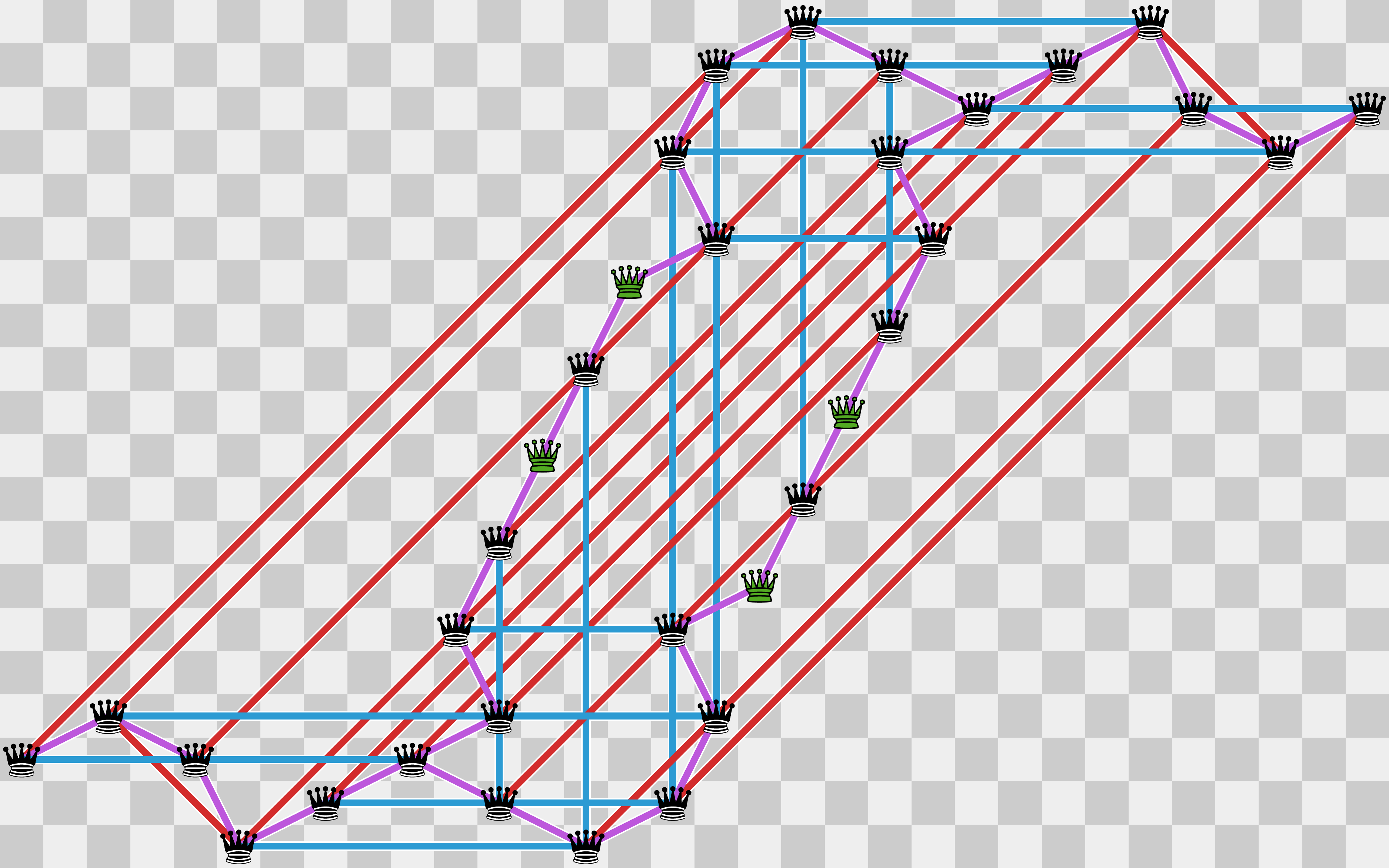}
	\caption{Variable gadget for \solochessn{\{\oneknight, \queen\}}.}
	\label{fig:1KnightQueen}
\end{figure}

\section{Open Problems}
\label{Open Problems}

Our results in Table~\ref{tab:results} leave open the complexity
of two cases: $\{\oneking,\rook\}$ and $\{\onepawn,\rook\}$.
We suspect that both of these problems can be solved in polynomial time,
essentially because a king or pawn cannot slip by a rook,
but it remains to generalize the algorithm in Section~\ref{sec:pairs-easy}.
Similarly, it is open whether the algorithm can be generalized to non-symmetric pieces.

From the prior paper \cite{aravind2022chess},
the complexities of \solochessn{\{\knight\},O(1)} and
\solochessn{\{\king\},O(1)} are still open.
It may help to show hardness for the more nonblocking piece types
on a graph, possibly constrained to have maximum degree $8$ or
to be a grid graph.

Finally, although Solo Chess puzzles are not designed to ensure a unique solution,
it is interesting to determine whether the problem is ASP-complete
and whether counting the number of solutions is \#P-complete.
Some, but not all, of our reductions are parsimonious.

\section*{Acknowledgments}

This work was initiated during extended problem solving sessions
with the participants of the MIT class on
Algorithmic Lower Bounds: Fun with Hardness Proofs (6.892)
taught by Erik Demaine in Spring 2019.
We thank the other participants for their insights and contributions.
In particular, we thank Dylan Hendrickson for helpful discussions
around algorithms for one piece type.

Most figures of this paper were drawn using SVG Tiler
[\url{https://github.com/edemaine/svgtiler}].
Chess piece images are based on Wikipedia's
\url{https://commons.wikimedia.org/wiki/Standard_chess_diagram},
drawn by Colin M.L. Burnett and licensed under a BSD License.

\bibliographystyle{alpha}
\bibliography{bibliography}

\begin{thebibliography}{BDHW20}

\bibitem[AMM22]{aravind2022chess}
N.~R. Aravind, Neeldhara Misra, and Harshil Mittal.
\newblock Chess is hard even for a single player.
\newblock In Pierre Fraigniaud and Yushi Uno, editors, {\em Proceedings of the
  11th International Conference on Fun with Algorithms}, volume 226 of {\em
  LIPIcs}, pages 5:1--5:20, 2022.

\bibitem[BDHW20]{brunner2020complexity}
Josh Brunner, Erik~D. Demaine, Dylan~H. Hendrickson, and Julian Wellman.
\newblock Complexity of retrograde and helpmate chess problems: Even
  cooperative chess is hard.
\newblock In Yixin Cao, Siu{-}Wing Cheng, and Minming Li, editors, {\em
  Proceedings of the 31st International Symposium on Algorithms and
  Computation}, volume 181 of {\em LIPIcs}, pages 17:1--17:14, 2020.

\bibitem[Che]{soloChess}
Chess.com.
\newblock Solo chess.
\newblock \url{https://www.chess.com/solo-chess}.

\bibitem[dBK10]{sided3sat}
Mark de~Berg and Amirali Khosravi.
\newblock Optimal binary space partitions in the plane.
\newblock In My~T. Thai and Sartaj Sahni, editors, {\em Computing and
  Combinatorics}, pages 216--225, Berlin, Heidelberg, 2010. Springer Berlin
  Heidelberg.

\bibitem[Edm67]{edmonds-arborescence}
Jack Edmonds.
\newblock Optimum branchings.
\newblock {\em Journal of Research of the National Bureau of Standards B},
  71:233--240, 1967.

\bibitem[Fil19]{ivan-thesis}
Ivan Tadeu Ferreira~Antunes Filho.
\newblock Characterizing boolean satisfiability variants.
\newblock M.eng.~thesis, Massachusetts Institute of Technology, 2019.

\bibitem[FL81]{fraenkel1981computing}
Aviezri~S. Fraenkel and David Lichtenstein.
\newblock Computing a perfect strategy for $n \times n$ chess requires time
  exponential in $n$.
\newblock {\em Journal of Combinatorial Theory, Series A}, 31:199--214, 1981.

\bibitem[GGST86]{fast-arborescence}
Harold~N. Gabow, Zvi Galil, Thomas~H. Spencer, and Robert~Endre Tarjan.
\newblock Efficient algorithms for finding minimum spanning trees in undirected
  and directed graphs.
\newblock {\em Combinatorica}, 6(2):109--122, 1986.

\bibitem[IPS82]{GridHamPath}
Alon Itai, Christos~H. Papadimitriou, and Jayme~Luiz Szwarcfiter.
\newblock Hamilton paths in grid graphs.
\newblock {\em SIAM Journal on Computing}, 11(4):676--686, 1982.

\bibitem[KR92]{rectilinearsat}
Donald~E. Knuth and Arvind Raghunathan.
\newblock The problem of compatible representatives.
\newblock {\em SIAM Journal on Discrete Mathematics}, 5(3):422--427, 1992.

\bibitem[KV06]{arborescences}
Bernhard Korte and Jens Vygen.
\newblock Spanning trees and arborescences.
\newblock In {\em Combinatorial Optimization: Theory and Algorithms}, pages
  119--141. Springer, 2006.

\bibitem[PV84]{Degree3GridHamPath}
Christos~H. Papadimitriou and Umesh~V. Vazirani.
\newblock On two geometric problems related to the travelling salesman problem.
\newblock {\em Journal of Algorithms}, 5(2):231--246, June 1984.

\bibitem[Sto83]{storer1983complexity}
James~A. Storer.
\newblock On the complexity of chess.
\newblock {\em Journal of Computer and System Sciences}, 27(1):77--100, 1983.

\bibitem[Tov84]{tovey3sat3}
Craig~A. Tovey.
\newblock A simplified {NP}-complete satisfiability problem.
\newblock {\em Discrete Applied Mathematics}, 8(1):85--89, 1984.

\end{thebibliography}

\appendix
\section{Hamiltonian Path in Maximum-Degree-3 Grid Graphs with Specified Start/End Vertices}
\label{app:start vertex}

Itai, Papadimitriou, and Szwarcfiter \cite{GridHamPath} prove NP-hardness of
deciding whether a grid graph has a Hamiltonian path \emph{with specified
start and end vertices}.%
\footnote{They also describe how to reduce this problem to
  deciding whether a graph has a Hamiltonian path
  (with no specified start/end vertices), but their reduction
  (attaching a degree-$1$ vertex to each of the specified start and
  end vertices) does not obviously preserve being a grid graph.}
Papadimitriou and Vazirani \cite{Degree3GridHamPath} prove NP-hardness of
deciding whether a \emph{maximum-degree-3} grid graph has a Hamiltonian path
(with no specified start/end vertices).
Neither result is exactly what we need in this paper:

\begin{lemma} \label{lem:start vertex}
  It is NP-hard to decide whether a maximum-degree-3 grid graph has a
  Hamiltonian path that
  \begin{enumerate}
  \item starts at a specified start vertex~$s$, which is degree $1$
    (but without a specified end vertex); or
  \item starts at a specified start vertex~$s$
    and ends at a specified end vertex~$t$,
    both of which are degree~$1$.
  \end{enumerate}
\end{lemma}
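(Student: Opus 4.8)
The plan is to reduce from the max-degree-3 grid-graph Hamiltonian path problem (with unspecified endpoints) shown NP-hard by Papadimitriou and Vazirani~\cite{Degree3GridHamPath}, modifying their construction so that the endpoints become forced degree-1 vertices. The single tool I need is the elementary observation that a degree-1 vertex (a \emph{pendant}) in a graph must be an endpoint of every Hamiltonian path, since the path can traverse its unique incident edge only once.

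First I would inspect the path version of the Papadimitriou--Vazirani construction and identify its two \emph{terminal} vertices $a$ and $b$ --- the vertices between which every Hamiltonian path of the output graph $G$ is forced to run. (Reductions of this kind produce a construction with a forced orientation, so such terminals exist; I would verify this directly from their gadgets.) Crucially, these terminals lie at the boundary of the construction and have degree at most $2$, so there is room to grow the grid outward near each of them.

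The construction for part~(2) then attaches a pendant vertex $s$ at a free lattice cell adjacent to $a$, and a pendant vertex $t$ adjacent to $b$. The slight modification needed is to ensure each pendant cell is adjacent \emph{only} to its terminal and to no other occupied cell; since $a$ and $b$ sit in sparse regions at the boundary, I can always extend a short straight stub of grid cells out from each terminal to place the pendant cleanly, keeping the graph a grid graph and raising the terminal's degree from at most $2$ to at most $3$. Because $s$ and $t$ are pendants they are forced endpoints, and deleting them recovers exactly a Hamiltonian path of $G$ between $a$ and $b$; conversely any Hamiltonian path of $G$ (which runs $a$-to-$b$) extends through the two pendants. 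Thus $G$ has a Hamiltonian path if and only if the augmented graph has one with specified degree-1 endpoints $s$ and $t$, giving part~(2). For part~(1) I attach only the pendant $s$ at $a$: the start is then forced to be $s$, while the other endpoint is unspecified but in fact still forced by $G$'s structure to be $b$, so the same correspondence applies with a free end.

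I expect the main obstacle to be the first step: confirming from the internals of~\cite{Degree3GridHamPath} that the endpoints are genuinely forced to two specific low-degree boundary terminals, and that a pendant stub can be attached there without creating spurious adjacencies or exceeding degree~$3$. If the published construction does not already force its endpoints, the fallback is to reduce instead from the specified-endpoint grid-graph problem of Itai, Papadimitriou, and Szwarcfiter~\cite{GridHamPath} and apply a local degree-reduction gadget at each degree-$4$ vertex that preserves both grid structure and Hamiltonicity before attaching the pendants --- at the cost of a more involved correctness argument for the gadget.
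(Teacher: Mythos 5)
Your overall strategy coincides with the paper's: take the Papadimitriou--Vazirani maximum-degree-$3$ grid-graph construction and attach degree-$1$ pendants where the endpoints of any Hamiltonian path are forced. But your key premise --- that their construction has two specific terminal \emph{vertices} $a$ and $b$, of degree at most $2$, at which every Hamiltonian path must start and end --- is false of the actual construction, and it is precisely at this point that the real work lies. What is true is weaker: the two degree-$1$ vertices $\textsf{vin}_1'$ and $\textsf{vout}_1$ of the intermediate graph $G_2$ are each mapped in the grid graph to a \emph{dumbbell} (two length-$8$ cycles joined by a path) attached to the rest of the graph by a single ``pin'' edge, so every Hamiltonian path is forced to start or end \emph{somewhere inside} each dumbbell --- a region containing many vertices, not a single vertex. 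Consequently, attaching a pendant at a chosen vertex does not trivially preserve the equivalence: in the forward direction you must show that whenever the graph is Hamiltonian, some Hamiltonian path can be rerouted within the dumbbell so as to end at your chosen attachment vertex. The paper does exactly this: it selects a particular degree-$2$ vertex $v$ adjacent to the far end of the dumbbell's connecting path, places the pendant $\ell$ at a neighboring grid point that has no other occupied neighbors (raising $v$ from degree $2$ to $3$), and then verifies via the forced local configuration (Figure~\ref{fig:dumbbell-deg1-ham}, invoking the Lemma of \cite{Degree3GridHamPath}) both that any Hamiltonian path must now start or end at $\ell$ and that a Hamiltonian path still exists when the unmodified graph has one. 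That verification is the substantive missing content; you correctly flagged it as the ``main obstacle,'' but your argument as written (``deleting the pendants recovers exactly a Hamiltonian path of $G$ between $a$ and $b$'') does not go through because no such single-vertex terminals $a,b$ exist.

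Two smaller points. First, your ``short straight stub of grid cells'' is both unnecessary and hazardous: grid graphs are induced subgraphs of the lattice, so every cell you add that is lattice-adjacent to an occupied cell creates an edge, and this is exactly the pitfall the paper's footnote identifies in the naive pendant-attachment of \cite{GridHamPath}; the paper avoids stubs entirely by choosing $v$ so that an adjacent lattice point is already free of occupied neighbors. Second, your fallback --- reducing from Itai--Papadimitriou--Szwarcfiter and locally degree-reducing the grid graph --- would require a grid-preserving degree-reduction gadget, which is essentially redoing the Papadimitriou--Vazirani construction from scratch and is considerably harder than patching it as above.
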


\begin{proof}
  We modify the proof of Papadimitriou and Vazirani \cite{Degree3GridHamPath}.
  Their proof reduces from Hamiltonian Circuit in
  a planar \emph{directed} graph $G_1$ where each vertex has
  either in-degree $2$ and out-degree $1$ or in-degree $1$ and out-degree~$2$.

  Their first modification to $G_1$ \cite[Figure~2]{Degree3GridHamPath}
  forms an \emph{undirected} graph $G_2$ such that
  $G_1$ has a Hamiltonian cycle if and only if $G_2$ has a Hamiltonian path.
  Part of this modification \cite[Figure~2(b)]{Degree3GridHamPath}
  replaces one vertex $v_1$ of $G_1$ with a gadget of four vertices
  that includes two degree-$1$ vertices $\textsf{vin}_1'$ and
  $\textsf{vout}_1$.
  Clearly if $G_2$ has a Hamiltonian path, then it must start and end
  at $\textsf{vin}_1'$ and $\textsf{vout}_1$.

  Next their proof forms a maximum-degree-$3$ grid graph $G'_4$ such that
  $G'_4$ has a Hamiltonian path if and only if $G_2$ has a Hamiltonian path.
  $G'_4$ is essentially a grid drawing of $G_2$
  (which turns out to be bipartite), expanded by a constant factor,
  and replacing each vertex and edge by a thickened gadget.
  The degree-1 vertices of $G_2$, $\textsf{vin}_1'$ and $\textsf{vout}_1$,
  are each mapped in $G_4$ to a ``dumbbell''
  (two length-$8$ cycles connected via a length-$6$ path)
  attached to a single ``tentacle''
  (a $2 \times n$ rectangle with length-$8$ cycles at turns)
  via a ``pin connection'' (single adjacency);
  see Figure~\ref{fig:dumbbell-pin}.
  Because the dumbbell is connected to the rest of the graph via only
  a single edge (the pin connection),
  any Hamiltonian path must start or end within each such dumbbell.
  In particular, we can choose a particular start or end vertex within
  the dumbbell to be either vertex adjacent to the far end
  (relative to the pin connection) of the path between the two cycles;
  we label such a vertex $v$ in Figure~\ref{fig:dumbbell-pin}.
  By \cite[Lemma]{Degree3GridHamPath} or Figure~\ref{fig:dumbbell-deg1-ham},
  if we declare this vertex $v$ to be the specified start or end vertex,
  then we preserve the existence of a Hamiltonian path in~$G_4$.
  This vertex $v$ has degree~$2$, and has a neighboring grid point
  (below $v$ in Figure~\ref{fig:dumbbell-pin})
  with no neighboring vertices, so we can add a degree-$1$ vertex $\ell$
  at that point as shown in Figure~\ref{fig:dumbbell-deg1},
  increasing $v$'s degree from $2$ to~$3$ (preserving maximum-degree-$3$),
  and then the degree-$1$ vertex $\ell$ must be the start or end of any
  Hamiltonian path.
  Figure~\ref{fig:dumbbell-deg1-ham} shows the local configuration any
  Hamiltonian path must have (in particular verifying the preservation of
  the existence of a Hamiltonian path).
  We can declare the vertex from $\textsf{vin}_1'$ to be the start vertex,
  and optionally declare the vertex from $\textsf{vout}_1$ to be the end vertex,
  to prove the two variations NP-hard.
\end{proof}

\begin{figure}
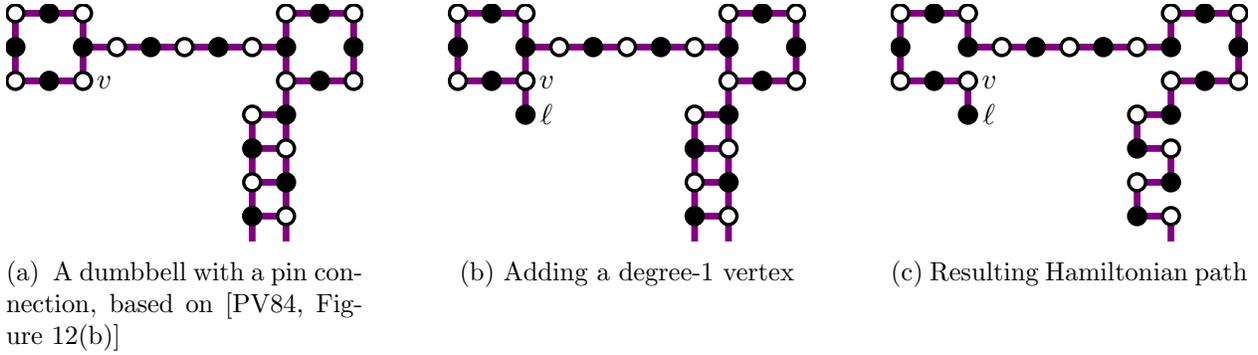

  \centering
  \subcaptionbox{\label{fig:dumbbell-pin}
    A dumbbell with a pin connection,
    based on \cite[Figure~12(b)]{Degree3GridHamPath}}
    {\begin{overpic}[scale=0.85]{grid-graph/dumbbell-pin}
      \put(27.5,45){\makebox(0,0)[c]{$v$}}
    \end{overpic}}\hfill
  \subcaptionbox{\label{fig:dumbbell-deg1}Adding a degree-$1$ vertex}
    {\begin{overpic}[scale=0.85]{grid-graph/dumbbell-deg1}
      \put(27.5,45){\makebox(0,0)[c]{$v$}}
      \put(27.5,35.8){\makebox(0,0)[c]{$\ell$}}
    \end{overpic}}\hfill
  \subcaptionbox{\label{fig:dumbbell-deg1-ham}Resulting Hamiltonian path}
    {\begin{overpic}[scale=0.85]{grid-graph/dumbbell-deg1-ham}
      \put(27.5,45){\makebox(0,0)[c]{$v$}}
      \put(27.5,35.8){\makebox(0,0)[c]{$\ell$}}
    \end{overpic}}
  \caption{The hardness reduction to Hamiltonian path in
    maximum-degree-$3$ grid graphs from \cite{Degree3GridHamPath}
    has two copies of the gadget in (a).
    This graph is Hamiltonian if and only if the modification in (b) is,
    which forces the Hamiltonian path to look like (c),
    in particular starting or ending at~$\ell$.}
  \label{fig:max-deg-3 fix}
\end{figure}

\section{Sided Var-Linked Planar Monotone 3SAT with Restricted Variable Occurrences}
\label{app:sided-3sat-(1,2)}

\begin{lemma} \label{lem:sided-3sat-(1,2)}
  Sided Var-Linked Planar Monotone 3SAT is NP-hard,
  even when each literal occurs at most twice and each variable occurs at most three times.
\end{lemma}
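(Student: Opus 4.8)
The plan is to show that \emph{Sided Var-Linked Planar Monotone 3SAT} remains NP-hard even under the combined restriction that each literal occurs at most twice and each variable occurs at most three times, by reducing from the unrestricted version (which is NP-hard by De~Berg and Khosravi \cite{sided3sat}). The standard tool for bounding variable occurrences is a \emph{variable-splitting} (or ``equality propagation'') gadget: replace a heavily-used variable $x$ with several fresh copies $x^{(1)}, x^{(2)}, \dots, x^{(m)}$, distribute the original occurrences of $x$ among the copies, and add equality clauses forcing all copies to share a truth value. The main obstacle is doing this while simultaneously preserving all three structural constraints---planarity, the monotone/sided embedding, and the Hamiltonian var-linked cycle---since a naive equality gadget would introduce mixed-polarity clauses (like $x^{(i)} \leftrightarrow x^{(i+1)}$, which expands to clauses containing both positive and negative literals), destroying monotonicity.

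First I would set up the occurrence-reduction gadget in a monotone-friendly way. A clean approach is to chain the copies cyclically: arrange $x^{(1)}, \dots, x^{(m)}$ consecutively along the variable line and enforce the implication cycle $x^{(1)} \Rightarrow x^{(2)} \Rightarrow \cdots \Rightarrow x^{(m)} \Rightarrow x^{(1)}$, which forces all copies equal. Each implication $x^{(i)} \Rightarrow x^{(i+1)}$ is the clause $(\neg x^{(i)} \vee x^{(i+1)})$---unavoidably mixed-polarity, so the trick is to absorb these equality constraints into the sided structure rather than treating them as ordinary clauses. I would place the forward implications $(\neg x^{(i)} \vee x^{(i+1)})$ as 2-clauses sitting \emph{between} consecutive copies on the variable line itself, so that they are var-linked edges rather than sided clauses, thereby not violating the ``every clause is all-positive or all-negative'' requirement that applies only to the sided clause regions. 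Alternatively, if the sided framework insists every clause be monotone, I would use the split so that each copy inherits only positive or only negative original occurrences and route the equality enforcement through the Hamiltonian-cycle structure, which is where I expect the delicate bookkeeping to live.

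The key steps, in order, are: (1) take an instance of Sided Var-Linked Planar Monotone 3SAT and identify each variable $x$ whose positive literal occurs $p > 2$ times or negative literal occurs $q > 2$ times; (2) introduce enough copies $x^{(1)}, \dots, x^{(m)}$ so that each copy carries at most two positive and at most two negative occurrences (hence at most three total, if one arranges positive copies and negative copies appropriately, or more carefully, at most two of each polarity and then balances to get at most three total occurrences per variable); (3) distribute the original occurrences of $x$ and $\neg x$ among the copies consistently with their sided placement---positive occurrences go inside the cycle, negative outside; (4) insert the equality-enforcing implication chain along the variable line, verifying that it can be drawn planarly as var-linked connections without crossing the existing clause edges; and (5) argue correctness, namely that the new formula is satisfiable iff the original is, since forcing all copies equal makes the split transparent.

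The hard part will be step (4): showing that the copies and their equality constraints embed planarly while respecting the sided var-linked structure, and simultaneously verifying the occurrence bounds are actually met---specifically that ``at most two per literal'' and ``at most three per variable'' hold \emph{including} the equality-enforcement clauses, not just the original clauses. Each new copy participates in at most two equality constraints (one to each neighbor in the chain) plus its share of original occurrences, so the occurrence accounting must be tuned carefully: I would give each copy at most one original positive and one original negative occurrence plus its chain links, or split more finely, to drive every literal count down to two and every variable count down to three. I expect this counting---choosing $m$ and the distribution so that the chain links do not themselves push any literal past two occurrences---to be the crux of the argument, with planarity following routinely from the local, linear nature of the chain gadget once the embedding of copies along the variable line is fixed.
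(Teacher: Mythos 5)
Your proposal correctly identifies the central obstacle---the implication clauses $(\neg x^{(i)} \vee x^{(i+1)})$ needed to link copies are mixed-polarity and so violate monotonicity---but both of your proposed ways around it fail. Declaring these 2-clauses to be ``var-linked edges rather than sided clauses'' is not available: in Sided Var-Linked Planar Monotone 3SAT the monotonicity and sidedness requirements apply to \emph{every} clause of the formula, and the var-linked structure is merely a Hamiltonian cycle through the variables in the incidence graph; it has no constraint-enforcing role. For the same reason your fallback of ``routing the equality enforcement through the Hamiltonian-cycle structure'' is not a mechanism at all---nothing about the cycle forces truth values to propagate. As written, your construction does not output an instance of the restricted problem.

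The missing idea, which is the crux of the paper's proof, is twofold. First, a mixed implication $x_{i+1} \rightarrow x_i$ \emph{can} be expressed by two monotone 2-clauses using a fresh auxiliary variable $y_i$: the all-positive clause $(x_i \vee y_i)$ together with the all-negative clause $(\neg y_i \vee \neg x_{i+1})$. Second---contrary to your insistence that ``forcing all copies equal makes the split transparent''---equality is unnecessary, and the paper deliberately avoids your cyclic chain. It uses a one-way acyclic chain, so a satisfying assignment $w$ is only weakly decreasing along $x_1, \dots, x_{n+m}$; it then attaches all $n$ negative occurrences of $x$ to the upstream copies $x_1, \dots, x_n$ and all $m$ positive occurrences to the downstream copies $x_{n+1}, \dots, x_{n+m}$, and recovers $v(x) = w(x_n)$. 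Then $w(x_k) \ge v(x)$ for $k \le n$ and $w(x_{n+j}) \le v(x)$ for $j \le m$, so any original clause satisfied via a copy is satisfied by $v$. This acyclic, local design is also exactly what makes the sided planar embedding routine (positive chain clauses above the variable line, negative below, each spanning only adjacent copies, with no cycle-closing clause that would have to leap over the whole block) and what makes the occurrence counts work out: each $x_i$ appears in at most one positive chain clause, one negative chain clause, and one original clause (three occurrences, at most two per literal), and each $y_i$ appears exactly twice, once per sign. Without the auxiliary-variable monotonization and the relaxation from equality to one-way implication, the counting and embedding you flag as ``the crux'' never even gets started.
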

Knuth and Raghunathan \cite{rectilinearsat} observe that any instance of Var-Linked Planar 3SAT has a \emph{rectilinear} layout.  That is, the clauses and variables can be drawn as horizontal line segments in the plane with vertical line segments connecting incident variables to clauses, such that no line segments intersect each other otherwise and all variables lie on the same horizontal line.  Thus this result also extends to the version of the problem where such a rectilinear layout is provided.

\begin{proof}
  The reduction is from Sided Var-Linked Planar Monotone 3SAT.
  Refer to Figure~\ref{fig:sided-3sat-(1,2)}.
  We show that each variable can be replaced by a set of new variables and clauses to form an equisatisfiable instance of Sided Var-Linked Planar Monotone 3SAT, such that the new variables each have at most three occurrences, at most two of which have the same sign.
  In the following we assume each variable has at least one occurrence of each sign; any variable which doesn't can be deleted without affecting satisfiability.

  Let \(x\) be a variable with \(n\) negative occurrences in clauses \(N_1, \dots, N_n\) and  \(m\) positive occurrences in clauses \(P_1, \dots, P_m\).
  We replace \(x\) by two sequences of variables \(x_1, \dots, x_{n+m}\) and \(y_1, \dots, y_{n+m-1}\).
  For each \(1 \le i \le m + n - 1\) we add a new positive clause \(x_i \vee y_i\) and a new negative clause \(\neg y_i \vee \neg x_{i+1}\).
  Finally for each \(1 \le k \le n\) we replace each occurrence of \(\neg x\) in clause \(N_k\) with an occurrence of \(\neg x_k\), and for each \(1 \le j \le m\) we replace each occurrence of \(x\) in clause \(P_j\) with an occurrence of \(x_{n+j}\).
  This completes the construction.  Each new variable \(x_i\) or \(y_i\) occurs at most three times and at most twice with the same sign.  It can be seen from Figure~\ref{fig:sided-3sat-(1,2)} that this construction preserves the sided planarity property.  We must show that the resulting instance is equisatisfiable with the original.

  In one direction, let \(v\) be a satisfying assignment (i.e. a mapping from variables to \(\{0, 1\}\)) for the original instance.
  Define a new assignment \(w\) by \(w(x_i) = v(x)\) and \(w(y_i) = \neg v(x)\) for each original variable \(x\).
  Then \(w\) is a satisfying assignment for the new instance.

  In the other direction, let \(w\) be a satisfying assignment for the new instance.
  Define an assignment \(v\) over the original variables \(x\) by \(v(x) = w(x_n)\) where \(n\) is the number of negative occurrences of \(x\).
  We claim that \(v\) is a satisfying assignment for the original instance.
  In order to do this it suffices to show for each variable \(x\) with \(n\) negative and \(m\) positive occurrences,
  that
  \begin{equation}
    \label{eqn:wvsat}
    \begin{aligned}
      w(x_k) &\ge v(x) \\
      w(x_{n+j}) &\le v(x)
    \end{aligned}
  \end{equation}
  for \(1 \le k \le n\) and \(1 \le j \le m\).
  For then any clause \(N_k\) or \(P_j\) which was satisfied by having \(w(x_k) = 0\) or \(w(x_{n+j}) = 1\)
  is also satisfied by the value of \(v(x)\).

  The clauses \(x_i \vee y_i\) and \(\neg y_i \vee \neg x_{i+1}\) together require \(x_{i+1} \to x_i\) for \(1 \le i \le m+n-1\).  Thus \(w\) is weakly monotonically decreasing on \(x_1, \dots, x_{n+m}\).  Since \(v(x) = w(x_n)\) this immediately yields (\ref{eqn:wvsat}).
  Thus the new instance of Sided Var-Linked Planar Monotone 3SAT is equisatisfiable with the original.
\end{proof}

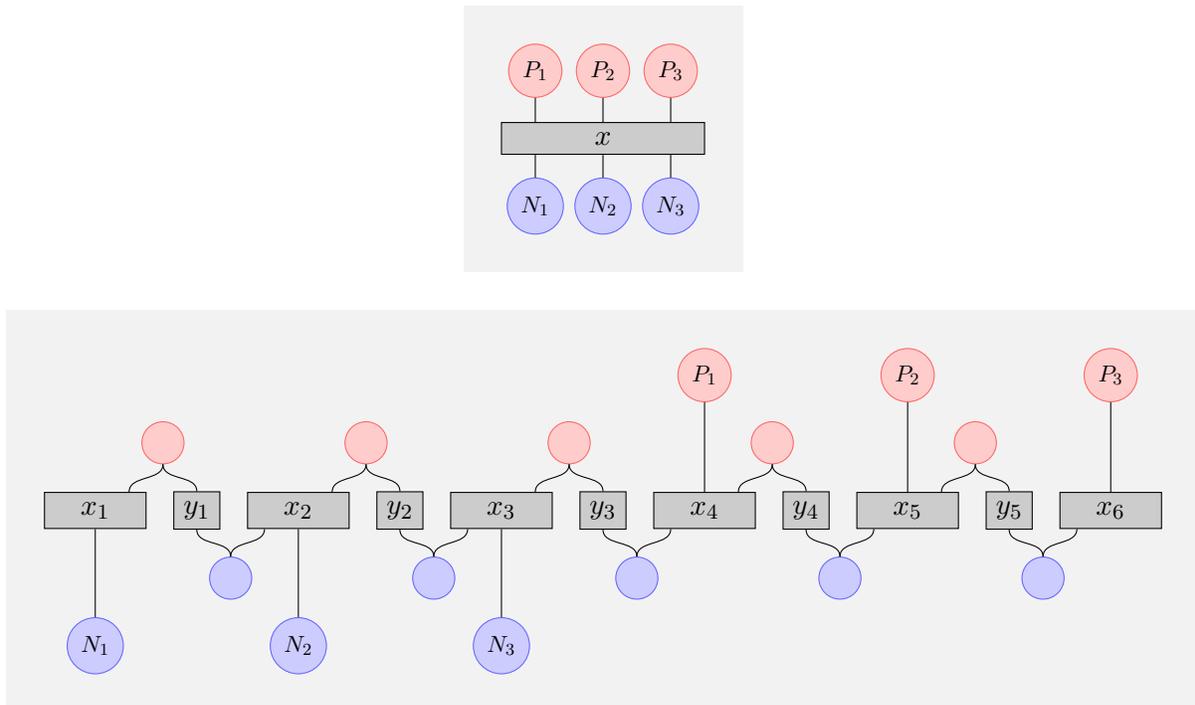
\begin{figure}
  \centering
  \begin{tikzpicture}
    [
      x=0.9cm,
      y=0.9cm,
      var/.style={rectangle,draw=black,fill=black!20},
      posc/.style={circle,minimum size=0.7cm,scale=0.8,draw=red!60,fill=red!20},
      negc/.style={posc,draw=blue!60,fill=blue!20},
    ]\SetScales
    \node at (0, 0) [var, minimum width=3\scaledx] (x) {\(x\)};
    \node at (-1, 1)  [posc] (sp1) {\(P_1\)};
    \node at (0, 1)   [posc] (sp2) {\(P_2\)};
    \node at (1, 1)   [posc] (sp3) {\(P_3\)};
    \node at (-1, -1) [negc] (sn1) {\(N_1\)};
    \node at (0, -1)  [negc] (sn2) {\(N_2\)};
    \node at (1, -1)  [negc] (sn3) {\(N_3\)};
    \draw (sp1.south) -- (sp1 |- x.north);
    \draw (sp2.south) -- (sp2 |- x.north);
    \draw (sp3.south) -- (sp3 |- x.north);
    \draw (sn1.north) -- (sn1 |- x.south);
    \draw (sn2.north) -- (sn2 |- x.south);
    \draw (sn3.north) -- (sn3 |- x.south);

    \begin{scope}[shift={(0,-5.5)}]
      \node at (-6, 0) [var] (y1) {\(y_1\)};
      \node at (-3, 0) [var] (y2) {\(y_2\)};
      \node at (0, 0)  [var] (y3) {\(y_3\)};
      \node at (3, 0)  [var] (y4) {\(y_4\)};
      \node at (6, 0)  [var] (y5) {\(y_5\)};

      \node at (-7.5, 0) [var, minimum width=1.5\scaledx] (x1) {\(x_1\)};
      \node at (-4.5, 0) [var, minimum width=1.5\scaledx] (x2) {\(x_2\)};
      \node at (-1.5, 0) [var, minimum width=1.5\scaledx] (x3) {\(x_3\)};
      \node at (1.5, 0)  [var, minimum width=1.5\scaledx] (x4) {\(x_4\)};
      \node at (4.5, 0)  [var, minimum width=1.5\scaledx] (x5) {\(x_5\)};
      \node at (7.5, 0)  [var, minimum width=1.5\scaledx] (x6) {\(x_6\)};

      \node at (-5.5, -1) [negc] (i1) {};
      \node at (-2.5, -1) [negc] (i2) {};
      \node at (0.5, -1) [negc] (i3) {};
      \node at (3.5, -1) [negc] (i4) {};
      \node at (6.5, -1) [negc] (i5) {};

      \draw (i1.north) to [out=90, in=270] (y1.south);
      \draw (i1.north) to [out=90, in=270] ([shift={(-0.5\scaledx,0)}]x2.south);
      \draw (i2.north) to [out=90, in=270] (y2.south);
      \draw (i2.north) to [out=90, in=270] ([shift={(-0.5\scaledx,0)}]x3.south);
      \draw (i3.north) to [out=90, in=270] (y3.south);
      \draw (i3.north) to [out=90, in=270] ([shift={(-0.5\scaledx,0)}]x4.south);
      \draw (i4.north) to [out=90, in=270] (y4.south);
      \draw (i4.north) to [out=90, in=270] ([shift={(-0.5\scaledx,0)}]x5.south);
      \draw (i5.north) to [out=90, in=270] (y5.south);
      \draw (i5.north) to [out=90, in=270] ([shift={(-0.5\scaledx,0)}]x6.south);

      \node at (-6.5, 1) [posc] (h1) {};
      \node at (-3.5, 1) [posc] (h2) {};
      \node at (-0.5, 1) [posc] (h3) {};
      \node at (2.5, 1) [posc] (h4) {};
      \node at (5.5, 1) [posc] (h5) {};

      \draw (h1.south) to [out=270, in=90] (y1.north);
      \draw (h1.south) to [out=270, in=90] ([shift={(0.5\scaledx,0)}]x1.north);
      \draw (h2.south) to [out=270, in=90] (y2.north);
      \draw (h2.south) to [out=270, in=90] ([shift={(0.5\scaledx,0)}]x2.north);
      \draw (h3.south) to [out=270, in=90] (y3.north);
      \draw (h3.south) to [out=270, in=90] ([shift={(0.5\scaledx,0)}]x3.north);
      \draw (h4.south) to [out=270, in=90] (y4.north);
      \draw (h4.south) to [out=270, in=90] ([shift={(0.5\scaledx,0)}]x4.north);
      \draw (h5.south) to [out=270, in=90] (y5.north);
      \draw (h5.south) to [out=270, in=90] ([shift={(0.5\scaledx,0)}]x5.north);

      \node at (-7.5, -2) [negc] (n1) {\(N_1\)};
      \node at (-4.5, -2) [negc] (n2) {\(N_2\)};
      \node at (-1.5, -2) [negc] (n3) {\(N_3\)};
      \node at (1.5, 2)   [posc] (p1) {\(P_1\)};
      \node at (4.5, 2)   [posc] (p2) {\(P_2\)};
      \node at (7.5, 2)   [posc] (p3) {\(P_3\)};

      \draw (n1.north) -- (n1 |- x1.south);
      \draw (n2.north) -- (n2 |- x2.south);
      \draw (n3.north) -- (n3 |- x3.south);
      \draw (p1.south) -- (p1 |- x4.north);
      \draw (p2.south) -- (p2 |- x5.north);
      \draw (p3.south) -- (p3 |- x6.north);
    \end{scope}

    \begin{scope}[on background layer]
      \definecolor{bgcolor}{RGB}{221,170,51}
      \node (r1) [rectangle,fill=black!5,fit=(sp1)(sp3)(sn1)(sn3)(x),inner sep=0.5cm] {};
      \node (r2) [rectangle,fill=black!5,fit=(p3)(n1)(x1)(x6),inner sep=0.5cm] {};
    \end{scope}
  \end{tikzpicture}
  \caption{Transforming a variable in Sided Var-Linked Monotone Planar 3SAT to reduce the number of occurrences.  Clauses above the line of variables (red) are positive; clauses below the line of variables (blue) are negative.  Above: A variable with three positive occurrences and three negative occurrences.  Below: An equivalent collection of clauses and variables.  Each variable occurs at most three times and has at most two occurrences with the same sign.}
  \label{fig:sided-3sat-(1,2)}
\end{figure}

\end{document}